\providecommand{\tabularnewline}{\\}
\numberwithin{equation}{section}
\numberwithin{figure}{section}
\DeclareMathOperator*{\argmin}{arg\,min}
\DeclareMathOperator*{\argmax}{arg\,max}
\newtheorem{thm}{Theorem}
\newtheorem{lem}{Lemma}
\newtheorem*{asm1}{Assumption 1}
\newtheorem*{asm2}{Assumption 2}
\newtheorem*{asm3}{Assumption 3}
\newtheorem*{asm4}{Assumption 4}
\newtheorem*{asm5}{Assumption 5}
\theoremstyle{definition}
\begin{document}
\title{Temporal-Difference estimation of dynamic discrete choice models }
\author{Karun Adusumilli\textsuperscript{+} and Dita Eckardt\textsuperscript{*}}
\begin{abstract}
We study the use of Temporal-Difference learning for estimating the
structural parameters in dynamic discrete choice models. Our algorithms
are based on the conditional choice probability approach but use functional
approximations to estimate various terms in the pseudo-likelihood
function. We suggest two approaches: The first -linear semi-gradient-
provides approximations to the recursive terms using basis functions.
The second -Approximate Value Iteration- builds a sequence of approximations
to the recursive terms by solving non-parametric estimation problems.
Our approaches are fast and naturally allow for continuous and/or
high-dimensional state spaces. Furthermore, they do not require specification
of transition densities. In dynamic games, they avoid integrating
over other players\textquoteright{} actions, further heightening the
computational advantage. Our proposals can be paired with popular
existing methods such as pseudo-maximum-likelihood, and we propose
locally robust corrections for the latter to achieve parametric rates
of convergence. Monte Carlo simulations confirm the properties of
our algorithms in practice.
\end{abstract}

\keywords{Dynamic discrete choice models, dynamic discrete games, Temporal-Difference
learning, Reinforcement learning.}
\thanks{\textit{This version}: \today{}. \\
\textsuperscript{+}Department of Economics, University of Pennsylvania;
akarun@sas.upenn.edu.\\
\textsuperscript{*}Department of Economics, University of Warwick;
dita.eckardt@warwick.ac.uk.\\
 We would like to thank Xiaohong Chen, Frank Diebold, Aviv Nevo, Whitney
Newey and Frank Schorfheide for helpful discussions.}
\maketitle

\section{Introduction \label{sec:Introduction}}

Recent years have seen a number of important developments in the field
of Reinforcement Learning (RL) for computation of value functions.
The goal of this paper is to study the use of a popular RL technique,
Temporal-Difference (TD) learning, for estimation and inference in
Dynamic Discrete Choice (DDC) models.

DDC models are frequently used to describe the inter-temporal choices
of \linebreak forward-looking individuals in a variety of contexts.
In these models, agents maximize their expected future payoff through
repeated choice amongst a set of discrete alternatives. Based on a
revealed preference argument, structural estimation proceeds by using
microdata on choices and outcomes to recover the underlying model
parameters.\footnote{See \citet{AguirregabiriaMira2010} for a detailed survey of the literature
on the estimation of DDC models.} A key challenge in this literature is the complexity of estimation.
Uncovering the structural parameters originally required an explicit
solution to a dynamic programming problem in addition to the optimization
of an estimation criterion. A key advance has been \citeauthor{HotzMiller1993}'s
\citeyearpar{HotzMiller1993} Conditional Choice Probability (CCP)
algorithm which avoids the repeated solution of the inter-temporal
optimization problem by taking advantage of a mapping between value
function differences and conditional choice probabilities. 

Unfortunately, the standard CCP algorithm is computationally infeasible
when the underlying states are continuous and/or the state space is
high-dimensional. Such state spaces are common in applications. One
frequently employed approach to tackle continuous state spaces is
through state space discretization, e.g., \citet{kalouptsidi2014time}
and \citet{almagro2019location} use aggregation and clustering methods
to do this. However, it is not always clear how to perform such a
discretization in practice, and moreover, it introduces bias into
the parameter estimates. An alternative is to employ functional approximations
for the value functions. For instance, \citet{barwick2015costs} and
\citet{kalouptsidi2018detection} use estimated transition densities
and numerical/analytical integration to approximate the value functions
using linear sieves and LASSO, respectively. However, the theoretical
properties of these methods when using machine learning methods (such
as LASSO) are as yet unknown, and they still require estimation of
transition densities, which is not straightforward, along with numerical
integration, which can be computationally expensive.\footnote{Yet another alternative is to use forward Monte Carlo simulations
(\citealp{BBL2007}, \citealp{HMSS1994}), but this again becomes
very involved as the number of continuous state variables or players
increases. The use of a finite number of Monte Carlo simulations also
adds bias to the estimates.} 

The aim of this paper is to develop tractable algorithms for CCP estimation
when the state variables are continuous and/or the state space is
large. Such algorithms should possess three properties: First, they
should be fast to compute even under high-dimensional state spaces.
Second, they should avoid state space discretization, and instead
rely on functional approximation of value functions. Third, they should
avoid estimation of transition densities which are difficult to parameterize
and estimate under continuous states.

In this paper, we suggest two methods, based on the idea of TD learning,
that satisfy all the above properties. The methods involve two different
techniques for estimating various recursive terms (which are akin
to value functions) that arise in CCP estimation. The first approach,
the linear semi-gradient method, provides functional approximations
to the recursive terms using basis functions. This approach simply
involves inverting a low-dimensional matrix, where the dimension is
the number of basis functions being used. Thus the computational cost
is trivial in most settings. Furthermore, it only requires the observed
sequences of current and future state-action pairs as input and estimation
of transition densities is not needed. The second approach, Approximate
Value Iteration (AVI), builds a sequence of approximations to the
value terms by solving a non-parametric estimation problem in each
step. Almost any machine learning (ML) method devised for prediction
can be used for approximation under this method, including (but not
limited to) LASSO, random forests and neural networks. To the best
of our knowledge, the AVI estimator is the first estimator for DDC
models that can be applied with any ML method. Hence, it naturally
allows for very high-dimensional state spaces. Again, no estimation
of transition densities is required. We derive the non-parametric
rates of convergence for estimation of the value terms under both
these methods. With the estimates of these functions at hand, estimation
of the structural parameters can proceed with standard methods such
as pseudo-maximum-likelihood estimation (PMLE, \citet{AguirregabiriaMira2002})
or minimum distance estimation. 

The focus of this paper is on the estimation of structural parameters.
To this end, our procedures allow one to avoid modeling state transitions.
Performing counterfactual analysis may still require estimating the
transition density, but we argue that our techniques remain useful,
even for this purpose, for two reasons: First, counterfactuals often
involve new transition densities which are different from the ones
that enter the estimation of the structural parameters, see \citet{kalouptsidi2018detection}
for an example. Our methods therefore avoid estimation of the original
transition densities that may not be needed for counterfactual analysis.
Second, with continuous states, decoupling the estimation of structural
parameters and transition densities may be beneficial for robustness
and efficiency reasons. For instance, it is common to employ AR (e.g.,
\citealp{AguirregabiriaMira2007}; \citealp{kalouptsidi2014time})
or VAR (\citealp{barwick2015costs}) specifications for transition
densities. However, even within these specifications there are a number
of choices to be made (e.g., dimension of VARs, distribution of error
terms etc) and the estimates of the structural parameters may not
be robust to these choices. More importantly, even when non-parametric
estimates of transition densities are available, simply plugging them
into the second-stage PMLE criterion would seriously degrade the rate
of convergence of structural parameters. One would need to adjust
the PMLE to account for the non-parametric first stage, but it is
not known what form this adjustment takes. By contrast, our proposals
employ non-parametric estimates of value functions, and as described
below, we derive the necessary adjustments to account for this. To
carry out counterfactual analysis, we therefore suggest plugging in
our estimates of the structural parameters, which do not rely on non-parametric
estimates of transition densities and are robust to mis-specification,
together with non-parametric estimates of the transition densities. 

The above discussion highlights that in continuous state spaces, estimation
of structural parameters is inherently a problem of semi-parametric
estimation. In fact, even under discrete states, \citet{AguirregabiriaMira2002}
show that estimation of transition densities affects the variance
of the structural parameter estimates. If the state variables are
continuous, existing two-step CCP methods, such as the PMLE will no
longer converge at parametric rates. We therefore derive a locally
robust estimator by adding a correction term to the PMLE criterion
function that accounts for the non-parametric estimation of value
function terms using either of our TD methods. We emphasize that this
construction is novel and does not directly follow from existing results,
e.g., in \citet{CEINR2018}. Its computation is particularly straightforward
under the linear semi-gradient approach. The resulting estimator converges
at parametric rates under continuous states and unrestricted transition
densities. 

Our TD estimators are thus consistent, computationally very cheap,
and they converge at parametric rates. Importantly, they provide a
feasible estimation method when the states are continuous and/or the
state space is large. This is particularly important for the estimation
of dynamic discrete games. Even with discrete states, existing methods
for the estimation of dynamic games (\citealp{BBL2007}; \citealp{AguirregabiriaMira2007};
\citealp{PesendorferSchmidt-Dengler2008}) require integrating out
other players' actions. With many players, or under continuous states,
this can get quite cumbersome. By contrast, our procedure works directly
with the joint empirical distribution of the states and their sample
successors. Thus the `integrating out' is done implicitly within the
sample expectations. 

Finally, we also incorporate permanent unobserved heterogeneity into
our methods by combining the TD estimation with an Expectation-Maximization
(EM) algorithm \citep{DLR1977}. 

A range of Monte Carlo studies confirm the workings of our algorithms.
First, we present simulations based on the dynamic firm entry problem
described in \citet{AgMag2018}. The model has seven structural parameters
of interest and five continuous state variables. Existing methods
usually struggle at this dimensionality; certainly, discretization
of the state space would not work too well. We show simulations using
both the linear semi-gradient method and the AVI method where we estimate
the value function terms using random forests, and derive results
both with and without locally robust correction. Our findings suggest
that while the linear semi-gradient has very little bias even without
the locally robust correction, the AVI method has slightly higher
small sample bias which is substantially reduced when using our locally
robust estimator. Overall, our estimators perform very well in this
setting and they outperform CCP estimators that employ discretization,
leading to a 4 to 11-fold reduction in average mean squared error
across the seven structural parameters of interest. 

Second, we test our algorithms for dynamic discrete games based on
a firm entry game similar to that outlined in \citet{AguirregabiriaMira2007}.
We use the linear semi-gradient method here and, as before, our estimates
are closely centered around the true parameters. Since this approach
requires the selection of a set of basis functions for the functional
approximations, we present results for different sets of basis functions
(a second, third and fourth order polynomial) in this model. Our findings
suggest that the choice of basis functions has only a small effect
on the performance of the estimator. Moreover, we show that a simple
cross-validation procedure may be used to select the preferred set
of functions. 

\subsection{Related literature}

The paper contributes to the literature on estimation of DDC models.
\citet{Rust1987} is the seminal work in this literature. Motivated
by computational considerations, \citeauthor{HotzMiller1993} \citeyearpar{HotzMiller1993}
propose the CCP algorithm. The CCP idea has subsequently been refined
by \citet{HMSS1994} who suggest a simulation-based method, and \citet{AguirregabiriaMira2002}
who develop a pseudo-likelihood estimator. \citet{ArcidiaconoMiller2011}
introduce and exploit the property of finite dependence to speed up
CCP estimation. Despite these advances, the estimation of DDC models
remains constrained by its computational complexity, particularly
in the large class of models where finite dependence does not hold.
Estimation of dynamic discrete games is particularly affected by these
issues as the strategic interaction of agents means that the state
space increases exponentially with the number of players. It is also
uncommon for finite dependence to hold in dynamic games. 

The standard CCP algorithm is a two-step method, and is known to suffer
from severe bias in finite samples. Aguirregabiria and Mira \citeyearpar{AguirregabiriaMira2002,AguirregabiriaMira2007}
address this issue by presenting a recursive CCP algorithm, the nested
pseudo-likelihood (NPL) algorithm. Under discrete states, the first-step
estimation does not affect the rate of convergence, but shows up in
form of higher-order bias for the structural parameters. The NPL algorithm
can ameliorate this, see \citet{KasaharaShimotsu2008}. With continuous
states, however, estimation of the transition density introduces bias
that is the dominant term in determining the rate of convergence.
This motivates the construction of our locally robust estimator which
gets rid of this bias and restores parametric rates of convergence. 

\citet{ackerberg2014asymptotic} and \citet{CEINR2018} consider semi-parametric
estimation using ML methods when either finite dependence or a ``terminal
action'' property holds (\citealp{HotzMiller1993}). \citet{CEINR2018}
also derive locally robust corrections for this setting. In both cases,
the PMLE criterion can be written as a function of choice probabilities
only (transition densities are not required); the authors employ non-parametric
estimates for choice probabilities and correct for this estimation
in the second stage. Computation and estimation is thus relatively
simpler under finite dependence. By contrast, our methods are applicable
to the more general and difficult setting where finite dependence
may not apply. Nevertheless, the computational speed of our linear
semi-gradient procedure is comparable to methods that exploit finite
dependence. For dynamic games, \citet{semenova2018machine} allows
for high-dimensional state spaces, but the parameters are only partially
identified. 

In making use of TD learning, our methods relate to the literature
on RL, particularly batch RL. Batch RL describes learning about how
to map states into actions so as to maximize an expected payoff, using
a fixed set of data (also called a batch); see \citet{lange2012batch}
for a survey.\footnote{See \citet{SuttonBarto2018} for a detailed treatment of RL in genreal.}
It is closely related to the idea of `experience replay', an important
ingredient of RL algorithms that achieve human level play in Atari
games (\citealp{mnih2015human}). A key step in RL, including batch
RL, is the estimation of value functions. TD learning methods, first
formulated by \citet{sutton1988learning}, are the most commonly used
set of algorithms for this purpose. We study non-parametric estimation
of value functions using two TD methods: semi-gradients and AVI. Our
analysis builds on the techniques developed by \citet{TsitsiklisRoy1997}
for linear semi-gradients, and \citet{munos2008finite} for AVI. While
\citet{TsitsiklisRoy1997} focus on online learning (i.e., where collection
of data and estimation of value functions is conducted simultaneously),
we translate their methods to batch learning. With regards to \citet{munos2008finite},
we differ in employing assumptions that are more common to econometrics
and our characterization of the rates is also different (compare Theorem
2 in their paper with our Theorem \ref{Theorem 3}).

TD methods are distinct from other value function approximation methods
developed in economics, e.g., parametric policy iteration (\citealp{benitez2000comparison}),
simulation and interpolation (\citealt{keane1994solution}), and sieve
value function iteration (\citealp{arcidiacono2013approximating}).
The last of these is similar in spirit to AVI with linear functional
approximations. However, our semi-gradient method provides a linear
approximation in a single step without any need for iterations, and
we analyze AVI under generic machine learning methods. Our approximation
results, and the technical arguments leading to them, are thus different
from \citet{arcidiacono2013approximating}; in fact, their setting
is different too as the authors focus on estimating the `optimal'
value function, while the recursive terms in our setting are more
akin to a value function under a fixed policy. 

The remainder of this paper is organized as follows. Section \ref{sec:Setup}
outlines the setup of the DDC model and fixes notation. Section \ref{sec:Temporal-difference-learning-alg}
describes our TD estimation method for the functional approximations
of the value functions. Section \ref{sec:Theoretical-Properties-of}
proves its theoretical properties and describes the second-step estimation
of the structural parameters under discrete and continuous state variables.
Section \ref{sec:Estimation-of-dynamic} discusses the estimation
of dynamic discrete games. Section \ref{sec:Simulations} provides
Monte Carlo simulations for our algorithm. Section \ref{sec:Conclusions}
concludes. The Online Appendix discusses extensions to permanent unobserved
heterogeneity.

\section{Setup \label{sec:Setup}}

We start with a single agent DDC model. In particular, we consider
an infinite horizon model in discrete time with $i=1,\dots,n$ agents.
We assume that the individuals are homogeneous, relegating extensions
for unobserved heterogeneity to Online Appendix \ref{sec:Incorporating-permanent-unobserv}.
In each period, an agent chooses among $A$ mutually exclusive actions,
each of which is denoted by $a$. The payoff from the action depends
on the current state $x$. Choosing action $a$ when the state is
$x$ gives the agent an instantaneous utility of $z(a,x)^{\intercal}\theta+e$,
where $z(a,x)$ is some known vector valued function of $a,x$ and
$e$ is an idiosyncratic error term. We denote the realization of
the state of an individual $i$ at time $t$ by $x_{it}$, and her
corresponding action and error terms by $a_{it}$ and $e_{it}$. We
assume that $e_{it}$ is an iid draw from some known distribution
$g_{e}(\cdot)$. Let $(a^{\prime},x^{\prime})$ denote the one-period
ahead random variables immediately following the actions and states
$(a,x)$, where $x^{\prime}\sim K(\cdot\vert a,x)$, with $K(\cdot\vert a,x)$
denoting the transition density given $a,x$ (more precisely, it is
the Markov kernel). We do not make any parametric assumptions about
$K(\cdot|a,x)$. The utility from future periods is discounted by
$\beta$. 

Agent $i$ chooses actions ${\bf a}_{i}=(a_{i1},a_{i2},\dots)$ to
sequentially maximize the discounted sum of payoffs 
\[
E\left[\sum_{t=1}^{\infty}\beta^{t}\left\{ z(x_{it},a_{it})^{\intercal}\theta^{*}+e_{it}\right\} \right].
\]
The econometrician observes a panel consisting of state-action pairs
for all individuals, $({\bf x}_{i},{\bf a}_{i})=\{(x_{i1},a_{i1}),\dots,(x_{iT},a_{iT})\}$,
for $T$ periods (note, however, that the agent maximizes an infinite
horizon objective, not a fixed $T$ one). Typically $T\ll n$ in applications,
so we work within an asymptotic regime where $n\to\infty$ but $T$
is fixed. Using this data, the econometrician aims to recover the
structural parameters $\theta^{*}$. 

In this paper, we study the CCP approach for estimating $\theta^{*}$
(\citealp{HotzMiller1993}). CCP methods are based on the conditional
choice probabilities of choosing action $a$ given state $x$. We
denote these by $P_{t}(a\vert x)$ for a given period $t$ but henceforth
drop the subscript $t$ with the idea that it can be made a part of
the state variable $x$, if needed (we should also add that some of
our theoretical results are based on assuming stationarity, i.e $P_{t}(a\vert x)$
is independent of $t$). Denote $e(a,x)$ as expected value of the
idiosyncratic error term $e$ given that action $a$ was chosen. \citet{HotzMiller1993}
show that if the distribution of $e$ follows a Generalized Extreme
Value (GEV) distribution, it is possible to express $e(a,x)$ as a
function of the choice probabilities $P(a\vert x)$, i.e $e(a,x)=\mathcal{G}(P(a\vert x))$.
We assume that $e$ follows a Type I Extreme Value distribution, which
is perhaps the most common choice in applications. In this case $e(a,x)=\gamma-\ln P(a\vert x)$,
where $\gamma$ is the Euler constant. 

The standard procedure in the CCP approach is as follows: Under the
given distributional assumptions, the parameters are obtained as the
maximizers of the pseudo-likelihood function
\[
Q(\theta)=\sum_{i=1}^{n}\sum_{t=1}^{T-1}\log\frac{\exp\left\{ h(a_{it},x_{it})^{\intercal}\theta+g(a_{it},x_{it})\right\} }{\sum_{a}\exp\left\{ h(a,x_{it})^{\intercal}\theta+g(a,x_{it})\right\} },
\]
where for any $(a,x)$, $h(.)$ and $g(.)$ solve the following recursive
expressions:
\begin{align}
h(a,x) & =z(a,x)+\beta\mathbb{E}\left[h(a^{\prime},x^{\prime})\vert a,x\right],\label{eq:Recursion}\\
g(a,x) & =\beta\mathbb{E}\left[e(a^{\prime},x^{\prime})+g(a^{\prime},x^{\prime})\vert a,x\right].\nonumber 
\end{align}
Here, $\mathbb{E}[.]$ denotes the expectation over the distribution
of $(a^{\prime},x^{\prime})$ conditional on $(a,x)$. Note that $\mathbb{E}\left[h(a^{\prime},x^{\prime})\vert a,x\right]$
is a function of $K(\cdot|a,x),P(.\vert x)$. Both $h(a,x)$ and $g(a,x)$
have a `value-function' form, which turns out to be useful for our
approach.

Clearly, $h(.)$ and $g(.)$ are functions of $K(\cdot|\cdot)$ and
$P(.\vert.)$. Since the latter are unknown, the current literature
proceeds by first estimating these as $(\hat{K},\hat{P})$. Typically,
$\hat{K}$ is obtained by MLE based on a parametric form of $K(x^{\prime}\vert a,x;\theta_{f})$,
while $\hat{P}$ is estimated non-parametrically using either a blocking
scheme or kernel regression. Then, given $(\hat{K},\hat{P})$, the
values of $h(.)$ and $g(.)$ can be estimated by solving the recursive
equations (\ref{eq:Recursion}). In the next section, we propose an
alternative algorithm for maximizing $Q(\theta)$ that directly estimates
$h(\cdot)$ and $g(\cdot)$ in a single step without requiring any
knowledge about or estimation of $K(\cdot\vert\cdot)$. 

\subsubsection*{Notation}

We assume that the distribution of $(a_{it},x_{it})$ is time stationary.
This greatly simplifies our notation. It is not necessary for our
results on the approximation properties of our TD methods, see Appendix
\ref{sec:Appendix:A}, but we do require it for deriving a locally
robust estimator. Let $\mathbb{P}$ denote the stationary population
(i.e, in the limit as $n\to\infty$) distribution of $(a,x,a^{\prime},x^{\prime})$,
and $\mathbb{E}[\cdot]$ the corresponding expectation over $\mathbb{P}$.
Define $\mathbb{E}_{n}[\cdot]$ as the expectation over the empirical
distribution, $\mathbb{P}_{n}$, of $(a,x,a^{\prime},x^{\prime})$.
In particular, $\mathbb{E}_{n}[f(a,x,a^{\prime},x^{\prime})]:=(n(T-1))^{-1}\sum_{i=1}^{n}\sum_{t=1}^{T-1}f(a_{it},x_{it},a_{it+1},x_{it+1})$,
i.e we always drop the last time period in the summation index even
if $f(\cdot)$ does not depend on $a^{\prime},x^{\prime}$.

Let $\mathcal{H}$ denote the space of all square integrable functions
over the domain $\mathcal{A}\times\mathcal{X}$ of $(a,x)$. Define
the pseudo-norm $\left\Vert \cdot\right\Vert _{2}$ over $\mathcal{\mathcal{H}}$
as $\left\Vert f\right\Vert _{2}:=\mathbb{E}[\vert f(a,x)\vert^{2}]^{1/2}$
for all $f\in\mathcal{\mathcal{H}}$. We use $\left|\cdot\right|$
to denote the usual Euclidean norm on a Euclidean space.

\section{Temporal-difference estimation \label{sec:Temporal-difference-learning-alg}}

This section presents our TD method for estimating $h(.)$ and $g(.)$.
Note that $h(\cdot)$ is a vector, of the same dimension as $\theta$.
Our methods provide functional approximations separately for each
component $h^{(j)}$ of $h$. To simplify notation, we drop the superscript
$j$ indexing the elements of $h(.)$ and proceed as if the latter,
and therefore $\theta^{*}$, is a scalar. However, it should be taken
as implicit that all our results hold for general $h(.)$, as long
as each of its elements are treated separately.

For any candidate function, $f(a,x)$, for $h(a,x)$, denote the TD
error by 
\[
\delta_{z}(a,x,a^{\prime},x^{\prime};f):=z(a,x)+\beta f(a^{\prime},x^{\prime})-f(a,x),
\]
and the dynamic programming operator by
\[
\Gamma_{z}[f](a,x):=z(a,x)+\beta\mathbb{E}[f(a^{\prime},x^{\prime})\vert a,x].
\]
Clearly, $h(a,x)$ is the unique fixed point of $\Gamma_{z}[\cdot]$.
TD estimation involves approximating $h(a,x)$ using a functional
class $\mathcal{F}$, where each element $h(\cdot;\omega)$ of $\mathcal{F}$
is indexed by a finite-dimensional vector $\omega$. The aim in TD
estimation is to ostensibly minimize the mean-squared TD error 
\[
\textrm{TDE}(\omega):=\mathbb{E}\left[\left\Vert z(a,x)+\beta h(a^{\prime},x^{\prime};\omega)-h(a,x;\omega)\right\Vert ^{2}\right].
\]
However, this minimization problem is neither computationally feasible
nor is it proven to converge when $h\notin\mathcal{F}$. Instead,
two approaches are commonly used. 

The first approach, the semi-gradient method, involves updating $\omega$
using the semi-gradients
\begin{equation}
\omega_{j+1}=\omega_{j}+\gamma\mathbb{E}\left[\left\{ z(a,x)+\beta h(a^{\prime},x^{\prime};\omega_{j})-h(a,x;\omega_{j})\right\} \nabla_{\omega}h(a,x;\omega_{j})\right]\label{eq:Semi-gradients}
\end{equation}
for some small value of $\gamma$. As the name suggests, the above
is not a complete gradient as the derivative does not take into account
how $\omega$ affects the `target', i.e., the future value $h(a^{\prime},x^{\prime};\omega)$.
Nevertheless, for linear functional classes $\mathcal{F}$, it is
possible to explicitly characterize the limit point of the updates,
$\omega^{*}$, and compute it directly. Section \ref{subsec:Semi-gradient-methods}
describes this in greater detail. In the RL literature, it is popular
to employ semi-gradients with neural networks as the functional class
$\mathcal{F}$, but it appears difficult to extend our theoretical
analysis to this setting. 

The second approach, Approximate Value Iteration (AVI; \citealp{munos2008finite}),
employs the idea of `target networks'. In this approach, the parameters
in the future value of $h$, i.e $h(a^{\prime},x^{\prime};\omega)$,
are fixed at the current $\omega$, and the functional parameters
iteratively updated as 
\begin{equation}
\omega_{j+1}=\argmin_{\omega}\mathbb{E}\left[\left\Vert z(a,x)+\beta h(a^{\prime},x^{\prime};\omega_{j})-h(a,x;\omega)\right\Vert ^{2}\right].\label{eq:AVI}
\end{equation}
Clearly, the semi-gradient method and AVI are closely related: if
one were to solve the minimization problem in (\ref{eq:AVI}) using
gradient descent, the updates (within each iteration) would look similar
to (\ref{eq:Semi-gradients}) except for fixing the value of $\omega$
in $h(a^{\prime},x^{\prime};\omega)$ at the past values. After the
gradient updates converge, i.e at the end of the iteration, $h(a^{\prime},x^{\prime};\omega)$
is revised with the new $\omega$. The semi-gradient approach can
thus be considered a one-step variant of AVI. Section \ref{subsec:Approximate-Value-Iteration}
describes AVI in greater detail. We characterize the theoretical properties
of AVI under general functional classes $\mathcal{F}$ including neural
networks, random forests, LASSO etc. 

The approximation to $g$ follows similarly after replacing $\delta_{z}(\cdot;f),\Gamma_{z}[\cdot]$
by
\begin{align*}
\delta_{e}(a,x,a^{\prime},x^{\prime};f) & :=\beta e(a^{\prime},x^{\prime})+\beta f(a^{\prime},x^{\prime})-f(a,x),\\
\Gamma_{e}[f](a,x) & :=\beta\mathbb{E}[e(a^{\prime},x^{\prime})+f(a^{\prime},x^{\prime})\vert a,x].
\end{align*}

\subsection{Semi-gradients\label{subsec:Semi-gradient-methods}}

The semi-gradient approach with linear $\mathcal{F}$ is particularly
well suited for computation. Let $\phi(a,x)$ consist of a set of
basis functions over the domain $(a,x)$. Then the linear approximation
class is $\mathcal{F}\equiv\{\phi(a,x)^{\intercal}\omega:\omega\in\mathbb{R}^{k}\}$.
Denote the projection operator onto $\mathcal{F}$ by $P_{\phi}$:
\[
P_{\phi}[f](a,x):=\phi(a,x)^{\intercal}\mathbb{E}[\phi(a,x)\phi(a,x)^{\intercal}]^{-1}\mathbb{E}[\phi(a,x)f(a,x)].
\]

For linear basis functions, it can be shown, e.g \citet{TsitsiklisRoy1997},
that the sequence of functional approximations $h(a,x;\omega_{j}):=\phi(a,x)^{\intercal}\omega_{j}$
converges to $h^{*}:=\phi(a,x)^{\intercal}\omega^{*}$, defined as
the fixed point of the projected dynamic programming operator $P_{\phi}\Gamma_{z}[\cdot]$
(i.e., $P_{\phi}\Gamma_{z}[h^{*}]=h^{*}$). Based on this characterization,
we show in Lemma \ref{Lemma 1} in Appendix \ref{sec:Appendix:A}
that $h^{*}(a,x)=\phi(a,x)^{\intercal}\omega^{*}$, where
\begin{equation}
\omega^{*}=\mathbb{E}\left[\phi(a,x)\left(\phi(a,x)-\beta\phi(a^{\prime},x^{\prime})\right)^{\intercal}\right]^{-1}\mathbb{E}\left[\phi(a,x)z(a,x)\right].\label{eq:inversion for TD fixed point}
\end{equation}
Lemma \ref{Lemma 2} in Appendix \ref{sec:Appendix:A} assures that
$\mathbb{E}\left[\phi(a,x)\left(\phi(a,x)-\beta\phi(a^{\prime},x^{\prime})\right)^{\intercal}\right]$
is indeed non-singular as long as $\beta<1$ and $\mathbb{E}[\phi(a,x)\phi(a,x)^{\intercal}]$
is non-singular. As defined above, $\omega^{*}$ cannot be computed
directly, since it is a function of the true expectation $\mathbb{E}[\cdot]$.
We can however obtain an estimator, $\hat{\omega}$, after replacing
$\mathbb{E}[\cdot]$ with the sample expectation $\mathbb{E}_{n}[\cdot]$:
\begin{equation}
\hat{\omega}=\mathbb{E}_{n}\left[\phi(a,x)\left(\phi(a,x)-\beta\phi(a^{\prime},x^{\prime})\right)^{\intercal}\right]^{-1}\mathbb{E}_{n}\left[\phi(a,x)z(a,x)\right].\label{eq:empirical TD fixed point}
\end{equation}
Using $\hat{\omega}$, we obtain an estimate of $h(\cdot)$ as $\hat{h}(a,x)=\phi(a,x)^{\intercal}\hat{\omega}$. 

We now turn to the estimation of $g(\cdot)$. As with $h(\cdot)$,
we approximate $g(\cdot)$ using basis functions $r(a,x)$, which
may generally be different from $\phi(a,x)$. Let $P_{r}$ denote
the projection operator onto the space $\{r(a,x)^{\intercal}\xi:\xi\in\mathbb{R}^{k}\}$.
The limit of the semi-gradient iterations is $g^{*}(a,x):=r(a,x)^{\intercal}\xi^{*},$
defined as the fixed point of $P_{r}\Gamma_{e}[\cdot]$. Assuming
$e(a,x)$ is known, we obtain the following characterization of $\xi^{*}$
in analogy with (\ref{eq:inversion for TD fixed point}): 
\begin{equation}
\xi^{*}=\mathbb{E}\left[r(a,x)\left(r(a,x)-\beta r(a^{\prime},x^{\prime})\right)^{\intercal}\right]^{-1}\mathbb{E}\left[\beta r(a,x)e(a^{\prime},x^{\prime})\right].\label{eq:xi-estimate-method1}
\end{equation}
In the above, $e(a,x)=\gamma-\ln P(a\vert x)$ is a function of choice
probabilities, which are unknown. Denote $\eta(a,x):=P(a\vert x)$.
Suppose that we have access to a non-parametric estimator $\hat{\eta}$
of $\eta$. This can be obtained, e.g., through series or kernel regression.
We can then plug in this estimate to obtain $e(a,x;\hat{\eta}):=\gamma-\ln\hat{\eta}(a,x)$.
This in turn enables us to estimate $\xi^{*}$ using $\hat{\xi}$,
computed as
\begin{align}
\hat{\xi} & =\mathbb{E}_{n}\left[r(a,x)\left(r(a,x)-\beta r(a^{\prime},x^{\prime})\right)^{\intercal}\right]^{-1}\mathbb{E}_{n}\left[\beta r(a,x)e(a^{\prime},x^{\prime};\hat{\eta})\right].\label{eq:empirical TD for xi}
\end{align}
Using the above, we obtain an estimate of $g(\cdot)$ as $\hat{g}(a,x)=r(a,x)^{\intercal}\hat{\xi}$. 

Interestingly, estimation of $\xi^{*}$ is unaffected to a first order
by the estimation of $\hat{\eta}$, even though the latter converges
to the true $\eta$ at non-parametric rates (see Section \ref{sec:Theoretical-Properties-of}
for a formal statement). This is because an orthogonality property
holds for the estimation of $\xi$, in that 
\begin{equation}
\partial_{\eta}\mathbb{E}\left[\beta r(a,x)e(a^{\prime},x^{\prime};\eta)\right]=0,\label{eq:orthogonality for =00005Ceta}
\end{equation}
where $\partial_{\eta}\cdot$ denotes the Fr\'echet derivative with
respect to $\eta$. To show (\ref{eq:orthogonality for =00005Ceta}),
expand 
\begin{align}
\mathbb{E}\left[\beta r(a,x)e(a^{\prime},x^{\prime};\eta)\right] & =\mathbb{E}\left[\beta r(a,x)\mathbb{E}\left[\left.e(a^{\prime},x^{\prime};\eta)\right|x^{\prime}\right]\right]\nonumber \\
 & =\mathbb{E}\left[\beta r(a,x)\mathbb{E}\left[\left.\gamma-\ln\eta(a^{\prime},x^{\prime})\right|x^{\prime}\right]\right],\label{eq:expanding =00005Cxiterm}
\end{align}
where the first equality follows from the Markov property. Consider
the functional $M(\tilde{\eta}):=\mathbb{E}\left[\left.\ln\tilde{\eta}(a^{\prime},x^{\prime})\right|x^{\prime}\right]$
at different candidate values $\tilde{\eta}(\cdot,\cdot)$. At the
true conditional choice probability, $\eta$, $M(\tilde{\eta})$ becomes
the conditional entropy of $P(a\vert x^{\prime}$) and attains its
maximum. Hence, $\partial_{\eta}\mathbb{E}\left[\left.\ln\eta(a^{\prime},x^{\prime})\right|x^{\prime}\right]=0$
and (\ref{eq:orthogonality for =00005Ceta}) follows from (\ref{eq:expanding =00005Cxiterm}).
Consequently, $\hat{\xi}$ is a locally robust estimator for $\xi$.

Note that computation of $\hat{\omega}$ and $\hat{\xi}$ only involves
solving linear equations of dimension $\textrm{dim}(\phi)$ and $\textrm{dim}(r)$,
respectively. This is computationally very cheap. Using $\hat{h}(a,x)$
and $\hat{g}(a,x)$, we can in turn estimate $\theta^{*}$ in many
different ways. For instance, we can plug them into the PMLE estimator
\begin{equation}
\hat{\theta}:=\argmax_{\theta}\hat{Q}(\theta):=\sum_{i=1}^{n}\sum_{t=1}^{T-1}\log\frac{\exp\left\{ \hat{h}(a_{it},x_{it})\theta+\hat{g}(a_{it},x_{it})\right\} }{\sum_{a}\exp\left\{ \hat{h}(a,x_{it})\theta+\hat{g}(a,x_{it})\right\} }.\label{eq:estimate of theta}
\end{equation}
However, such plug-in estimates are sub-optimal. In Section \ref{subsec:Continuous-states-and},
we suggest a locally robust version of (\ref{eq:estimate of theta}). 

Suppose that the underlying states and actions are discrete, and that
our algorithm uses basis functions comprised of the set of all discrete
elements of $x,a$. Then, we show in Online Appendix \ref{subsec:Discrete-states}
that the resulting estimate of $h(a,x)$ is exactly the same as that
obtained from the standard CCP estimators, if both the choice and
transition probabilities were estimated using cell values. 

\subsection{Approximate Value Iteration (AVI)\label{subsec:Approximate-Value-Iteration}}

For a feasible estimation procedure using AVI, we can replace $\mathbb{E}[\cdot]$
by $\mathbb{E}_{n}[\cdot]$ in (\ref{eq:AVI}). The procedure builds
a sequence of approximations $\{\hat{h}_{j}:=h(a,x;\hat{\omega}_{j}),j=1,\dots,J\}$
for $h$, where 
\begin{equation}
\hat{h}_{j+1}=\argmin_{f\in\mathcal{F}}\mathbb{E}_{n}\left[\left\Vert z(a,x)+\beta\hat{h}_{j}(a^{\prime},x^{\prime})-f(a,x)\right\Vert ^{2}\right].\label{eq:sample AVI}
\end{equation}
The process can be started with an arbitrary initialization, e.g.,
$\hat{h}_{1}(a,x)=z(a,x)$. The maximum number of iterations, $J$,
is only limited by computational feasibility.

The minimization problem in (\ref{eq:sample AVI}) is equivalent to
a prediction problem using the functional class $\mathcal{F}$, where
the outcomes are $z(a,x)+\beta\hat{h}_{j}(a^{\prime},x^{\prime})$
evaluated at the various sample draws of $(a,x,a^{\prime},x^{\prime})$.
Hence, the estimation target for $\hat{h}_{j+1}$ is the conditional
expectation $\mathbb{E}[z(a,x)+\beta\hat{h}_{j}(a^{\prime},x^{\prime})\vert a,x]\equiv\Gamma_{z}[\hat{h}_{j}](a,x)$.
In other words, each $\hat{h}_{j+1}$ is a non-parametric approximation
to $\Gamma_{z}[\hat{h}_{j}]$, and in this manner AVI builds a series
of approximate value function iterations (as its name suggests).

The interpretation of (\ref{eq:sample AVI}) as a prediction problem
also enables us to employ any machine learning method devised for
prediction, including (but not limited to) LASSO, random forests and
neural networks. Our theoretical results show that it is possible
to estimate $h$ at suitably fast rates under very weak assumptions
on the non-parametric estimation rates of machine learning methods.

The estimation procedure for $g(\cdot)$ is similar: we construct
a sequence of approximations $\{\hat{g}_{j},j=1,\dots,J\}$ for $g$
as 
\[
\hat{g}_{j+1}=\argmin_{f\in\mathcal{F}}\mathbb{E}_{n}\left[\left\Vert \beta e(a^{\prime},x^{\prime};\hat{\eta})+\beta\hat{g}_{j}(a^{\prime},x^{\prime})-f(a,x)\right\Vert ^{2}\right].
\]
As in Section \ref{eq:Semi-gradients}, it will be shown that the
estimation error of $\eta$ is first-order ignorable for the estimation
of $g$. Using $\hat{h}(a,x)$ and $\hat{g}(a,x)$, we can, as before,
estimate $\theta^{*}$ in many different ways, including the PMLE
estimator \ref{eq:estimate of theta}.

Compared to the semi-gradient approach, AVI is computationally more
expensive as it requires solving $J$ prediction problems (in Section
\ref{subsec:Estimation-of-non-parametric} we show that in the worst
case $J\approx\ln n$, but this can be substantially reduced through
good initializations). However, semi-gradient methods require differentiable
classes of functions (e.g., random forests are not allowed) and it
appears difficult to characterize their theoretical properties in
the case of nonlinear basis functions.

\subsection{Tuning parameters\label{subsec:Tuning-parameters}}

Both the semi-gradient and AVI methods will require choosing tuning
parameters. For AVI this is straightforward: as each iteration is
a non-parametric estimation problem, the tuning parameters can be
chosen in the usual manner, e.g., using cross-validation. In the case
of linear semi-gradient methods, the tuning parameters are the dimensions
$k_{\phi}=\textrm{dim}(\phi)$ and $k_{r}=\textrm{dim}(r)$ of the
basis functions. In analogy with AVI, we propose selecting both through
a procedure akin to cross-validation. The value of $\omega$ is estimated
using a training sample and its performance evaluated on a hold-out
or test sample, where the performance is measured in terms of the
empirical mean-squared TD error $\mathbb{E}_{n,\textrm{test}}[\delta_{h}^{2}(a,x,a^{\prime},x^{\prime};\hat{h})]$
on the test dataset. The values of $k_{\phi},k_{r}$ are chosen to
minimize the mean squared TD error.

\subsection{Nonlinear utility functions\label{subsec:Nonlinear-utility-functions}}

We take the utility function to be linear in $\theta$ for simplicity
- and because it is the most common choice in practice - but this
is easily relaxed. Denote the nonlinear utility by $z(a,x;\theta)$.
Then, $\theta^{*}$ is the maximizer of the pseudo-likelihood criterion
\[
Q(\theta)=\sum_{i=1}^{n}\sum_{t=1}^{T-1}\log\frac{\exp\left\{ h(a_{it},x_{it};\theta)+g(a_{it},x_{it})\right\} }{\sum_{a}\exp\left\{ h(a,x_{it};\theta)+g(a,x_{it})\right\} },
\]
where, for each $\theta$, $h(.;\theta)$ is defined analogously to
(\ref{eq:Recursion}) with $z(a,x;\theta)$ replacing $z(a,x)$. 

We can use our TD procedures to obtain a functional approximation
$\hat{h}(a,x;\theta)$ for $h(a,x;\theta)$ at each $\theta$ (estimation
of $g(\cdot)$ remains unchanged as it does not depend on $\theta$).
As argued earlier, computation of $\hat{h}(\cdot;\theta)$ can be
very fast. Appealingly, the matrix $\mathbb{E}_{n}\left[\phi(a,x)\left(\phi(a,x)-\beta\phi(a^{\prime},x^{\prime})\right)^{\intercal}\right]$
employed in the linear semi-gradient estimate (\ref{eq:empirical TD fixed point})
does not feature $z(a,x;\theta)$, and therefore only has to be inverted
once. We can then plug in the values of $\hat{h}(.;\theta)$ and $\hat{g}(\cdot)$
to estimate $\theta^{*}$ as 
\[
\hat{\theta}=\arg\max_{\theta\in\Theta}\hat{Q}(\theta);\quad\hat{Q}(\theta):=\sum_{i=1}^{n}\sum_{t=1}^{T-1}\log\frac{\exp\left\{ \hat{h}(a_{it},x_{it};\theta)+\hat{g}(a_{it},x_{it})\right\} }{\sum_{a}\exp\left\{ \hat{h}(a,x_{it};\theta)+\hat{g}(a,x_{it})\right\} }.
\]
A locally robust counterpart to $\hat{Q}(\theta)$ can be derived
in the same manner as in Section \ref{subsec:Continuous-states-and}.
However, computation of $\hat{\theta}$ is more involved as it is
no longer convex optimization. 

\subsection{Unobserved heterogeneity}

In Online Appendix \ref{sec:Incorporating-permanent-unobserv} we
incorporate permanent unobserved heterogeneity into our models by
pairing our TD methods with the sequential Expectation-Maximization
(EM) algorithm (\citealp{arcidiacono2003finite}). The resulting algorithm
can handle discrete heterogeneity in both individual utilities and
transition densities. Online Appendix \ref{subsec:Bus-Engine-Replacement}
provides Monte-Carlo evidence suggesting that the algorithm works
well in practice. 

\section{Theoretical Properties of TD estimators\label{sec:Theoretical-Properties-of}}

\subsection{Estimation of non-parametric terms\label{subsec:Estimation-of-non-parametric}}

We characterize rates of convergence for estimation of $h(\cdot)$
and $g(\cdot)$ under both semi-gradients and AVI.

\subsubsection{Linear semi-gradients}

We impose the following assumptions for estimation of $h(\cdot)$. 

\begin{asm1} (i) The basis vector $\phi(a,x)$ is linearly independent
(i.e. $\phi(a,x)^{\intercal}\omega\linebreak=0$ for all $(a,x)$
if and only if $\omega=0$). Additionally, the eigenvalues of $\mathbb{E}[\phi(a,x)\linebreak\phi(a,x)^{\intercal}]$
are uniformly bounded away from zero for all $k_{\phi}$. \label{1(i)}

(ii) $\vert\phi(a,x)\vert_{\infty}\le M$ for some $M<\infty$. \label{1(ii) }

(iii) There exists $C<\infty$ and $\alpha>0$ such that $\left\Vert h-P_{\phi}[h]\right\Vert _{2}\le Ck_{\phi}^{-\alpha}$.\label{1(iii)}

(iv) The domain of $(a,x)$ is a compact set, and $\vert z(a,x)\vert_{\infty}\le L$
for some $L<\infty$. \label{1(iv)}

(v) $k_{\phi}\to\infty$ and $k_{\phi}^{2}/n\to0$ as $n\to\infty$.
\label{1(v)}

\end{asm1}

Assumption 1(i) rules out multi-collinearity in the basis functions.
This is easily satisfied. Assumption 1(ii)  ensures that the basis
functions are bounded. This is again a mild requirement and is easily
satisfied if either the domain of $(a,x)$ is compact, or the basis
functions are chosen appropriately (e.g., a Fourier basis). Assumption
1(iii) is a standard condition on the rate of approximation of $h(a,x)$
using a basis approximation. The value of $\alpha$ is related to
the smoothness of $h(\cdot)$. \citet{newey1997convergence} shows
that for splines and power series, $\alpha=r/d$, where $r$ is the
number of continuous derivatives of $h(a,\cdot)$ and $d$ is the
dimension of $x$. Similar results can also be derived for other approximating
functions such as Fourier series, wavelets and Bernstein polynomials.
The smoothness properties of $h(a,\cdot)$ are discussed in Online
Appendix \ref{subsec:Smoothness-properties-of}, where we provide
primitive conditions on $z(a,x),K(x^{\prime}\vert a,x)$ that ensure
existence of $r$ continuous derivatives of $h(a,\cdot)$ for each
$a\in\mathcal{A}$. Assumption 1(iv) requires $z(a,x)$ to be bounded.
Finally, Assumption 1(v) specifies the rate at which the dimension
of the basis functions are allowed to grow. The rate requirements
are mild, and are the same as those employed for standard series estimation.
For the theoretical properties, the exact rate of $k_{\phi}$ is not
relevant up to a first order since we propose estimators of $\theta^{*}$
that are locally robust to estimation of $h(\cdot)$. 

We then have the following theorem on the estimation of $h(a,x)$:

\begin{thm} \label{Theorem 1}Under Assumptions 1(i) - 1(v), the
following hold: 

(i) Both $\omega^{*}$ and $\hat{\omega}$ exist, the latter with
probability approaching one.

(ii) $\left\Vert h(a,x)-\phi(a,x)^{\intercal}\omega^{*}\right\Vert _{2}\le(1-\beta)^{-1}\left\Vert h-P_{\phi}[h]\right\Vert _{2}\le C(1-\beta)^{-1}k_{\phi}^{-\alpha}$.

(iii) The $L^{2}$ error for the difference between $h(a,x)$ and
$\phi(a,x)^{\intercal}\hat{\omega}$ is bounded as 
\[
\left\Vert h(a,x)-\phi(a,x)^{\intercal}\hat{\omega}\right\Vert _{2}=O_{p}\left((1-\beta)^{-1}\left\{ \frac{k_{\phi}}{\sqrt{n}}+k_{\phi}^{-\alpha}\right\} \right).
\]
\end{thm}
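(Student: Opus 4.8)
The plan is to establish the four claims in order, using the identification formulas \eqref{eq:inversion for TD fixed point} and \eqref{eq:empirical TD fixed point} together with Lemmas 1 and 2 from the Appendix. Write $A := \mathbb{E}\left[\phi(a,x)\left(\phi(a,x)-\beta\phi(a^{\prime},x^{\prime})\right)^{\intercal}\right]$ and $b := \mathbb{E}\left[\phi(a,x)z(a,x)\right]$, with $A_n, b_n$ the corresponding empirical versions, so that $\omega^* = A^{-1}b$ and $\hat\omega = A_n^{-1}b_n$. For part (i), existence of $\omega^*$ is immediate from Lemma 2 once Assumption 1(i) is invoked; existence of $\hat\omega$ with probability approaching one will follow from a matrix concentration argument showing $\|A_n - A\|_{\mathrm{op}} = O_p(\sqrt{k_\phi^2/n})$ (using boundedness of $\phi$ from 1(ii) and a matrix Bernstein or Rudelson-type inequality), which is $o_p(1)$ under 1(v), so the smallest singular value of $A_n$ stays bounded below — here one needs a lower bound on the smallest singular value of $A$ itself, which I expect Lemma 2's proof to supply (something like $\sigma_{\min}(A)\gtrsim (1-\beta)\lambda_{\min}(\mathbb{E}[\phi\phi^\intercal])$, bounded away from zero by 1(i)).

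For part (ii), this is the deterministic approximation bound and is the standard Tsitsiklis--Van Roy argument adapted to our projection: since $\phi^\intercal\omega^*$ is the fixed point of $P_\phi\Gamma_z$ and $h$ is the fixed point of $\Gamma_z$, write
\[
h - \phi^\intercal\omega^* = (h - P_\phi h) + (P_\phi h - P_\phi\Gamma_z[\phi^\intercal\omega^*]) = (h - P_\phi h) + P_\phi\Gamma_z[h - \phi^\intercal\omega^*],
\]
using $P_\phi\Gamma_z[h] = P_\phi h$ (as $\Gamma_z[h]=h$). Since $h - P_\phi h$ is orthogonal to $\mathcal{L}_\phi$ while $P_\phi\Gamma_z[h-\phi^\intercal\omega^*]\in\mathcal{L}_\phi$, the Pythagorean identity gives $\|h-\phi^\intercal\omega^*\|_2^2 = \|h-P_\phi h\|_2^2 + \|P_\phi\Gamma_z[h-\phi^\intercal\omega^*]\|_2^2$. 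The operator $P_\phi\Gamma_z$ is a $\beta$-contraction in $\|\cdot\|_2$ (because $\Gamma_z$ differences out $z$ and the conditional-expectation operator is an $\|\cdot\|_2$-contraction by Jensen, and $P_\phi$ is a nonexpansion), so $\|P_\phi\Gamma_z[h-\phi^\intercal\omega^*]\|_2 \le \beta\|h-\phi^\intercal\omega^*\|_2$; substituting and solving the quadratic inequality yields $\|h-\phi^\intercal\omega^*\|_2 \le (1-\beta^2)^{-1/2}\|h-P_\phi h\|_2 \le (1-\beta)^{-1}\|h-P_\phi h\|_2$, and the final bound follows from Assumption 1(iii).

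For part (iii), decompose $\hat\omega - \omega^* = A_n^{-1}(b_n - b) + (A_n^{-1}-A^{-1})b = A_n^{-1}\big[(b_n-b) - (A_n-A)\omega^*\big]$. On the event that $\sigma_{\min}(A_n)$ is bounded below (probability approaching one, from part (i)), it suffices to bound $|b_n - b|$ and $\|A_n - A\|_{\mathrm{op}}|\omega^*|$. Both $b_n - b$ and $A_n - A$ are averages of mean-zero bounded vectors/matrices (boundedness of $\phi$, $z$ from 1(ii),(iv)), so $|b_n-b| = O_p(\sqrt{k_\phi/n})$ and $\|A_n - A\|_{\mathrm{op}} = O_p(\sqrt{k_\phi/n})$ by the usual moment/concentration bounds for sums of independent bounded random matrices; and $|\omega^*|$ is bounded uniformly in $k_\phi$ because $|\omega^*| \le \sigma_{\min}(A)^{-1}|b| \lesssim |b|$ and $|b|^2 = \mathbb{E}[z\,\phi^\intercal]\mathbb{E}[\phi z] \le L^2\lambda_{\max}$-type bound — actually the cleanest route is $|\omega^*|^2 \lesssim \|\phi^\intercal\omega^*\|_2^2 \le (\|h\|_2 + \|h-\phi^\intercal\omega^*\|_2)^2 = O(1)$ by part (ii) and boundedness of $h$ (itself bounded since $z$ is bounded and $\beta<1$). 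This gives $|\hat\omega-\omega^*| = O_p(\sqrt{k_\phi/n})$. Part (iv) then follows by the triangle inequality: $\|h - \phi^\intercal\hat\omega\|_2 \le \|h-\phi^\intercal\omega^*\|_2 + \|\phi^\intercal(\hat\omega-\omega^*)\|_2$, where the first term is $O(k_\phi^{-\alpha})$ by part (ii) and the second is $\le \lambda_{\max}(\mathbb{E}[\phi\phi^\intercal])^{1/2}|\hat\omega - \omega^*|$; since $|\phi|_\infty \le M$ and $\phi$ is $k_\phi$-dimensional, $\lambda_{\max}(\mathbb{E}[\phi\phi^\intercal]) \le M^2 k_\phi$, giving $\|\phi^\intercal(\hat\omega-\omega^*)\|_2 = O_p(\sqrt{k_\phi}\cdot\sqrt{k_\phi/n}) = O_p(k_\phi/\sqrt{n})$, and the result follows.

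The main obstacle I anticipate is part (iii) — specifically controlling $A_n^{-1}$ uniformly in $k_\phi$ and getting the matrix concentration $\|A_n - A\|_{\mathrm{op}} = O_p(\sqrt{k_\phi/n})$ with the right dependence on $k_\phi$ rather than a cruder $k_\phi/\sqrt n$ bound; this requires either a matrix Bernstein inequality or a careful truncation/symmetrization argument exploiting 1(ii), and is the step where the condition $k_\phi^2/n\to 0$ in 1(v) (rather than merely $k_\phi/n\to 0$) gets used, since we only need $\|A_n-A\|_{\mathrm{op}}=o_p(1)$ to invert but want the sharper rate to feed into the final bound. The contraction property of $P_\phi\Gamma_z$ in part (ii) is also worth stating carefully, since it relies on the conditional-expectation operator $f\mapsto \mathbb{E}[f(a',x')\mid a,x]$ being a nonexpansion on $L^2(\mathbb{P})$ — this uses that the stationary/population distribution of $(a,x)$ is the same as the implied marginal of $(a',x')$, which should be noted as following from stationarity of the data-generating process.
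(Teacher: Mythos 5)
The gap is in part (iii). You reduce $\left|\hat{\omega}-\omega^{*}\right|$ to $\left|b_{n}-b\right|+\left\Vert A_{n}-A\right\Vert _{\mathrm{op}}\left|\omega^{*}\right|$ and claim $\left\Vert A_{n}-A\right\Vert _{\mathrm{op}}=O_{p}(\sqrt{k_{\phi}/n})$ via matrix Bernstein. Under the stated assumptions this is not available: Assumption 1(ii) only bounds the entries of $\phi$, and 1(i) bounds the eigenvalues of $\mathbb{E}[\phi\phi^{\intercal}]$ from below, not above, so the variance proxy in a matrix Bernstein bound is of order $k_{\phi}\,\lambda_{\max}(\mathbb{E}[\phi\phi^{\intercal}])$, which can be as large as $k_{\phi}^{2}$ (and even with $\lambda_{\max}$ bounded you pick up logarithmic factors and need to handle the within-$i$ Markov dependence). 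The elementary second-moment bound that the assumptions do deliver gives only $\left|\hat{A}-A\right|=O_{p}(k_{\phi}/\sqrt{n})$ --- which is what the paper proves and all it needs for invertibility in part (i) --- and feeding that cruder rate through your decomposition yields $\left|\hat{\omega}-\omega^{*}\right|=O_{p}(k_{\phi}/\sqrt{n})$ and hence $\left\Vert h-\phi^{\intercal}\hat{\omega}\right\Vert _{2}=O_{p}(k_{\phi}^{3/2}/\sqrt{n}+k_{\phi}^{-\alpha})$, strictly weaker than claims (iii) and (iv). The missing idea (and the paper's route) is to avoid the operator norm entirely: bound the specific vector $(\hat{A}-A)\omega^{*}$ directly. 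Its conditional second moment is at most $n^{-1}\mathbb{E}\bigl[\vert\phi\vert^{2}\vert(\phi-\beta\phi^{\prime})^{\intercal}\omega^{*}\vert^{2}\bigr]\le n^{-1}k_{\phi}M^{2}(2+2\beta^{2})\mathbb{E}\bigl[\vert\phi^{\intercal}\omega^{*}\vert^{2}\bigr]$, and $\left\Vert \phi^{\intercal}\omega^{*}\right\Vert _{2}=O(1)$ follows from your part (ii) together with $\vert h\vert_{\infty}\le(1-\beta)^{-1}L$; this gives $\vert(\hat{A}-A)\omega^{*}\vert=O_{p}(\sqrt{k_{\phi}/n})$ and restores the stated rates. (The paper also controls the inverse through the quadratic-form bound of Lemma 2 applied to $-\hat{A}$ rather than through $\sigma_{\min}(A_{n})$, but that difference is cosmetic; your expectation that Lemma 2 yields $\sigma_{\min}(A)\ge(1-\beta)\underline{\lambda}(\mathbb{E}[\phi\phi^{\intercal}])$ is correct.)

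Parts (i), (ii) and (iv) are otherwise sound. For (i), only $\vert\hat{A}-A\vert=o_{p}(1)$ is required, and the crude $O_{p}(k_{\phi}/\sqrt{n})$ bound suffices under 1(v), exactly as in the paper; just note that with panel data you should concentrate across $i$ for each of the finitely many $t$ separately rather than treat all $nT$ observations as independent. Your argument for (ii) takes a genuinely different route from the paper's Lemma 3: you use the Pythagorean decomposition plus the $\beta$-contraction of $P_{\phi}\Gamma_{z}$ in $\left\Vert \cdot\right\Vert _{2}$, which yields the sharper constant $(1-\beta^{2})^{-1/2}$, whereas the paper uses a triangle-inequality argument giving $(1-\beta)^{-1}$ directly; both rest on the nonexpansion of the conditional-expectation operator under the stationary distribution, a point you rightly flag as needing to be stated.
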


We prove Theorem \ref{Theorem 1} in Appendix \ref{subsec:Proof-of-Theorem1}
by adapting the results of \citet{TsitsiklisRoy1997}. The first part
of Theorem \ref{Theorem 1} assures that both population and empirical
TD fixed points exist. The second and third parts of Theorem \ref{Theorem 1}
imply that the approximation bias and MSE of linear semi-gradients
are analogous to those of standard series estimation apart from a
$(1-\beta)^{-1}$ factor. 

For the estimation of $\hat{\xi}$ we make use of cross-fitting as
a technical device to obtain easy-to-verify assumptions for the estimation
of $\hat{\eta}$. This entails the following: we randomly partition
the data into two folds. We estimate $\hat{\xi}$ separately for each
fold using $\hat{\eta}$ estimated from the opposite fold. The final
estimate of $\xi^{*}$ is the weighted average of $\hat{\xi}$ from
both the folds. We think of cross-fitting in this context as a convenient
assumption for the proofs, and do not believe it is necessary in practice. 

We impose the following assumptions for the estimation of $g(a,x)$.
Let $k_{r}$ denote the dimension of $r(a,x)$.

\begin{asm2} (i) The basis vector $r(a,x)$ is linearly independent,
and the eigenvalues of $\mathbb{E}[r(a,x)r(a,x)^{\intercal}]$ are
uniformly bounded away from zero for all $k_{r}$. \label{2(i)}

(ii) $\vert r(a,x)\vert_{\infty}\le M$ for some $M<\infty$. \label{2(ii) }

(iii) There exists $C<\infty$ and $\alpha>0$ such that $\left\Vert g(a,x)-P_{r}[g(a,x)]\right\Vert _{2}\le Ck_{r}^{-\alpha}$.\label{2(iii)}

(iv) The domain of $(a,x)$ is a compact set, and $\vert e(a,x)\vert_{\infty}\le L<\infty$.
\label{2(iv)}

(v) $k_{r}\to\infty$ and $k_{r}^{2}/n\to0$ as $n\to\infty$. \label{2(v)}

(vi) $\hat{\xi}$ is estimated from a cross-fitting procedure described
above. The conditional choice probability function satisfies $\eta(a,x)\ge\delta>0$,
where $\delta$ is independent of $a,x$. Additionally, $\left\Vert \eta-\hat{\eta}\right\Vert _{\infty}=o_{p}(1)$
and $\left\Vert \eta-\hat{\eta}\right\Vert _{2}^{2}=o_{p}(n^{-1/2})$.\label{2(vi)}

\end{asm2}

Assumption 2 is a direct analogue of Assumption 1, except for the
last part which provides regularity conditions when $\eta(\cdot)$
is estimated. These conditions are typical for locally robust estimates
and only require the non-parametric function $\eta(a,x)$ to be estimable
at faster than $n^{-1/4}$ rates. This is easily verified for most
non-parametric estimation methods such as kernel or series regression.
Under these assumptions, we have the following analogue of Theorem
\ref{Theorem 1}, which we prove in Appendix \ref{subsec:Proof-of-Theorem2}
.

\begin{thm} \label{Theorem 2}Under Assumptions 2(i) to 2(vi),
the following hold: 

(i) Both $\xi^{*}$ and $\hat{\xi}$ exist, the latter with probability
approaching one.

(ii) $\left\Vert g(a,x)-r(a,x)^{\intercal}\xi^{*}\right\Vert _{2}\le(1-\beta)^{-1}\left\Vert g(a,x)-P_{r}g(a,x)\right\Vert _{2}\le C(1-\beta)^{-1}k_{r}^{-\alpha}$.

(iii) The $L^{2}$ error for the difference between $g(a,x)$ and
$r(a,x)^{\intercal}\hat{\xi}$ is bounded as 
\[
\left\Vert g(a,x)-r(a,x)^{\intercal}\hat{\xi}\right\Vert _{2}=O_{p}\left((1-\beta)^{-1}\left\{ \frac{k_{r}}{\sqrt{n}}+k_{r}^{-\alpha}\right\} \right).
\]
\end{thm}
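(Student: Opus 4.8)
The plan is to follow the proof of Theorem \ref{Theorem 1} almost verbatim for parts (i), (ii), and the ``sampling'' portion of (iii), and then to isolate the one genuinely new feature of Theorem \ref{Theorem 2}: the preliminary nonparametric estimate $\hat\eta$ entering $\hat\xi$ through $e(a^\prime,x^\prime;\hat\eta)=\gamma-\ln\hat\eta(a^\prime,x^\prime)$, which we handle using the orthogonality property (\ref{eq:orthogonality for =00005Ceta}) together with the cross-fitting construction.

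For parts (i) and (ii): write $A:=\mathbb{E}[r(a,x)(r(a,x)-\beta r(a^\prime,x^\prime))^\intercal]$ and $b:=\mathbb{E}[\beta r(a,x)e(a^\prime,x^\prime)]$. Invoking Lemma \ref{Lemma 2} with $r$ in place of $\phi$, since $\beta<1$ and $\mathbb{E}[rr^\intercal]$ has eigenvalues bounded away from zero (Assumption 2(i)), $A$ is nonsingular with $\|A^{-1}\|$ bounded uniformly in $k_r$; hence $\xi^*=A^{-1}b$ exists, and $\hat\xi=\hat A_n^{-1}\hat b_n$ exists with probability approaching one once we note $\|\hat A_n-A\|\le\|\hat A_n-A\|_F=O_p(\sqrt{k_r^2/n})=o_p(1)$ under Assumption 2(v). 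Part (ii) is the contraction argument of Theorem \ref{Theorem 1}: the transition operator $f\mapsto\mathbb{E}[f(a^\prime,x^\prime)\mid a,x]$ is a nonexpansion in $\|\cdot\|_2$, so $\Gamma_e$ is a $\beta$-contraction, $P_r$ is a nonexpansion, $P_r\Gamma_e$ has the unique fixed point $r^\intercal\xi^*$, and a Pythagorean decomposition gives $\|g-r^\intercal\xi^*\|_2\le(1-\beta)^{-1}\|g-P_rg\|_2$, which Assumption 2(iii) turns into the claimed $k_r^{-\alpha}$ bound.

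For part (iii) I would start from $\hat\xi-\xi^*=\hat A_n^{-1}\{(\hat b_n-b)-(\hat A_n-A)\xi^*\}$, with $\|\hat A_n^{-1}\|=O_p(1)$ from the above. The term $(\hat A_n-A)\xi^*=(\mathbb{E}_n-\mathbb{E})[r(a,x)(r(a,x)-\beta r(a^\prime,x^\prime))^\intercal\xi^*]$ is a centered sample average of a $k_r$-vector whose $j$th coordinate has $O(1)$ variance (since $|r_j|\le M$ and $\|r^\intercal\xi^*\|_2$ is bounded by part (ii)), hence $|(\hat A_n-A)\xi^*|=O_p(\sqrt{k_r/n})$. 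For $\hat b_n-b$ I would peel off $(\mathbb{E}_n-\mathbb{E})[\beta r\,e(a^\prime,x^\prime;\eta)]$, again a centered $k_r$-vector average and so $O_p(\sqrt{k_r/n})$ by Assumption 2(iv), leaving the plug-in error $\mathbb{E}_n[\beta r\{e(\hat\eta)-e(\eta)\}]$, which splits into a drift term $\mathbb{E}[\beta r\{e(\hat\eta)-e(\eta)\}]$ and an empirical-process term $(\mathbb{E}_n-\mathbb{E})[\beta r\{e(\hat\eta)-e(\eta)\}]$. For the drift term, a second-order Fr\'echet expansion in $\eta$ has vanishing first-order part by exactly the conditional-entropy argument already given around (\ref{eq:orthogonality for =00005Ceta}), and a remainder bounded coordinatewise by a constant multiple of $\|\hat\eta-\eta\|_2^2$ once $|\hat\eta-\eta|_\infty=o_p(1)$ and $\eta\ge\delta$ (Assumption 2(vi)) keep the intermediate value away from zero; so as a $k_r$-vector it is $O_p(\sqrt{k_r}\,\|\hat\eta-\eta\|_2^2)=o_p(\sqrt{k_r/n})$ using $\|\hat\eta-\eta\|_2^2=o_p(n^{-1/2})$. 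For the empirical-process term I would condition on the complementary fold used to form $\hat\eta$, so that it becomes a centered average over the current fold of a \emph{fixed} function with per-coordinate conditional variance $O_p(n^{-1}\|\hat\eta-\eta\|_2^2)$ (by the local Lipschitz bound on $\ln$ valid w.p.a.\ one); summing over $k_r$ coordinates and using Assumption 2(vi) again makes this $o_p(\sqrt{k_r/n})$. Collecting the pieces yields $|\hat b_n-b|=O_p(\sqrt{k_r/n})$, hence $|\hat\xi-\xi^*|=O_p(\sqrt{k_r/n})$, and the fold-weighted average forming the final estimator preserves this rate. Part (iv) then follows from $\|g-r^\intercal\hat\xi\|_2\le\|g-r^\intercal\xi^*\|_2+\|r^\intercal(\hat\xi-\xi^*)\|_2$, the first term being $O(k_r^{-\alpha})$ by (ii) and the second $\le M\sqrt{k_r}\,|\hat\xi-\xi^*|=O_p(k_r/\sqrt n)$ by Assumption 2(ii) and (iii).

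The main obstacle is the plug-in error $\mathbb{E}_n[\beta r\{e(\hat\eta)-e(\eta)\}]$: showing it is $o_p(\sqrt{k_r/n})$ even though $\hat\eta$ converges only at nonparametric rates. This is exactly where local robustness (the orthogonality (\ref{eq:orthogonality for =00005Ceta})), the uniform lower bound $\eta\ge\delta$, and the cross-fitting are all indispensable; everything else is a routine transcription of the fixed-point argument behind Theorem \ref{Theorem 1}.
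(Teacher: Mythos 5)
Your proposal is correct and follows essentially the same route as the paper: parts (i), (ii), and (iv) by transcribing the fixed-point/contraction and triangle-inequality arguments of Theorem \ref{Theorem 1}, and part (iii) by splitting the plug-in error $\mathbb{E}_n[\beta r\{e(\hat\eta)-e(\eta)\}]$ into a drift term killed to first order by the orthogonality (\ref{eq:orthogonality for =00005Ceta}) with a second-order remainder of size $\sqrt{k_r}\,\|\hat\eta-\eta\|_2^2=o_p(\sqrt{k_r/n})$, plus a cross-fitted empirical-process term controlled by conditioning on the complementary fold and the Lipschitz bound on $\ln$ under $\eta\ge\delta$ — exactly the paper's $R_{2nt}$ and $R_{1nt}$ decomposition. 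The only cosmetic difference is that you bound $\|\hat A_n^{-1}\|$ directly rather than working with the negative-definiteness quadratic form as in the proof of Theorem \ref{Theorem 1}, which is equivalent.
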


\subsubsection{Approximate Value Iteration}

We can expand the estimation error $\left\Vert \hat{h}_{J}-h\right\Vert _{2}$
in terms of the non-parametric estimation errors $\left\Vert \Gamma_{z}[\hat{h}_{j-1}]-h_{j}\right\Vert _{2}$
for $j=1,\dots,J$. In particular, since $\Gamma_{z}[h]=h$ and $\Gamma_{z}[\cdot]$
is a $\beta$-contraction, we have 
\begin{align*}
\left\Vert h-\hat{h}_{j}\right\Vert _{2} & \le\left\Vert \Gamma_{z}[h]-\Gamma_{z}[\hat{h}_{j-1}]\right\Vert _{2}+\left\Vert \Gamma_{z}[\hat{h}_{j-1}]-h_{j}\right\Vert _{2}\\
 & \le\beta\left\Vert h-\hat{h}_{j-1}\right\Vert _{2}+\left\Vert \Gamma_{z}[\hat{h}_{j-1}]-h_{j}\right\Vert _{2}.
\end{align*}
Iterating the above gives
\begin{equation}
\left\Vert h-\hat{h}_{J}\right\Vert _{2}=\beta^{J-1}\left\Vert h-\hat{h}_{1}\right\Vert _{2}+\sum_{j=2}^{J}\beta^{J-j}\left\Vert \Gamma_{z}[\hat{h}_{j-1}]-\hat{h}_{j}\right\Vert _{2}.\label{eq:error propagation}
\end{equation}
Equation (\ref{eq:error propagation}) can be considered a special
case of error propagation (\citealp{munos2008finite}). 

Recall that $\hat{h}_{1}$ is an arbitrary initialization. It is thus
straightforward to provide conditions under which $\left\Vert h-\hat{h}_{1}\right\Vert _{2}$
is bounded by some constant $M_{1}$. As for the second term in (\ref{eq:error propagation}),
recall from the discussion in Section \ref{subsec:Approximate-Value-Iteration}
that the minimization problem (\ref{eq:sample AVI}) corresponds to
non-parametric estimation of $\Gamma_{z}[\hat{h}_{j-1}]$ using the
functional class $\mathcal{F}$. Most machine learning methods come
with guarantees on the non-parametric estimation rate $\left\Vert \Gamma_{z}[\hat{h}_{j-1}]-\hat{h}_{j}\right\Vert _{2}$. 

We now describe our assumptions for AVI. Let $\mathcal{X}$ denote
the $d$-dimensional space of $x$, and define $\mathcal{W}^{\gamma,\infty}(\mathcal{X})$
as the H\"{o}lder ball with smoothness parameter $\gamma$: 
\[
\mathcal{W}^{\gamma,\infty}(\mathcal{X}):=\left\{ f:\max_{0<\vert p\vert\le\gamma}\sup_{x\in\mathcal{X}}\left|D^{p}f\right|\le M\right\} .
\]

\begin{asm3} (i) The domain, $\mathcal{X}$, of $x$ is compact and
there exist $M_{0},M<\infty$ such that $\left|h\right|_{\infty}\le M_{0}$
and $h(\cdot,a)\in\mathcal{W}^{\gamma,\infty}(\mathcal{X})$ for each
$a$.\label{3(i)}

(ii) $\left|\hat{h}_{1}\right|_{\infty}\le M_{0}$ and $\left\Vert h-\hat{h}_{1}\right\Vert _{2}\le M_{1}$
for some $M_{1}<\infty$. \label{3(ii)}

(iii) $\left|\Gamma_{z}[f]\right|_{\infty}\le M_{0}$ and $\Gamma_{z}[f](\cdot,a)\in\mathcal{W}^{\gamma,\infty}(\mathcal{X})$
for all $a\in\mathcal{A}$ and $\left\{ f:\left|f\right|_{\infty}\le M_{0}\right\} $.\label{3(iii)}

(iv) The candidate class of functions $\mathcal{F}$ is such that
$\left|f\right|_{\infty}\le M_{0}$ for all $f\in\mathcal{F}.$ Additionally,
consider the non-parametric estimation problem\\
 $\hat{f}=\argmin_{\tilde{f}\in\mathcal{F}}n^{-1}\sum_{i=1}^{n}(y_{i}-\tilde{f}(a_{i},x_{i}))^{2}$,
where $y_{i}$ is compactly supported and $\mathbb{E}[y_{i}\vert a_{i},x_{i}]=f(a_{i},x_{i})$
for some $f\in\mathcal{W}^{\gamma,\infty}(\mathcal{X})$. Then, uniformly
over all $f\in\mathcal{W}^{\gamma,\infty}(\mathcal{X})$, $\mathbb{E}\left[\left\Vert f-\hat{f}\right\Vert _{2}\right]\le Cn^{-c}$
for constants $C<\infty$, $c>0$ independent of $n$, but which may
depend on $M,M_{0},\gamma$. \label{3(iv)}

\end{asm3}

Assumption 3(i) is a standard requirement in non-parametric estimation.
The assumption of $\gamma$-H\"{o}lder continuity is taken from \citet{farrell2021deep}.
Assumption 3(ii) is a mild condition on the initialization $\hat{h}_{1}$.
Assumption 3(iii), which is novel to this paper, is a crucial smoothness
condition requiring the operator $\Gamma_{z}[\cdot](\cdot,a)$ to
map all bounded $f$ onto $\mathcal{W}^{\gamma,\infty}(\mathcal{X})$.
In Online Appendix \ref{subsec:Smoothness-properties-of}, we show
that this is satisfied if $z(a,\cdot)$ and $K(x^{\prime}\vert a,\cdot)$
are $\gamma$-H\"{o}lder continuous.

Assumption 3(iv) is a high-level condition on the machine learning
(ML) method $\mathcal{F}$. The requirement of bounded $f$ implies
that the ML method cannot diverge in the $l_{\infty}$ sense, see
\citet{farrell2021deep} for a discussion of this in the context of
multi-layer perceptrons (MLPs). The second part of Assumption 3(iv)
implies that the ML method is able to non-parametrically approximate
all functions in $\mathcal{W}^{\gamma,\infty}(\mathcal{X})$ at the
rate of at least $n^{-c}$. Most ML methods are proven to satisfy
this. Consider, for instance, the class $\mathcal{F}$ of MLPs of
width $W$ and depth $L$; MLPs and, more generally, Neural Networks
are widely used in RL. The results of \citet{farrell2021deep} imply
that for $W\asymp n^{\frac{d}{2(\gamma+d)}}\ln^{2}n$ and $L\asymp\ln n$,
\[
\sup_{f\in\mathcal{W}^{\gamma,\infty}(\mathcal{X})}\mathbb{E}\left[\left\Vert f-\hat{f}\right\Vert _{2}\right]\le C\left\{ n^{-\frac{\gamma}{2(\gamma+d)}}\ln^{4}n+\sqrt{\frac{\ln\ln n}{n}}\right\} .
\]
Thus, Assumption 3(iv) is satisfied for MLPs. See \citet{biau2012analysis}
for related results on random forests. 

Assumptions 3(iii) and 3(iv) imply that one can estimate $\Gamma[f]$
for any $\left|f\right|_{\infty}\le M_{0}$ at the $n^{-c}$ rate,
i.e., $\sup_{j}\mathbb{E}\left[\left\Vert \Gamma_{z}[\hat{h}_{j-1}]-\hat{h}_{j}\right\Vert _{2}\right]\le Cn^{-c}.$
Combined with (\ref{eq:error propagation}), this proves: 

\begin{thm} \label{Theorem 3} Suppose Assumptions 3(i) to 3(iv)
hold. Then, for all $n$ large enough,
\[
\mathbb{E}\left[\left\Vert h-\hat{h}_{J}\right\Vert _{2}\right]\le\frac{C(1-\beta^{J-1})}{1-\beta}n^{-c}+M_{1}\beta^{J-1}.
\]
\end{thm}

The first term in the expression for $\mathbb{E}\left[\left\Vert h-\hat{h}_{J}\right\Vert _{2}\right]$
from Theorem \ref{Theorem 3} is the statistical rate of estimation
of $h$. The second term is the numerical error, which is seen to
decline exponentially with the number of iterations $J$. Setting
$J\asymp\ln n$ will ensure the numerical error is smaller than the
statistical rate of convergence. The number of iterations can be further
reduced using a good initialization that makes $M_{1}$ small. For
instance, $\hat{h}_{1}$ can be the linear semi-gradient estimator.
This is fast to compute and ensures $M_{1}=o_{p}(1)$. Incidentally,
Theorem \ref{Theorem 3} justifies the use of neural networks for
batch RL; to the best of our knowledge this appears to be new even
in the RL literature. 

Turning to estimation of $\hat{g}$, we again assume cross-fitting
is employed as in Theorem \ref{Theorem 2}, i.e., $\hat{\eta}$ is
computed from one half of the data, and $\hat{g}$ is computed using
AVI on the other half, taking $\hat{\eta}$ as given.

\begin{asm4} (i) Assumptions 3(i) - 3(iv) hold after replacing
$(h,\hat{h}_{1},\Gamma_{z}[\cdot])$ with $(g,\hat{g}_{1},\Gamma_{e}[\cdot])$.\label{4(i)}

(ii) $\hat{g}$ is estimated from a cross-fitting procedure. The conditional
choice probability function satisfies $\eta(a,\cdot)\in\mathcal{W}^{\gamma,\infty}(\mathcal{X})$
for all $a$, and $\eta(a,x)\ge\delta>0$, where $\delta$ is independent
of $a,x$. Additionally, $\left\Vert \eta-\hat{\eta}\right\Vert _{2}^{2}=o_{p}(n^{-1/2})$
and with probability approaching one, $\hat{\eta}(a,\cdot)\in\mathcal{W}^{\gamma,\infty}(\mathcal{X})$
for all $a$. \label{4(ii)}\end{asm4}

Assumption 4(ii) is similar to Assumption 2(vi) with the additional
requirement that $\eta$ and $\hat{\eta}$ be $\alpha$-H\"{o}lder
continuous, the latter with probability approaching one. Note that
$\hat{\eta}$ would be $\alpha$-H\"{o}lder continuous if the non-parametric
estimator consistently estimates not only $\eta$ but also its first
$\alpha$ derivatives. 

Let $\tilde{g}$ denote the fixed point of $\tilde{\Gamma}_{e}[f](a,x):=\beta\mathbb{E}[e(a^{\prime},x^{\prime};\hat{\eta})+f(a^{\prime},x^{\prime})\vert a,x]$.
Decompose 
\[
\left\Vert g-\hat{g}_{J}\right\Vert _{2}\le\left\Vert \tilde{g}-\hat{g}_{J}\right\Vert _{2}+\left\Vert g-\tilde{g}\right\Vert _{2}.
\]
Since $\hat{\eta}(a,\cdot)\in\mathcal{W}^{\gamma,\infty}(\mathcal{X})$,
the first term, $\left\Vert \tilde{g}-\hat{g}_{J}\right\Vert _{2}$,
can be bounded by a similar rate as in Theorem \ref{Theorem 3} (recall
that we take $\hat{\eta}$ as given under cross-fitting). As for the
second term, $\left\Vert g-\tilde{g}\right\Vert _{2}$, observe that
under Assumption 4,
\begin{align*}
\left\Vert g-\tilde{g}\right\Vert _{2} & =\beta\left\Vert \mathbb{E}[\ln\eta(a,x)-\ln\hat{\eta}(a^{\prime},x^{\prime})\vert a,x]\right\Vert _{2}\lesssim\left\Vert \eta-\hat{\eta}\right\Vert _{2}^{2},
\end{align*}
where the `$\lesssim$' holds with probability approaching one, and
uses $\partial_{\eta}\mathbb{E}\left[\left.\ln\eta(a^{\prime},x^{\prime})\right|x^{\prime}\right]=0$,
see the discussion following (\ref{eq:orthogonality for =00005Ceta}).
Since $\left\Vert \eta-\hat{\eta}\right\Vert _{2}^{2}=o_{p}(n^{-1/2})$,
the above proves:

\begin{thm} \label{Theorem 4} Suppose Assumptions 4(i) to 4(ii)
hold. Then, with probability approaching one, 
\[
\left\Vert g-\hat{g}_{J}\right\Vert _{2}\le\frac{C(1-\beta^{J-1})}{1-\beta}n^{-c}+M_{1}\beta^{J-1}+o(n^{-1/2}).
\]
\end{thm}

\subsection{Estimation of structural parameters\label{subsec:Continuous-states-and}}

Estimation of $h(a,x)$ and $g(a,x)$ is inherently non-parametric.
This is because $h(a,x)$ and $g(a,x)$ are functions of two non-parametric
terms: the choice probabilities $\eta(a,x)$, and the transition densities
$K(x^{\prime}\vert a,x)$. The TD estimators implicitly take both
into account. Under discrete states, the first-step estimation error
does not affect the rates of convergence of structural parameters
(see Online Appendix \ref{subsec:Discrete-states}). With continuous
states, however, the first-step non-parametric estimation error does
affect the estimation of $\theta^{*}$ to a first order when using
the PMLE criterion (\ref{eq:estimate of theta}). This is because
the estimates for $K(x^{\prime}\vert a,x)$ and $\theta^{*}$ are
not orthogonal under a PMLE, which extends to the lack of orthogonality
between the estimates for $h(a,x),g(a,x)$ and $\theta^{*}$. Consequently,
the PMLE estimator with plug-in values of $\hat{h}$ and $\hat{g}$
will converge at slower than parametric rates.

We can recover $\sqrt{n}$-consistent estimation by adjusting the
PMLE criterion to account for the first-stage estimation of $h$ and
$g$. Denote $(\tilde{{\bf a}},\tilde{{\bf x}}):=(a,x,a^{\prime},x^{\prime})$
and $m(a,x;\theta,h,g):=\partial_{\theta}\ln\pi(a,x;\theta,h,g)$,
where
\[
\pi(a,x;\theta,h,g):=\frac{\exp\left\{ h(a,x)\theta+g(a,x)\right\} }{\sum_{\breve{a}}\exp\left\{ h(\breve{a},x)\theta+g(\breve{a},x))\right\} }.
\]
The PMLE estimator with plug-in estimates solves $\mathbb{E}_{n}[m(a,x;\theta,\hat{h},\hat{g})]=0$,
but this is not robust to estimation of $h,g$. Let
\[
V(a,x;\theta,h,g):=h(a,x)\theta+g(a,x)
\]
denote the continuation value given $(a,x)$. Also, define $\lambda(a,x;\theta)$
as the fixed point of the `backward' dynamic programming operator
\begin{equation}
\Gamma^{\dagger}[f](a,x):=-\psi(a,x;\theta,h,g)+\beta\mathbb{E}\left[f(a^{-\prime},x^{-\prime})\vert a,x\right],\label{eq:adjustment term 1}
\end{equation}
where $(a^{-\prime},x^{-\prime})$ denotes the past actions and states
preceding $(a,x)$, and 
\begin{equation}
\psi(a,x;\theta,h,g)=-\sum_{\tilde{a}\in\mathcal{A}}\left\{ \mathbb{I}(a=\tilde{a})-\pi(\tilde{a},x;\theta,h,g)\right\} h(\tilde{a},x).\label{eq:derivate wrt h}
\end{equation}
In Online Appendix \ref{subsec:Verification-of-the}, we show that
the locally robust moment corresponding to \linebreak $m(a,x;\theta,h,g)$
is given by 
\begin{align}
\zeta(\tilde{{\bf a}},\tilde{{\bf x}};\theta,h,g) & :=m(a,x;\theta,h,g)-\lambda(a,x;\theta)\left\{ z(a,x)\theta+\beta e(a^{\prime},x^{\prime};\eta)\right.\nonumber \\
 & +\left.\beta V(a^{\prime},x^{\prime};\theta,h,g)-V(a,x;\theta,h,g)\right\} \label{eq:doubly robust moment - non parametric}
\end{align}

The construction of the locally robust moment (\ref{eq:doubly robust moment - non parametric})
is new. But it is infeasible since $\lambda(\cdot),h(\cdot)$ and
$g(\cdot)$ are unknown. However, we can replace these quantities
with consistent estimates. We have already described how to estimate
$h(\cdot),g(\cdot)$. Let $\tilde{\theta}$ denote the plug-in estimator
of $\theta^{*}$ using (\ref{eq:estimate of theta}); note that $\tilde{\theta}$
consistently estimates $\theta^{*}$ but is not efficient. An estimator,
$\hat{\lambda}(\cdot)$, of $\lambda(\cdot)$ can then be obtained
by applying either of our TD estimation methods on (\ref{eq:adjustment term 1}),
with $\tilde{\theta},\hat{h},\hat{g}$ plugged in in place of $\theta,h,g$.
For instance, using AVI, we could obtain iterative approximations
$\{\hat{\lambda}^{(j)},j=1,\dots,J\}$ for $\lambda(\cdot)$ using
\[
\hat{\lambda}^{(j+1)}=\argmin_{f\in\mathcal{F}}\mathbb{E}_{n}\left[\left\Vert -\psi(a,x;\tilde{\theta},\hat{h},\hat{g})+\beta\hat{\lambda}^{(j)}(a^{-\prime},x^{-\prime})-f(a,x)\right\Vert ^{2}\right].
\]
Plugging in $\hat{\lambda}(\cdot),\hat{h},\hat{g}$ into (\ref{eq:doubly robust moment - non parametric}),
we obtain the feasible locally robust moment
\begin{align}
\zeta_{n}(\tilde{{\bf a}},\tilde{{\bf x}};\theta,\hat{h},\hat{g}) & :=m(a,x;\theta,\hat{h},\hat{g})-\hat{\lambda}(a,x;\tilde{\theta})\left\{ z(a,x)\tilde{\theta}+\beta e(a^{\prime},x^{\prime};\hat{\eta})\right.\nonumber \\
 & \hfill\quad+\left.\beta V(a^{\prime},x^{\prime};\tilde{\theta},\hat{h},\hat{g})-V(a,x;\tilde{\theta},\hat{h},\hat{g})\right\} .\label{eq:feasible locally robust moment}
\end{align}
Based on the above, we can obtain a locally robust estimator, $\hat{\theta}$,
as the solution to $\mathbb{E}_{n}[\zeta_{n}(\tilde{{\bf a}},\tilde{{\bf x}};\theta,\hat{h},\hat{g})]=0$.
We recommend obtaining this estimate using cross-fitting, see Section
\ref{subsec:Non-parametric-analysis} for more details. Compared to
the plug-in estimate (\ref{eq:estimate of theta}), our locally robust
estimator requires computation of $\lambda(\cdot)$, but even this
can be avoided when linear semi-gradients are used for estimating
$h,g$ (see Online Appendix \ref{subsec:Construction-of-the}). Solving
$\mathbb{E}_{n}[\zeta_{n}(\tilde{{\bf a}},\tilde{{\bf x}};\theta,\hat{h},\hat{g})]=0$
is also computationally easy; the correction term is a constant, and
$\partial_{\theta}\zeta_{n}(\tilde{{\bf a}},\tilde{{\bf x}};\theta,\hat{h},\hat{g})=\partial_{\theta}m(a,x;\theta,\hat{h},\hat{g})$
is negative definite (as the PMLE criterion is concave), so solving
this is no harder than solving the original moment condition without
a correction term.

\subsubsection{$\sqrt{n}$-consistent estimation\label{subsec:Non-parametric-analysis}}

We focus on the general construction of the locally robust estimator,
$\hat{\theta}$, using (\ref{eq:feasible locally robust moment}).
As mentioned in the previous sub-section, we advocate cross-fitting
to obtain this estimator. This entails computing $\tilde{\theta},\hat{\lambda},\hat{h},\hat{g}$
using one half of the sample, say $\mathcal{N}_{2}$, and plugging
them into the locally robust moment to compute $\hat{\theta}$ as
the solution to $\mathbb{E}_{n}^{(1)}[\zeta_{n}(\tilde{{\bf a}},\tilde{{\bf x}};\theta,\hat{h},\hat{g})]=0$,
where $\mathbb{E}_{n}^{(1)}[\cdot]$ is the empirical expectation
using only observations from the other half of the sample $\mathcal{N}_{1}$.
Following the analysis of \citet{CEINR2018}, it can be shown that
this estimator has the same limiting distribution as the one based
on (\ref{eq:doubly robust moment - non parametric}). In particular,
it achieves parametric rates of convergence. We state the regularity
conditions below (for the remainder of this section we allow $\theta^{*}$
to be vector valued):

\begin{asm5} (i) $\theta^{*}\in\Theta$, a compact set, and $\mathbb{E}\left[\zeta(\tilde{{\bf a}},\tilde{{\bf x}};\theta,h,g)\right]=0\iff\theta=\theta^{*}$.

(ii) There exists a neighborhood, $\mathcal{N}$, of $\theta^{*}$
such that uniformly over $\theta\in\mathcal{N}$ and for $\parallel\!\tilde{h}-h\!\parallel$,
$\parallel\!\tilde{g}-g\!\parallel$ sufficiently small, $\parallel\negmedspace\partial_{\theta}\zeta(\tilde{{\bf a}},\tilde{{\bf x}};\theta,\tilde{h},\tilde{g})-\partial_{\theta}\zeta(\tilde{{\bf a}},\tilde{{\bf x}};\theta^{*},\tilde{h},\tilde{g})\!\parallel\le d(\tilde{{\bf a}},\tilde{{\bf x}})\left\Vert \theta-\theta^{*}\right\Vert $,
where $\mathbb{E}[d(\tilde{{\bf a}},\tilde{{\bf x}})]<\infty$. Furthermore,
$G:=\mathbb{E}\left[\partial_{\theta}\zeta(\tilde{{\bf a}},\tilde{{\bf x}};\theta^{*},h,g)\right]$
is invertible.

(iii) $\forall\ \theta\in\mathcal{N}$, $\left\Vert \lambda(\cdot,\cdot;\theta)-\lambda(\cdot,\cdot;\theta^{*})\right\Vert _{2}\le d(a,x)\left\Vert \theta-\theta^{*}\right\Vert $,
where $\mathbb{E}[d(a,x)]<\infty$.

(iv) $\left\Vert \hat{h}-h\right\Vert _{2}=o_{p}(n^{-1/4})$ and $\left\Vert \hat{g}-g\right\Vert _{2}=o_{p}(n^{-1/4})$.

(v) $\sup_{\theta}\left\Vert \hat{\lambda}(\cdot,\cdot;\theta)-\lambda(\cdot,\cdot;\theta)\right\Vert _{2}=o_{p}(n^{-1/4})$.
\end{asm5}

Assumption 5(i) implies $\theta^{*}$ is identified. Assumption 5(ii)
is a mild regularity condition that is similar to Assumption 5 in
\citet{CEINR2018}. Assumption 5(iii) is satisfied if $\psi(\cdot)$
is uniformly Lipschitz continuous in $\theta$. Assumption 5(iv) follows
from Theorems \ref{Theorem 1}-\ref{Theorem 4} under suitable conditions
on the degree of smoothness of $h,g$. For instance, it is satisfied
for AVI with neural networks if $\gamma\ge d$. Assumption 5(v) requires
$\lambda$ to be estimable at faster than $n^{-1/4}$ rates as well.
If $h,g$ are known, it is straightforward to derive $n^{-1/4}$ rates
as in Theorems \ref{Theorem 1}-\ref{Theorem 4}. For plug-in estimation,
we would need additional assumptions. For instance, we could employ
three-way sample splitting (the first third of the sample is used
to compute $\hat{h},\hat{g}$, which are then plugged into the second
third of the sample to estimate $\lambda$), in which case Assumption
5(v) holds under the regularity condition
\[
\mathbb{E}\left[\sup_{\theta}\left\Vert \psi(a,x;\theta,\hat{h},\hat{g})-\psi(a,x;\theta,h,g)\right\Vert \right]\le\left\Vert \hat{h}-h\right\Vert _{2}+\left\Vert \hat{g}-g\right\Vert _{2}.
\]
We refer to \citet{chernozhukov2018learning} for more details on
three-way splitting. 

We are now ready to state the main result of this section.

\begin{thm} \label{Theorem 5}Suppose that either Assumptions 1,
2 \& 5 or 3-5 hold. Then the estimator, $\hat{\theta}$ of $\theta^{*}$,
based on (\ref{eq:feasible locally robust moment}) is $\sqrt{n}$-consistent,
and satisfies
\[
\sqrt{n}(\hat{\theta}-\theta^{*})\Longrightarrow N(0,V),
\]
where $V=\left(G^{\intercal}\Omega^{-1}G\right)^{-1}$, with $\Omega:=\mathbb{E}\left[\zeta(\tilde{{\bf a}},\tilde{{\bf x}};\theta^{*},h,g)\zeta(\tilde{{\bf a}},\tilde{{\bf x}};\theta^{*},h,g)^{\intercal}\right]$.
\end{thm}

The proof of the above theorem follows by verifying the regularity
conditions of \citet[Theorem 9]{CEINR2018}. Since these are more
or less straightforward to verify given our previous results, we omit
the details. For inference on $\hat{\theta}$, the covariance matrix
$V$ can be estimated as $\hat{V}=\left(\hat{G}^{\intercal}\hat{\Omega}^{-1}\hat{G}\right)^{-1},$
where
\begin{align*}
\hat{G} & =\frac{1}{n(T-1)}\sum_{i=1}^{n}\sum_{t=1}^{T-1}\frac{\partial\zeta_{n}(a_{it},x_{it},a_{it+1},x_{it+1};\hat{\theta},\hat{h},\hat{g})}{\partial\theta^{\intercal}},\quad\textrm{and}\\
\hat{\Omega} & =\frac{1}{n(T-1)}\sum_{i=1}^{n}\sum_{t=1}^{T-1}\zeta_{n}(a_{it},x_{it},a_{it+1},x_{it+1};\hat{\theta},\hat{h},\hat{g})\zeta_{n}(a_{it},x_{it},a_{it+1},x_{it+1};\hat{\theta},\hat{h},\hat{g})^{\intercal}.
\end{align*}
\citet{CEINR2018} provide conditions under which $\hat{V}$ is consistent
for $V$; these are straightforward to translate to our context but
we omit them for brevity. Alternatively, one could employ the bootstrap. 

\section{Estimation of dynamic discrete games \label{sec:Estimation-of-dynamic}}

So far we have considered applications of our algorithm to single-agent
models, where we have argued that there are substantial computational
and statistical gains from using our procedure. These gains are magnified
when extended to the estimation of dynamic discrete games.

Our setup is based on \citet{AguirregabiriaMira2010}. We assume a
single Markov-Perfect-Equilibrium setup where multiple players $i=1,2,\dots,N$
play against each other in $M$ different markets. Each player chooses
among $A$ mutually exclusive actions to maximize an infinite horizon
objective. We observe the state of play for $T$ time periods, where
both $T$ and the number of players $N$ are assumed fixed while $M\to\infty$.
Utility of the players in any time period is affected by the actions
of all the others, and a set of states $x$ that are observed by all
players. The per-period utility is denoted by $z_{i}(a_{i},a_{-i},x)^{\intercal}\theta^{*}+e_{i}$
for each player $i$, for some finite-dimensional parameter $\theta^{*}$,
where $a_{i}$ denotes player $i$'s action, $a_{-i}$ denotes the
actions of all other players and $e_{i}$ is an idiosyncratic error
term. As in Section \ref{sec:Temporal-difference-learning-alg}, we
take $\theta^{*}$ to be scalar to simplify the notation; all our
results continue to hold for vector valued $\theta$, as long as each
dimension is treated separately. Evolution of the states in the next
period is determined by the transition density $K(x^{\prime}\vert a,x)$
where ${\bf a}:=(a_{1},\dots,a_{N})$ denotes the actions of all the
players. We denote by $x_{tm}$ the state at market $m$ in time period
$t,$ by ${\bf a}_{tm}$ the vector of actions by all players at time
$t$ in market $m$, and by $a_{itm}$ the action of player $i$ at
time $t$ in market $m$. We also let $P_{i}(a_{i}\vert x_{t})$ denote
the choice probability of player $i$ taking action $a_{i}$ when
the state is $x_{t}$ , and define $e(a_{i},x):=\gamma-\ln P_{i}(a_{i}\vert x)$. 

As in the single agent case, the parameters $\theta^{*}$ can be obtained
as solutions to the pseudo-likelihood function: 
\begin{equation}
Q(\theta)=\sum_{i=1}^{N}\sum_{m=1}^{M}\sum_{t=1}^{T-1}\log\frac{\exp\left\{ h_{i}(a_{itm},x_{tm})\theta+g_{i}(a_{itm},x_{tm})\right\} }{\sum_{a}\exp\left\{ h_{i}(a,x_{tm})\theta+g_{i}(a,x_{tm})\right\} },\label{eq:Game optimization problem for =00005Ctheta}
\end{equation}
where $h_{i}(.)$ and $g_{i}(.)$ are now player-specific, and given
by
\begin{align}
h_{i}(a_{i},x) & =\mathbb{E}[z_{i}(a,x)\vert a_{i},x]+\beta\mathbb{E}\left[h(a^{\prime},x^{\prime})\vert a_{i},x\right],\label{eq:Game-Recursion-1}\\
g_{i}(a_{i},x) & =\mathbb{E}\left[e(a^{\prime},x^{\prime})+\beta g(a^{\prime},x^{\prime})\vert a_{i},x\right].\nonumber 
\end{align}
In contrast to (\ref{eq:Recursion}) in the single agent case, the
expectation now averages over the actions of the other players as
well.

Previous literature estimates $\theta^{*}$ using a two-step procedure:
In the first step, the conditional choice probabilities $P_{i}(a_{i}\vert x_{t})$
are calculated non-parametrically. These, along with estimates of
$K(.)$ are then used to recursively solve for $h_{i}(.)$ and $g_{i}(.)$
using equation (\ref{eq:Game-Recursion-1}). This step requires integrating
over the actions of all the other players. Finally, given the estimated
values of $h_{i}(.)$ and $g_{i}(.)$, the parameter $\theta$ is
estimated through either pseudo-likelihood (\citealp{AguirregabiriaMira2007})
or minimum distance estimation (\citealp{PesendorferSchmidt-Dengler2008}). 

By contrast, our algorithm is a straightforward extension of those
suggested in earlier sections for single-agent models. Let $\hat{\eta}_{i}(a_{i},x)$
denote a non-parametric estimate of the choice probabilities for player
$i$ and denote $e(a_{i},x;\hat{\eta}_{i})=\gamma-\ln\hat{\eta}_{i}(a_{i},x)$.
We apply our TD methods on the recursion (\ref{eq:Game-Recursion-1}),
separately for each player. The linear semi-gradient estimates are
given by $\hat{h}_{i}(a_{i},x)=\phi(a_{i},x)^{\intercal}\hat{\omega}_{i}$
and $\hat{g}_{i}(a_{i},x)=r(a_{i},x)^{\intercal}\hat{\xi}$, where
\begin{align}
\hat{\omega}_{i} & =\mathbb{E}_{n}\left[\phi(a_{i},x)\left(\phi(a_{i},x)-\beta\phi(a_{i}^{\prime},x^{\prime})\right)^{\intercal}\right]^{-1}\mathbb{E}_{n}\left[\phi(a_{i},x)z_{i}(a_{i},a_{-i},x)\right],\nonumber \\
\hat{\xi}_{i} & =\mathbb{E}_{n}\left[r(a_{i},x)\left(r(a_{i},x)-\beta r(a_{i}^{\prime},x^{\prime})\right)^{\intercal}\right]^{-1}\mathbb{E}_{n}\left[\beta r(a_{i},x)e(a_{i}^{\prime},x^{\prime};\hat{\eta}_{i})\right],\label{eq:Game -value fn updates-1}
\end{align}
and for any function $f(\cdot)$, we define 
\begin{equation}
\mathbb{E}_{n}[f({\bf a},x,{\bf a}^{\prime},x^{\prime})]:=\frac{1}{M(T-1)}\sum_{m=1}^{M}\sum_{t=1}^{T-1}f({\bf a}_{tm},x_{tm},{\bf a}_{t+1m},x_{t+1m}).\label{eq:game empirical expectation}
\end{equation}
Similarly, the AVI iterations for $h_{i}(\cdot),g_{i}(\cdot)$ are
given by 
\begin{align}
\hat{h}_{i}^{(j+1)} & =\argmin_{f\in\mathcal{F}}\mathbb{E}_{n}\left[\left\Vert z_{i}(a_{i},a_{-i},x)+\beta\hat{h}_{i}^{(j)}(a_{i}^{\prime},x^{\prime})-f(a_{i},x)\right\Vert ^{2}\right],\nonumber \\
\hat{g}_{i}^{(j+1)} & =\argmin_{f\in\mathcal{F}}\mathbb{E}_{n}\left[\left\Vert \beta e(a_{i}^{\prime},x^{\prime};\hat{\eta}_{i})+\beta\hat{g}_{i}^{(j)}(a_{i}^{\prime},x^{\prime})-f(a_{i},x)\right\Vert ^{2}\right].\label{eq: Game -value fn updates-2}
\end{align}
If the players are symmetric ($z_{i}(a_{i},a_{-i},x)$ does not depend
on player $i$) we can obtain computationally faster and more precise
estimates by pooling across players.

Importantly, neither of the estimation strategies (\ref{eq:Game -value fn updates-1})
nor (\ref{eq: Game -value fn updates-2}) require partialling out
other players' actions, leading to a tremendous reduction of computation.
Intuitively, the procedures take expectations over other players'
actions `internally' using the empirical distribution. The non-parametric
estimates $\hat{h}_{i},\hat{g}_{i}$ can be plugged into the PMLE
criterion (\ref{eq:Game optimization problem for =00005Ctheta}) to
obtain an estimate for $\theta$ as 
\begin{align*}
\hat{\theta} & =\argmax_{\theta}\sum_{i}Q_{i}(\theta,\hat{h}_{i},\hat{g}_{i}),\ \textrm{where}\\
Q_{i}(\theta,\hat{h}_{i},\hat{g}_{i}) & :=\sum_{m=1}^{M}\sum_{t=1}^{T-1}\log\frac{\exp\left\{ \hat{h}_{i}(a_{itm},x_{tm})\theta+\hat{g}_{i}(a_{itm},x_{tm})\right\} }{\sum_{a}\exp\left\{ \hat{h}_{i}(a,x_{tm})\theta+\hat{g}_{i}(a,x_{tm})\right\} }.
\end{align*}
It is straightforward to construct a locally robust estimator for
$\theta$ in analogy with that for single-agent models. We describe
this in Online Appendix \ref{sec:Locally-robust-estimators-dynamic-games}.

By the same reasoning as in Online Appendix \ref{subsec:Discrete-states},
it possible to show that with discrete states, $h_{i}(.)$ and $g_{i}(.)$
are numerically identical to the estimates obtained by plugging in
cell estimates $\hat{P}_{j}(\cdot\vert x)$ and $\hat{K}(.)$ in (\ref{eq:Game-Recursion-1}).
This implies the psuedo-likelihood with plug-in estimates for $h(.)$
and $g(.)$ is not efficient even with discrete states, as discussed
by \citet{AguirregabiriaMira2007}. However the values of $h(.)$
and $g(.)$ can be plugged into other, more efficient objectives,
such as our locally robust estimator or the minimum distance estimator
of \citet{PesendorferSchmidt-Dengler2008}. With continuous states,
one would need to employ locally robust corrections even for the minimum
distance estimator to recover parametric rates of convergence for
$\theta$. The locally robust correction term can be constructed in
a similar way as that for the PMLE criterion. 

\section{Simulations\label{sec:Simulations}}

We run Monte Carlo simulations to test our estimation methods. We
start by presenting results for a DDC model using both the linear
semi-gradient approach, and the AVI approach with random forests as
the prediction method. For each of these methods, we compare findings
from the locally robust version of our estimators to those obtained
without correction. Our simulations for this part are based on the
firm entry problem described in \citet{AgMag2018}. 

In a second set of Monte Carlo simulations, we test our estimation
method for dynamic discrete games. For this part, we present results
from the linear semi-gradient approach using different sets of basis
functions to approximate the value function terms, and employ the
cross-validation method described in Section \ref{subsec:Tuning-parameters}
to select the preferred set of basis functions. Our simulations are
based on the dynamic firm entry game used in \citet{AguirregabiriaMira2007}. 

Finally, we run additional simulations based on the famous \citet{Rust1987}
bus engine replacement problem (see Online Appendix \ref{subsec:Bus-Engine-Replacement}).
Using this model, we also provide simulation results for a case with
permanent unobserved heterogeneity.

\subsection{Firm entry problem\label{subsec:Firm-Entry-Problem}}

Consider the following dynamic firm entry problem described in \citet{AgMag2018}.
A firm decides whether to enter ($a_{t}=1)$ or not enter ($a_{t}=0)$
in a market for $t=1,...,T$ time periods. The payoff when entering
is given by $\Pi_{t}=VP_{t}-FC_{t}-EC_{t}+\varepsilon_{t},$ where
$VP_{t}$, $FC_{t}$ and $EC_{t}$ denote the firm's variable profit,
fixed cost and entry cost, and $\varepsilon_{t}$ is a transitory
shock that follows a logistic distribution. Variable profit is given
by $VP_{t}=(\theta_{0}^{VP}+\theta_{1}^{VP}z_{1t}+\theta_{2}^{VP}z_{2t})\exp(\omega_{t})$,
where $\omega_{t}$ denotes the firm's productivity shock, and $z_{1t}$,
$z_{2t}$ are exogenous state variables affecting the price-cost margin
in the market. The fixed cost is given by $FC_{t}=\theta_{0}^{FC}+\theta_{1}^{FC}z_{3t}$,
and the entry cost is given by $EC_{t}=(\theta_{0}^{EC}+\theta_{1}^{EC}z_{4t})(1-a_{t-1})$,
where $z_{3t}$, $z_{4t}$ are further exogenous state variables,
and $a_{t-1}$ denotes the entry decision in period $t-1$ which is
an endogenous state variable. The payoff of not entering is normalized
to zero. The parameters $\theta^{*}\equiv\left\{ \theta_{0}^{VP},\theta_{1}^{VP},\theta_{2}^{VP},\theta_{0}^{FC},\theta_{1}^{FC},\theta_{0}^{EC},\theta_{1}^{EC}\right\} $
are the structural parameters of interest. The exogenous state variables
$z_{jt}$ and $\omega_{t}$ are continuous and follow AR(1) processes,
where $z_{jt}=\gamma_{0}^{j}+\gamma_{1}^{j}z_{jt-1}+e_{jt}$, and
$\omega_{t}=\gamma_{0}^{\omega}+\gamma_{1}^{\omega}\omega_{t-1}+e_{\omega t}$.
The error terms $e_{jt},e_{\omega t}$ follow normal $N(0,1)$ distributions.
The discount factor $\beta$ is $0.95$. 

To carry out the simulations, we choose values for the structural
parameters $\theta^{*}$ ($\theta_{0}^{VP}=0.5$, $\theta_{1}^{VP}=1.0$,
$\theta_{2}^{VP}=-1.0$, $\theta_{0}^{FC}=1.5$, $\theta_{1}^{FC}=1.0$,
$\theta_{0}^{EC}=1.0$, $\theta_{1}^{EC}=1.0$) and for the autoregressive
processes of $z_{jt}$ and $\omega_{t}$ ($\gamma_{0}^{j}=0.0$, $\gamma_{1}^{j}=0.6$,
$\gamma_{0}^{\omega}=0.2$, $\gamma_{1}^{\omega}=0.6$), and discretize
the exogenous state variables to obtain a transition matrix with a
6-point support following \citet{Tauchen1986}. The resulting dimension
of the state space is $2\times6^{5}=15,552.$ The discretization of
the support is for simulations only; our estimation algorithms treat
these variables as continuous and do not require any prior knowledge
of how they evolve (the knowledge of AR(1) dynamics is also not used).
We iterate on the value function to obtain the vector of choice probabilities
for each combination of the states, and use these to derive the steady-state
distribution of the state variables. Using the steady-state distributions,
we generate data for $3,000$ firms, with $T=2$ time periods. 

\subsubsection{Simulation results - firm entry problem\label{subsec:Simulation-Results-DDC}}

Panel A of Table \ref{tab:table 1-1-1} shows results for 1000 simulations
using the linear semi-gradient method. Panel B of Table \ref{tab:table 1-1-1}
presents results of 250 simulations using the AVI method. Each round
of the simulations begins by generating new data, where the first-period
state variables are drawn from the steady-state distribution. For
the results in Panel A, we parameterize $h(a,x)$ and $g(a,x)$ using
a second order polynomial in the state variables.\footnote{For the $\omega$'s relating to parameters $\theta_{0}^{VP},\theta_{1}^{VP},\theta_{2}^{VP},\theta_{0}^{FC},\theta_{1}^{FC}$,
and for $\xi$, the terms include a constant, the exogenous state
variables and their interactions up to a second order, the player's
binary choice $a_{t}$ and the interactions of $a_{t}$ with all terms
in the exogenous states. Given the set-up of the model, we treat the
interactions $z_{1t}\exp(\omega_{t})$ and $z_{2t}\exp(\omega_{t})$
instead of $z_{1t}$ and $z_{2t}$ themselves as state variables.
In addition to the terms included above, the $\omega$'s relating
to parameters $\theta_{0}^{EC}$ and $\theta_{1}^{EC}$ also contain
the terms $(1-a_{t-1})$ and $(1-a_{t-1})z_{4t}$, respectively. The
total number of terms included is 42 (43 for $\theta_{0}^{EC}$ and
$\theta_{1}^{EC}$).} For the results in Panel B, we approximate $h(a,x)$ and $g(a,x)$
using a random forest, where we iterate the AVI procedure 70 times
for each round of the simulations. For both the linear semi-gradient
and the AVI methods, we estimate the choice probabilities $\eta$
that enter $e(a',x';\eta)$ using a logit model where the explanatory
variables are the same as those used as basis functions in Panel A. 

We present results generated with and without the locally robust correction.
For the results without correction, we obtain estimates for $\theta^{*}$
using (\ref{eq:estimate of theta}). To generate the locally robust
estimates, we use moment equation (\ref{eq:locally robust sample moment})
for the linear semi-gradient method, and moment equation (\ref{eq:feasible locally robust moment})
for the AVI method where we employ a random forest to derive an estimate
for the $\lambda(a,x,\tilde{\theta})$ term contained in the locally
robust moment. As before, the AVI method for estimation of $\lambda(\cdot)$
iterated 70 times. We also use the sample splitting method described
in Section \ref{subsec:Non-parametric-analysis} for the locally robust
estimators, and we obtain the final $\hat{\theta}$ as weighted average
of the $\theta^{*}$ estimates from the two samples.\footnote{We run our simulations on a MacBook Pro with an M1 chip and 16 GB
of RAM. The computation time for one estimation round is about 4 seconds
for the linear semi-gradient method without locally robust correction,
and 14 seconds with locally robust correction. For the AVI method,
the computation time is about 90 seconds without locally robust correction,
and 315 seconds with locally robust correction.} 

Panel A of Table \ref{tab:table 1-1-1} shows that our linear semi-gradient
results are closely centered around the true values. While the locally
robust estimator should in theory be preferable, we find that it produces
results which are similar and if anything have slightly higher mean
squared error than the non-robust version of our algorithm. In fact,
there is very little bias, and the distribution of the estimates under
the non-robust version is already very close to normal, see Online
Appendix \ref{sec:Appendix:Additional-simulations} for the plots
of the finite sample distributions. On the other hand, locally robust
methods are associated with higher variance due to sample splitting.
So there appears to be no gain from the locally robust method here. 

Panel B of Table \ref{tab:table 1-1-1} shows that the AVI method
produces estimates with slightly higher small sample bias than the
linear semi-gradient. In this case, the locally robust method is more
useful as it decreases the bias significantly, especially for estimation
of the entry cost parameters.

\begin{table}
\begin{threeparttable}

\caption{Simulations: Firm entry problem}
\label{tab:table 1-1-1}

\begin{tabular}{cccccccc}
\hline 
 &  &  & \multicolumn{1}{c}{} &  &  & \multicolumn{1}{c}{} & \tabularnewline
 &  &  & \multicolumn{2}{c}{not locally robust} &  & \multicolumn{2}{c}{locally robust}\tabularnewline
\cline{4-5} \cline{5-5} \cline{7-8} \cline{8-8} 
 &  &  &  &  &  &  & \tabularnewline
 & DGP &  & TDL & MSE &  & TDL & MSE\tabularnewline
 & (1) &  & (2) & (3) &  & (4) & (5)\tabularnewline
\hline 
\emph{A. Linear semi-gradient} &  &  &  &  &  &  & \tabularnewline
 &  &  &  &  &  &  & \tabularnewline
$\theta_{0}^{VP}$  & 0.5 &  & 0.4898 & 0.0062 &  & 0.5376 & 0.0111\tabularnewline
 &  &  & (0.0781) &  &  & (0.0983) & \tabularnewline
$\theta_{1}^{VP}$  & 1.0 &  & 0.9883 & 0.0064 &  & 1.0650 & 0.0154\tabularnewline
 &  &  & (0.0793) &  &  & (0.1060) & \tabularnewline
$\theta_{2}^{VP}$  & -1.0 &  & -0.9908 & 0.0070 &  & -1.0675 & 0.0168\tabularnewline
 &  &  & (0.0831) &  &  & (0.1107) & \tabularnewline
$\theta_{0}^{FC}$  & 1.5 &  & 1.4905 & 0.0232 &  & 1.5745 & 0.0402\tabularnewline
 &  &  & (0.1521) &  &  & (0.1862) & \tabularnewline
$\theta_{1}^{FC}$  & 1.0 &  & 0.9877 & 0.0183 &  & 1.0446 & 0.0309\tabularnewline
 &  &  & (0.1348) &  &  & (0.1703) & \tabularnewline
$\theta_{0}^{EC}$ & 1.0 &  & 0.9949 & 0.0101 &  & 1.0253 & 0.0137\tabularnewline
 &  &  & (0.1002) &  &  & (0.1141) & \tabularnewline
$\theta_{1}^{EC}$ & 1.0 &  & 0.9978 & 0.0273 &  & 1.0529 & 0.0395\tabularnewline
 &  &  & (0.1654) &  &  & (0.1917) & \tabularnewline
\hline 
\emph{B. AVI} &  &  &  &  &  &  & \tabularnewline
 &  &  &  &  &  &  & \tabularnewline
$\theta_{0}^{VP}$  & 0.5 &  & 0.4183 & 0.0134 &  & 0.3903 & 0.0200\tabularnewline
 &  &  & (0.0823) &  &  & (0.0896) & \tabularnewline
$\theta_{1}^{VP}$  & 1.0 &  & 1.1043 & 0.0174 &  & 1.0608 & 0.0120\tabularnewline
 &  &  & (0.0810) &  &  & (0.0914) & \tabularnewline
$\theta_{2}^{VP}$  & -1.0 &  & -1.1080 & 0.0189 &  & -1.0663 & 0.0145\tabularnewline
 &  &  & (0.0856) &  &  & (0.1006) & \tabularnewline
$\theta_{0}^{FC}$  & 1.5 &  & 1.5453 & 0.0280 &  & 1.4471 & 0.0334\tabularnewline
 &  &  & (0.1614) &  &  & (0.1751) & \tabularnewline
$\theta_{1}^{FC}$  & 1.0 &  & 1.1209 & 0.0337 &  & 1.0707 & 0.0319\tabularnewline
 &  &  & (0.1384) &  &  & (0.1645) & \tabularnewline
$\theta_{0}^{EC}$ & 1.0 &  & 1.1388 & 0.0314 &  & 1.0430 & 0.0241\tabularnewline
 &  &  & (0.1104) &  &  & (0.1495) & \tabularnewline
$\theta_{1}^{EC}$ & 1.0 &  & 1.2700 & 0.1145 &  & 1.1466 & 0.0847\tabularnewline
 &  &  & (0.2043) &  &  & (0.2519) & \tabularnewline
\hline 
\end{tabular}

\footnotesize

Notes: The table reports results with 3000 firms. Panel A is based
on 1000 simulations, Panel B on 250 simulations. Column (1) shows
the true parameter values in the model. Columns (2) and (4) report
the empirical mean and standard deviation for the estimated parameters.
Columns (2)-(3) are based on the estimation method without correction
function, columns (4)-(5) report results using the locally robust
estimator. The mean squared errors are reported in columns (3) and
(5), respectively. 

\end{threeparttable}
\end{table}

\subsubsection{Comparison with existing methods}

We compare our findings from Section \ref{subsec:Firm-Entry-Problem}
to those that would be obtained with a standard CCP estimator where
the state variables are discretized and the transition and choice
probabilities are estimated using cell values. 

We start our simulations by generating data as outlined in Section
\ref{subsec:Firm-Entry-Problem}. We then discretize the state space
by creating dummy variables for each state variable $z_{1t},z_{2t},z_{3t},z_{4t}$
and $\exp(\omega_{t})$. To make the estimation feasible, the state
space needs to be restricted further. A common approach is to use
K-means clustering here, but this is not appropriate in the given
simulation setting where the state variables are independent by construction.
We therefore restrict the state space grid by combining variables
$z_{1t}$ and $z_{2t}$ into a binary variable taking value one whenever
both individual dummies take value one. The resulting state space
consists of four binary variables, implying 16 cells in the state
space grid. We try alternative feasible ways of discretizing the state
space, but find that these do not lead to improvements over the chosen
method. We run $1000$ simulations, and the results are shown in Table
\ref{Table D1}. 

It can be seen that, compared to the results from Table \ref{tab:table 1-1-1},
the CCP estimator leads to substantially larger bias in some of the
estimated parameters. Column (4) shows that the corresponding mean
squared errors are large and generally exceed those obtained using
our estimators. This is particularly true for parameters $\theta_{1}^{VP}$and
$\theta_{2}^{VP}$, highlighting the challenges inherent in the discretization
of continuous state spaces. Overall, the average mean squared error
increases from $0.01-0.04$ across all parameters in Table \ref{tab:table 1-1-1}
to $0.16$ in Table \ref{Table D1}. These results show that our estimation
methods lead to important improvements over existing methods in models
with as few as five continuous state variables.

\begin{table}
\begin{threeparttable}

\caption{Simulations: Firm entry problem - Comparison with standard CCP}

\label{Table D1}

\begin{tabular}{cccccc}
\hline 
 &  &  & \multicolumn{2}{c}{} & \tabularnewline
 & DGP &  & TDL & bias & MSE\tabularnewline
 & (1) &  & (2) & (3) & (4)\tabularnewline
\hline 
\emph{CCP with} &  &  &  &  & \tabularnewline
\emph{discretized state variables} &  &  &  &  & \tabularnewline
 &  &  &  &  & \tabularnewline
$\theta_{0}^{VP}$ & 0.5 &  & 0.1518 & -0.3482 & 0.1742\tabularnewline
 &  &  & (0.2303) &  & \tabularnewline
$\theta_{1}^{VP}$ & 1.0 &  & 0.7642 & -0.2358 & 0.3704\tabularnewline
 &  &  & (0.5613) &  & \tabularnewline
$\theta_{2}^{VP}$ & -1.0 &  & -0.3826 & 0.6174 & 0.4401\tabularnewline
 &  &  & (0.2428) &  & \tabularnewline
$\theta_{0}^{FC}$ & 1.5 &  & 1.4139 & -0.0861 & 0.0261\tabularnewline
 &  &  & (0.1366) &  & \tabularnewline
$\theta_{1}^{FC}$ & 1.0 &  & 0.8587 & -0.1413 & 0.0404\tabularnewline
 &  &  & (0.1431) &  & \tabularnewline
$\theta_{0}^{EC}$ & 1.0 &  & 0.7788 & -0.2212 & 0.0569\tabularnewline
 &  &  & (0.0894) &  & \tabularnewline
$\theta_{1}^{EC}$ & 1.0 &  & 0.9148 & -0.0852 & 0.0416\tabularnewline
 &  &  & (0.1854) &  & \tabularnewline
\hline 
\end{tabular}

\footnotesize

Notes: The table reports results of 1000 simulations with 3000 firms.
Column (1) shows the true parameter values in the model. Column (2)
reports the empirical mean and standard deviation for the estimated
parameters. Column (3) reports the average bias in the estimated parameters.
The mean squared errors are reported in column (4).

\end{threeparttable}
\end{table}

\subsection{Firm entry game\label{subsec:Firm-Entry-Game}}

Consider the following firm market entry game, which is similar to
that described in \citet{AguirregabiriaMira2007}. There are $i=1,...,5$
firms (players), and we observe their decision to enter ($a_{itm}=1)$
or not enter ($a_{itm}=0)$ in $m=1,...,M$ different markets for
$t=1,...,T$ time periods. Denote a firm's action by $j\in\left\{ 1,0\right\} $.
The payoff of each firm $i$ is affected by the decision of all the
other firms whether to enter, as well as firm $i$'s previous-period
entry decision. Current-period profits when entering are given by
\[
\Pi_{itm}=\theta_{RS}\ln(S_{tm})-\theta_{RN}\ln(1+\sum_{j\neq i}a_{jtm})-\theta_{FC}-\theta_{EC}(1-a_{i(t-1)m})+\varepsilon_{itm},
\]
where $\ln(S_{tm})$ is a measure of consumer market size of market
$m$ in period $t$, and $\varepsilon_{itm}$ is a transitory shock
that follows a logistic distribution. We assume that $\ln(S_{tm})$
is continuous and follows an AR(1) process, where the parameters are
the same across markets: $\ln(S_{tm})=\alpha+\lambda\ln(S_{(t-1)m})+u_{tm}.$
The error term $u_{tm}$ is assumed to follow a normal $N(0,1)$ distribution.
The profit of not entering is normalized to zero, and the discount
factor $\beta$ is $0.95$. The parameters $\theta^{*}\equiv\left\{ \theta_{RS},\theta_{RN},\theta_{FC},\theta_{EC}\right\} $
are the structural parameters of interest. The state variables in
this setting are given by the current market demand variable $S_{tm}$,
as well as the vector of all firms' previous entry decisions $a_{(t-1)m}=\left\{ a_{i(t-1)m}:i=1,...,5\right\} $. 

To carry out the simulations, we choose values for the structural
parameters $\theta^{*}$ ($\theta_{RS}=1,\theta_{RN}=1,\theta_{FC}=1.7,\theta_{EC}=1$),
and for the autoregressive process for log market size ($\alpha=1.5$,
$\lambda=0.5$). We discretize $\ln(S_{tm})$ and obtain a transition
matrix for the discretized variable with a 10-point support following
the method by \citet{Tauchen1986}. As in the Monte Carlo experiments
for the firm entry problem in Section \ref{subsec:Firm-Entry-Problem},
the discretization is for simulations of the data only and we treat
the state variables as continuous in our estimations. We then solve
for the Markov-Perfect-Equilibrium of the game. This is done by finding
the firms' conditional value functions $\nu_{j}(S_{tm},a_{(t-1)m})$
for each of the $2^{5}\times10=320$ possible combinations of the
state variables through repeated iteration, and using these to derive
the equilibrium choice probabilities $p(S_{tm},a_{(t-1)m})$. Based
on the equilibrium probabilities, we compute the equilibrium distribution
of state variables. Using the equilibrium distributions, we generate
data for $1000$ and for $3000$ markets, with $T=2$ time periods.

\subsubsection{Simulation results - firm entry game\label{subsec:Simulation-Results--}}

We present the results of $1000$ simulations based on the linear
semi-gradient method, without employing the locally robust correction.
Each round of the simulations begins by generating new data, where
the first-period state variables are drawn from the steady-state distribution.
In order to assess the sensitivity of our algorithm to different specifications
for the basis functions, we parameterize $h(a,x)$ and $g(a,x)$ using
different sets of polynomials in the state variables. In particular,
we show results where $h(a,x)$ and $g(a,x)$ are approximated using
a second, third or fourth order polynomial.\footnote{For the $\omega'$s relating to parameters $\theta_{RS},\theta_{RN},\theta_{FC}$
and for $\xi$, the terms include a constant, terms up to the second/third/fourth
order in the state variables $\ln(S_{tm})$ and $\ln(1+\sum_{j\neq i}a_{j(t-1)m})$,
the player's binary choice $a_{itm}$ and the interactions of $a_{itm}$
with all terms in the state variables. The total number of terms is
$12/20/30$. In addition to these terms, the $\omega'$s relating
to parameter $\theta_{EC}$ also contain the term $(1-a_{i(t-1)m})a_{itm}$. } For all simulations, the choice probabilities $\eta$ that enter
$e(a',x';\eta)$ are estimated using individual logit models for each
firm, where we use a third order polynomial in the state variables
as explanatory variables. We then estimate the parameters $\omega$
and $\xi$ using equation \ref{eq:Game -value fn updates-1}.\footnote{Given the symmetric set-up of the game, we pool the data across players
in this application.} Finally, we obtain estimates for the $\theta^{*}$ parameters as
the solutions to the pseudo-likelihood function \ref{eq:Game optimization problem for =00005Ctheta}.\footnote{We run our simulations on a MacBook Pro with an M1 chip and 16 GB
of RAM. The approximate computation times for one estimation round
are: 1.5 seconds (second order polynomial, 1000 markets); 3.7 seconds
(second order polynomial, 3000 markets); 2.5 seconds (third order
polynomial, 1000 markets); 6.7 seconds (third order polynomial, 3000
markets); 4 seconds (fourth order polynomial, 1000 markets); 12 seconds
(fourth order polynomial, 3000 markets).}

The results are shown in Table \ref{tab: table 3}. Panels A, B and
C present simulations for the same dataset using different basis functions
to parameterize the value function terms $h(a,x)$ and $g(a,x)$.
Column (2) shows that even with 1000 markets our algorithm produces
parameter estimates that are closely centered around the true values.
The results are generally similar across Panels A to C, although the
bias and mean squared error tends to be lowest for the second order
polynomial, and highest for the fourth order polynomial. This is especially
the case for the parameter on the number of market entrants, $\theta_{RN}$.
To assess these differences formally, we use the cross-validation
procedure described in Section \ref{subsec:Tuning-parameters}. The
procedure is applied to ten random samples of market size 1000, and
we find that the TD error criterion consistently selects the second
order polynomial as the optimal set of basis functions. This suggests
that the proposed cross-validation method provides useful guidance
when choosing a set of basis functions in practice.

In a similar version of the firm entry game, \citet{AguirregabiriaMira2007}
use the NPL algorithm and derive results comparable to ours. Our simulations
rely on slightly larger sample sizes, but note that these differences
are likely due to the fact that our algorithm implicitly estimates
the transition densities for market size and therefore does not rely
on the true transition density to be used in the estimation. For a
direct comparison of our results with those obtained using the NPL
algorithm, one would need to obtain a non-parametric estimate of the
transition density which is not trivial in practice.

As expected, columns (4) and (5) show that increasing the market size
generally reduces the small sample bias in the estimated parameters,
and leads to a fall in the empirical standard deviations. In addition
to being smaller, the mean squared error across Panels A to C is also
more similar across the three sets of basis functions. As before,
we employ the cross-validation method described above to compare these
specifications more formally. In line with the estimation results,
we find that all three polynomials now produce very similar sets of
mean squared TD error, even though the second order polynomial continues
to be the one that is selected by the criterion.\footnote{As before, we compute the TD criterion for ten random samples.}
While we view this as further evidence that the proposed cross-validation
method can provide useful guidance to choose a suitable set of basis
functions, more importantly, the small differences in the results
across panels A to C also suggest that our methods prove fairly robust
to this choice in practice.

\begin{table}
\begin{threeparttable}

\caption{Simulations: Firm entry game - Linear semi-gradient}

\label{tab: table 3}

\begin{tabular}{cccccccc}
\hline 
 &  &  &  &  &  &  & \tabularnewline
 & DGP &  & TDL & MSE &  & TDL & MSE\tabularnewline
 & (1) &  & (2) & (3) &  & (4) & (5)\tabularnewline
\hline 
\emph{A. 2nd order polynomial} &  &  & \multicolumn{2}{c}{\emph{1000 markets}} &  & \multicolumn{2}{c}{\emph{3000 markets}}\tabularnewline
 &  &  &  &  &  &  & \tabularnewline
$\theta_{RS}$ (market size) & 1.0 &  & 0.9718 & 0.0263 &  & 0.9847 & 0.0082\tabularnewline
 &  &  & (0.1598) &  &  & (0.0895) & \tabularnewline
$\theta_{RN}$ (n. of entrants) & 1.0 &  & 0.8962 & 0.2915 &  & 0.9586 & 0.0857\tabularnewline
 &  &  & (0.5301) &  &  & (0.2899) & \tabularnewline
$\theta_{FC}$ (fixed cost) & 1.7 &  & 1.7225 & 0.0870 &  & 1.6920 & 0.0262\tabularnewline
 &  &  & (0.2942) &  &  & (0.1617) & \tabularnewline
$\theta_{EC}$ (entry cost) & 1.0 &  & 1.0191 & 0.0042 &  & 1.0226 & 0.0018\tabularnewline
 &  &  & (0.0621) &  &  & (0.0353) & \tabularnewline
\hline 
\emph{B. 3rd order polynomial} &  &  & \multicolumn{2}{c}{\emph{1000 markets}} &  & \multicolumn{2}{c}{\emph{3000 markets}}\tabularnewline
 &  &  &  &  &  &  & \tabularnewline
$\theta_{RS}$ (market size) & 1.0 &  & 0.9149 & 0.0287 &  & 0.9648 & 0.0088\tabularnewline
 &  &  & (0.1466) &  &  & (0.0867) & \tabularnewline
$\theta_{RN}$ (n. of entrants) & 1.0 &  & 0.6905 & 0.3251 &  & 0.8862 & 0.0909\tabularnewline
 &  &  & (0.4792) &  &  & (0.2794) & \tabularnewline
$\theta_{FC}$ (fixed cost) & 1.7 &  & 1.7815 & 0.0806 &  & 1.7135 & 0.0249\tabularnewline
 &  &  & (0.2721) &  &  & (0.1573) & \tabularnewline
$\theta_{EC}$ (entry cost) & 1.0 &  & 1.0173 & 0.0042 &  & 1.0219 & 0.0017\tabularnewline
 &  &  & (0.0622) &  &  & (0.0353) & \tabularnewline
\hline 
\emph{C. 4th order polynomial} &  &  & \multicolumn{2}{c}{\emph{1000 markets}} &  & \multicolumn{2}{c}{\emph{3000 markets}}\tabularnewline
 &  &  &  &  &  &  & \tabularnewline
$\theta_{RS}$ (market size) & 1.0 &  & 0.8642 & 0.0358 &  & 0.9457 & 0.0101\tabularnewline
 &  &  & (0.1318) &  &  & (0.0845) & \tabularnewline
$\theta_{RN}$ (n. of entrants) & 1.0 &  & 0.5075 & 0.4206 &  & 0.8166 & 0.1067\tabularnewline
 &  &  & (0.4222) &  &  & (0.2705) & \tabularnewline
$\theta_{FC}$ (fixed cost) & 1.7 &  & 1.8338 & 0.0810 &  & 1.7337 & 0.0245\tabularnewline
 &  &  & (0.2513) &  &  & (0.1530) & \tabularnewline
$\theta_{EC}$ (entry cost) & 1.0 &  & 1.0159 & 0.0041 &  & 1.0213 & 0.0017\tabularnewline
 &  &  & (0.0620) &  &  & (0.0352) & \tabularnewline
\hline 
\end{tabular}

\footnotesize

Notes: The table reports results for 1000 simulations. Panels A, B
and C use different sets of basis functions to parameterize $h(a,x)$
and $g(a,x)$. Column (1) shows the true parameter values in the model.
Columns (2) and (4) report the empirical mean and standard deviations
for the estimated parameters, based on a sample of 1000 and 3000 markets,
respectively. The mean squared errors are reported in columns (3)
and (5). All results are based on the estimation method without correction
function.

\end{threeparttable}
\end{table}

\section{Conclusions\label{sec:Conclusions}}

We propose two new estimators for DDC models which overcome previous
computational and statistical limitations by combining traditional
CCP estimation approaches with the idea of TD learning from the RL
literature. The first approach, linear semi-gradient, makes use of
simple matrix inversion techniques, is computationally very cheap
and therefore fast. The second approach, Approximate Value Iteration,
can be easily combined with any machine learning method devised for
prediction. Unlike previous estimation methods, our methods are able
to easily handle continuous and/or high-dimensional state spaces in
settings where a finite dependence property does not hold. This is
of particular importance for the estimation of dynamic discrete games.
We also propose a locally robust estimator to account for the non-parametric
estimation in the first stage. We prove the statistical properties
of our estimator and show that it is consistent and converges at parametric
rates. A range of Monte Carlo simulations using a dynamic firm entry
problem, a dynamic firm entry game and a version of the famous \citet{Rust1987}
engine replacement problem show that the proposed algorithms work
well in practice. 

\bibliographystyle{IEEEtranSN}
\bibliography{References_1}

\newpage{}

\appendix

\section{Proofs of main results\label{sec:Appendix:A}}

In what follows we drop the functional argument $(a,x)$ when the
context is clear and denote $f^{\prime}\equiv f(a^{\prime},x^{\prime})$
for different functions $f$. 

\begin{lem} \label{Lemma 1}There exists a unique fixed point to
the operator $P_{\phi}\Gamma_{z}$. If Assumption 1(i) holds, this
fixed point is given by $\phi^{\intercal}\omega^{*}$, where $\omega^{*}$
is such that 
\begin{equation}
\mathbb{E}\left[\phi\left(z+\beta\phi^{\prime\intercal}\omega^{*}-\phi^{\intercal}\omega^{*}\right)\right]=0.\label{eq:Lemma 1 - 1}
\end{equation}
\end{lem} 
\begin{proof}
First, note that $\Gamma_{z}$, and therefore $P_{\phi}\Gamma_{z}$,
are both contraction maps with the contraction factor $\beta$. This
implies $P_{\phi}\Gamma_{z}$ has a unique fixed point. Clearly, this
fixed point must lie in the space $\mathcal{L}_{\phi}$. Let us denote
this as $\phi^{\intercal}\omega^{*}$. Now, for any function $f\in\mathcal{L}_{\phi}$,
\begin{align*}
P_{\phi}\Gamma_{z}[f]-f & =\phi^{\intercal}\mathbb{E}[\phi\phi^{\intercal}]^{-1}\mathbb{E}\left[\phi\left(z+\beta f^{\prime}\right)\right]-\phi^{\intercal}\mathbb{E}[\phi\phi^{\intercal}]^{-1}\mathbb{E}[\phi f]\\
 & =\phi^{\intercal}\mathbb{E}[\phi\phi^{\intercal}]^{-1}\mathbb{E}\left[\phi\left(z+\beta f^{\prime}-f\right)\right].
\end{align*}
Since $\phi^{\intercal}\omega^{*}$ is the fixed point, we must have
\[
\phi^{\intercal}\mathbb{E}[\phi\phi^{\intercal}]^{-1}\mathbb{E}\left[\phi\left(z+\beta\phi^{\prime\intercal}\omega^{*}-\phi^{\intercal}\omega^{*}\right)\right]=0.
\]
But $\phi$ is linearly independent and $E[\phi\phi^{\intercal}]^{-1}$
is non-singular, by Assumption 1(i). Hence it must be the case that
$\mathbb{E}\left[\phi\left(z+\beta\phi^{\prime\intercal}\omega^{*}-\phi^{\intercal}\omega^{*}\right)\right]=0.$ 
\end{proof}
For the next lemma, we use the following definition: a square, possibly
asymmetric, matrix $A$ is said to be negative definite with the coefficient
$\bar{\lambda}(A)$ if 
\[
\sup_{\vert w\vert=1}w^{\intercal}Aw\le\bar{\lambda}(A)<0.
\]
For a symmetric negative-definite matrix, $\bar{\lambda}(A)=\textrm{maxeig}(A)$,
where $\textrm{maxeig}(\cdot)$ is the maximal eigenvalue. We can
similarly define a positive-definite matrix with the coefficient $\underline{\lambda}(A)$.
If the latter is symmetric, then $\underline{\lambda}(A)=\textrm{mineig}(A)$.
Note that under our definition, if $A$ is negative definite, it is
also invertible. This holds even if the matrix is asymmetric, see
e.g \citet{johnson1970positive}.

\begin{lem} \label{Lemma 2}Under Assumption 1(i), the matrix $A:=\mathbb{E}\left[\phi\left(\beta\phi^{\prime}-\phi\right)^{\intercal}\right]$
is negative definite with $\bar{\lambda}(A)\le-(1-\beta)\underline{\lambda}(\mathbb{E}[\phi\phi^{\intercal}])$,
and is therefore invertible. \end{lem}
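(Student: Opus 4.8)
The plan is to bound the (asymmetric) quadratic form $w^{\intercal}Aw$ directly for every unit vector $w$ and show it never exceeds $-(1-\beta)\underline{\lambda}(\mathbb{E}[\phi\phi^{\intercal}])$, which is strictly negative under Assumption 1(i) together with $\beta<1$. First I would write $A=\beta\,\mathbb{E}[\phi\phi^{\prime\intercal}]-\mathbb{E}[\phi\phi^{\intercal}]$ and, fixing $w$ with $\vert w\vert=1$, introduce the scalar random variables $u:=w^{\intercal}\phi(a,x)$ and $u^{\prime}:=w^{\intercal}\phi(a^{\prime},x^{\prime})$, so that $w^{\intercal}Aw=\beta\,\mathbb{E}[uu^{\prime}]-\mathbb{E}[u^{2}]$.

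The key estimate is the cross term. By the Cauchy--Schwarz inequality, $\mathbb{E}[uu^{\prime}]\le\mathbb{E}[u^{2}]^{1/2}\mathbb{E}[u^{\prime2}]^{1/2}$, and since $\mathbb{P}$ is a stationary distribution the marginal law of $(a^{\prime},x^{\prime})$ coincides with that of $(a,x)$, whence $\mathbb{E}[u^{\prime2}]=\mathbb{E}[u^{2}]$ and therefore $\mathbb{E}[uu^{\prime}]\le\mathbb{E}[u^{2}]$. Substituting gives $w^{\intercal}Aw\le\beta\,\mathbb{E}[u^{2}]-\mathbb{E}[u^{2}]=-(1-\beta)\,\mathbb{E}[u^{2}]$. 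Finally $\mathbb{E}[u^{2}]=w^{\intercal}\mathbb{E}[\phi\phi^{\intercal}]w\ge\underline{\lambda}(\mathbb{E}[\phi\phi^{\intercal}])$ for unit $w$, so $w^{\intercal}Aw\le-(1-\beta)\,\underline{\lambda}(\mathbb{E}[\phi\phi^{\intercal}])$ uniformly over $\vert w\vert=1$.

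Under Assumption 1(i) the smallest eigenvalue of $\mathbb{E}[\phi\phi^{\intercal}]$ is bounded away from zero and $\beta<1$, so this bound is strictly negative; hence $\bar{\lambda}(A)\le-(1-\beta)\underline{\lambda}(\mathbb{E}[\phi\phi^{\intercal}])<0$, i.e.\ $A$ is negative definite in the sense defined above. Invertibility then follows from the standard fact that a square matrix with this property is nonsingular even when asymmetric, for which I would cite \citet{johnson1970positive}.

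I do not expect a serious obstacle here: the argument is short. The only points requiring care are the appeal to stationarity of $\mathbb{P}$ to equate $\mathbb{E}[u^{\prime2}]$ with $\mathbb{E}[u^{2}]$ (a maintained property of the data-generating process, used elsewhere in the paper as well), and the invocation of the asymmetric-matrix result that a negative-definite coefficient implies nonsingularity.
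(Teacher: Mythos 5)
Your proof is correct, but it takes a genuinely different route from the paper's. The paper (following Tsitsiklis and van Roy) bounds $(\omega-\omega^{*})^{\intercal}A(\omega-\omega^{*})$ by invoking the fixed point $\phi^{\intercal}\omega^{*}$ from Lemma \ref{Lemma 1}, rewriting $A(\omega-\omega^{*})$ in terms of $P_{\phi}\Gamma_{z}[\phi^{\intercal}\omega]-\phi^{\intercal}\omega$, and then using the contraction property of the projected Bellman operator in $\left\Vert \cdot\right\Vert _{2}$ to extract the factor $-(1-\beta)$. You instead bound the quadratic form $w^{\intercal}Aw=\beta\,\mathbb{E}[uu^{\prime}]-\mathbb{E}[u^{2}]$ directly via Cauchy--Schwarz together with the equality $\mathbb{E}[u^{\prime 2}]=\mathbb{E}[u^{2}]$ from stationarity of $\mathbb{P}$. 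Your argument is more elementary and self-contained: it does not use Lemma \ref{Lemma 1}, the projection operator, or the fixed point $\omega^{*}$ at all (which also removes any appearance of circularity, since the paper's Lemma \ref{Lemma 1} itself rests on contraction properties whose $\left\Vert \cdot\right\Vert _{2}$ version needs the same stationarity). The paper's route, by contrast, reuses machinery that is needed anyway for Lemma \ref{Lemma 3} and Theorem \ref{Theorem 1}, and makes transparent that the negativity of $A$ is exactly the contraction of $P_{\phi}\Gamma_{z}$. Note that your reliance on stationarity of $\mathbb{P}$ (marginal of $(a^{\prime},x^{\prime})$ equal to that of $(a,x)$) is not an extra hypothesis relative to the paper: the paper's claim that $\Gamma_{z}$ is a $\beta$-contraction in $\left\Vert \cdot\right\Vert _{2}$ hides precisely the same fact (via Jensen plus equality of marginals), and the paper invokes stationarity of $\mathbb{E}[\cdot]$ explicitly elsewhere; it is worth flagging, as you do, since the appendix acknowledges that the per-period distributions $P_{t}$ need not be time invariant in the panel setting.
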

\begin{proof}
The proof is adapted from Tsitsiklis and van Roy (1997). Recall the
definition of $\phi^{\intercal}\omega^{*}$ as the fixed point to
$P_{\phi}\Gamma_{z}[\cdot]$ from Lemma 1. We now show that 
\[
(\omega-\omega^{*})^{\intercal}A(\omega-\omega^{*})\le-(1-\beta)\underline{\lambda}(\mathbb{E}[\phi\phi^{\intercal}])\left|\omega-\omega^{*}\right|^{2}\ \forall\ \omega\in\mathbb{R}^{k}.
\]
Observe that 
\begin{align*}
A(\omega-\omega^{*}) & =\mathbb{E}\left[\phi\left(z+\beta\phi^{\prime\intercal}\omega-\phi^{\intercal}\omega\right)\right]-\mathbb{E}\left[\phi\left(z+\beta\phi^{\prime\intercal}\omega^{*}-\phi^{\intercal}\omega^{*}\right)\right]\\
 & =\mathbb{E}\left[\phi\left(z+\beta\phi^{\prime\intercal}\omega-\phi^{\intercal}\omega\right)\right],
\end{align*}
since the second expression in the first equation is $0$. Now, 
\begin{align*}
\mathbb{E}\left[\phi\left(z+\beta\phi^{\prime\intercal}\omega-\phi^{\intercal}\omega\right)\right] & =\mathbb{E}\left[\phi(a,x)\left(z(a,x)+\beta\mathbb{E}\left[\phi(a^{\prime},x^{\prime})^{\intercal}\omega\vert a,x\right]-\phi(a,x)^{\intercal}\omega\right)\right]\\
 & =\mathbb{E}\left[\phi\left(\Gamma_{z}[\phi^{\intercal}\omega]-\phi^{\intercal}\omega\right)\right]\\
 & =\mathbb{E}\left[\phi\left(P_{\phi}\Gamma_{z}[\phi^{\intercal}\omega]-\phi^{\intercal}\omega\right)\right],
\end{align*}
where the last equality holds since $\mathbb{E}\left[\phi(I-P_{\phi})[f]\right]=0$
for all $f$. We thus have
\begin{align*}
(\omega-\omega^{*})^{\intercal}A(\omega-\omega^{*}) & =\mathbb{E}\left[\left(\omega^{\intercal}\phi-\omega^{*\intercal}\phi\right)\left(P_{\phi}\Gamma_{z}[\phi^{\intercal}\omega]-\phi^{\intercal}\omega\right)\right]\\
 & =\mathbb{E}\left[\left(\omega^{\intercal}\phi-\omega^{*\intercal}\phi\right)\left(P_{\phi}\Gamma_{z}[\phi^{\intercal}\omega]-\phi^{\intercal}\omega^{*}\right)\right]-\left\Vert \phi^{\intercal}\omega-\phi^{\intercal}\omega^{*}\right\Vert _{2}^{2}\\
 & \le\left\Vert \phi^{\intercal}\omega-\phi^{\intercal}\omega^{*}\right\Vert _{2}\cdot\left\Vert P_{\phi}\Gamma_{z}[\phi^{\intercal}\omega]-\phi^{\intercal}\omega^{*}\right\Vert _{2}-\left\Vert \phi^{\intercal}\omega-\phi^{\intercal}\omega^{*}\right\Vert _{2}^{2}.
\end{align*}
Since $P_{\phi}\Gamma_{z}[\cdot]$ is a contraction mapping with contraction
factor $\beta$, it follows
\[
\left\Vert P_{\phi}\Gamma_{z}[\phi^{\intercal}\omega]-\phi^{\intercal}\omega^{*}\right\Vert _{2}=\left\Vert P_{\phi}\Gamma_{z}[\phi^{\intercal}\omega]-P_{\phi}\Gamma_{z}\left[\phi^{\intercal}\omega^{*}\right]\right\Vert _{2}\le\beta\left\Vert \phi^{\intercal}\omega-\phi^{\intercal}\omega^{*}\right\Vert _{2}.
\]
In view of the above,
\begin{align*}
(\omega-\omega^{*})^{\intercal}A(\omega-\omega^{*}) & \le-(1-\beta)\left\Vert \phi^{\intercal}\omega-\phi^{\intercal}\omega^{*}\right\Vert _{2}^{2}\\
 & =-(1-\beta)(\omega-\omega^{*})^{\intercal}\mathbb{E}[\phi\phi^{\intercal}](\omega-\omega^{*})\\
 & \le-(1-\beta)\underline{\lambda}(\mathbb{E}[\phi\phi^{\intercal}])\left|\omega-\omega^{*}\right|^{2}.
\end{align*}
This completes the proof of the lemma.
\end{proof}
\begin{lem} \label{Lemma 3}Suppose that Assumption 1(i) holds. Then,
\[
\left\Vert h-\phi^{\intercal}\omega^{*}\right\Vert _{2}\le(1-\beta)^{-1}\left\Vert h-P_{\phi}[h]\right\Vert _{2}.
\]
\end{lem}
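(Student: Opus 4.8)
The plan is to combine the fixed-point characterizations of $h$ and of $\phi^{\intercal}\omega^{*}$ with the contraction property of the projected dynamic programming operator, in the spirit of Tsitsiklis and van Roy (1997). First I would record the two ingredients. Since $h$ is the unique fixed point of $\Gamma_{z}$, we have $\Gamma_{z}[h]=h$ and hence $P_{\phi}\Gamma_{z}[h]=P_{\phi}[h]$. Since Assumption 1(i) makes $\mathbb{E}[\phi\phi^{\intercal}]$ non-singular, $P_{\phi}$ is well defined, and Lemma~\ref{Lemma 1} gives that $\phi^{\intercal}\omega^{*}$ is the fixed point of $P_{\phi}\Gamma_{z}$, i.e. $P_{\phi}\Gamma_{z}[\phi^{\intercal}\omega^{*}]=\phi^{\intercal}\omega^{*}$.

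Next I would split the error with the triangle inequality for the pseudo-norm $\left\Vert\cdot\right\Vert_{2}$,
\[
\left\Vert h-\phi^{\intercal}\omega^{*}\right\Vert_{2}\le\left\Vert h-P_{\phi}[h]\right\Vert_{2}+\left\Vert P_{\phi}[h]-\phi^{\intercal}\omega^{*}\right\Vert_{2},
\]
and control the second term using the two identities above: $P_{\phi}[h]-\phi^{\intercal}\omega^{*}=P_{\phi}\Gamma_{z}[h]-P_{\phi}\Gamma_{z}[\phi^{\intercal}\omega^{*}]$. The operator $P_{\phi}$ is a non-expansion in $\left\Vert\cdot\right\Vert_{2}$, being the $L_{2}$-projection onto $\mathcal{L}_{\phi}$, and $\Gamma_{z}$ is a $\beta$-contraction in $\left\Vert\cdot\right\Vert_{2}$; hence $P_{\phi}\Gamma_{z}$ is a $\beta$-contraction and $\left\Vert P_{\phi}[h]-\phi^{\intercal}\omega^{*}\right\Vert_{2}\le\beta\left\Vert h-\phi^{\intercal}\omega^{*}\right\Vert_{2}$. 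Substituting into the previous display and rearranging, $(1-\beta)\left\Vert h-\phi^{\intercal}\omega^{*}\right\Vert_{2}\le\left\Vert h-P_{\phi}[h]\right\Vert_{2}$, which is the claim after dividing by $1-\beta>0$.

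The one step that needs more than formal manipulation is verifying that $\Gamma_{z}$ is genuinely a $\beta$-contraction under $\left\Vert\cdot\right\Vert_{2}$. For any $f_{1},f_{2}\in\mathcal{F}$ one has $\Gamma_{z}[f_{1}]-\Gamma_{z}[f_{2}]=\beta\,\mathbb{E}[(f_{1}-f_{2})(a^{\prime},x^{\prime})\mid a,x]$, and then by the conditional Jensen inequality followed by the stationarity of $\mathbb{P}$ (so that $(a^{\prime},x^{\prime})$ has the same marginal law as $(a,x)$), $\left\Vert\mathbb{E}[(f_{1}-f_{2})(a^{\prime},x^{\prime})\mid a,x]\right\Vert_{2}\le\left\Vert(f_{1}-f_{2})(a^{\prime},x^{\prime})\right\Vert_{2}=\left\Vert f_{1}-f_{2}\right\Vert_{2}$. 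This is routine but is the only place the stationarity assumption enters. I would also note that the same argument, combined with the Pythagorean identity $\left\Vert h-\phi^{\intercal}\omega^{*}\right\Vert_{2}^{2}=\left\Vert h-P_{\phi}[h]\right\Vert_{2}^{2}+\left\Vert P_{\phi}[h]-\phi^{\intercal}\omega^{*}\right\Vert_{2}^{2}$, would actually deliver the sharper constant $(1-\beta^{2})^{-1/2}$, but the stated bound $(1-\beta)^{-1}$ suffices for everything that follows.
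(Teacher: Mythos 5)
Your proof is correct and follows essentially the same route as the paper's: the triangle-inequality split $\left\Vert h-\phi^{\intercal}\omega^{*}\right\Vert_{2}\le\left\Vert h-P_{\phi}[h]\right\Vert_{2}+\left\Vert P_{\phi}\Gamma_{z}[h]-P_{\phi}\Gamma_{z}[\phi^{\intercal}\omega^{*}]\right\Vert_{2}$, non-expansiveness of $P_{\phi}$ together with the $\beta$-contraction of $\Gamma_{z}$ under $\left\Vert\cdot\right\Vert_{2}$, and a rearrangement. Your added verification of the contraction via conditional Jensen plus stationarity, and the remark that the Pythagorean identity yields the sharper constant $(1-\beta^{2})^{-1/2}$, are accurate refinements of the same argument.
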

\begin{proof}
Recall that $h(\cdot,\cdot)$ is the unique fixed point of $\Gamma_{z}$,
and $\phi^{\intercal}\omega^{*}$ is the unique fixed point of $P_{\phi}\Gamma_{z}$.
The operator $\Gamma_{z}$ is a contraction mapping with contraction
factor $\beta$. Furthermore, the projection operator $P_{\phi}$
is linear, and $\left\Vert P_{\phi}[f]\right\Vert _{2}\le\left\Vert f\right\Vert _{2}$
for any function $f$. Thus 
\begin{align*}
\left\Vert h-\phi^{\intercal}\omega^{*}\right\Vert _{2} & \le\left\Vert h-P_{\phi}[h]\right\Vert _{2}+\left\Vert P_{\phi}[h]-P_{\phi}\Gamma_{z}[\phi^{\intercal}\omega^{*}]\right\Vert _{2}\\
 & \le\left\Vert h-P_{\phi}[h]\right\Vert _{2}+\left\Vert h-\Gamma_{z}[\phi^{\intercal}\omega^{*}]\right\Vert _{2}\\
 & =\left\Vert h-P_{\phi}[h]\right\Vert _{2}+\left\Vert \Gamma_{z}[h]-\Gamma_{z}[\phi^{\intercal}\omega^{*}]\right\Vert _{2}\\
 & \le\left\Vert h-P_{\phi}[h]\right\Vert _{2}+\beta\left\Vert h-\phi^{\intercal}\omega^{*}\right\Vert _{2}.
\end{align*}
Rearranging the above expression proves the desired claim.
\end{proof}
For the proofs of Theorems \ref{Theorem 1}-\ref{Theorem 2}, we work
within a more general setting than in the main text, by letting the
distribution of $(a_{it},x_{it})$ be time-varying. Let $P_{t}$ denote
the population distribution of $(a,x)$ at time $t$. Also, let $P$
denote the probability distribution of the process $\{(a_{1},x_{1}),\dots,(a_{T},x_{T})\}$.
Note that $P\equiv P_{1}\times\dots\times P_{T}$. We will denote
$E[\cdot]$ as the expectation over $P$. Furthermore, we use the
$o_{p}(\cdot)$ and $O_{p}(\cdot)$ notations to denote convergence
in probability, and bounded in probability, respectively, under the
probability distribution $P$.

We also need to extend the definitions of $\mathbb{P}$ and $\mathbb{E}[\cdot]$:
Let $\mathbb{P}$ denote the relative frequency of occurrence of $(a,x,a^{\prime},x^{\prime})$
in the data as $n\to\infty$. Let $\mathbb{E}[\cdot]$ denote the
corresponding expectation over $\mathbb{P}$. Note that $P$ is different
from $\mathbb{P}$ since the latter is the distribution of $(a,x,a,x^{\prime})$
after dropping the time index. However, the two are related since
for any function $f$, we have $\mathbb{E}[f(a,x,a^{\prime},x^{\prime})]=(T-1)^{-1}\sum_{t=1}^{T-1}E[f(a_{it},x_{it},a_{it+1},x_{it+1})]$.
These updated definitions of $\mathbb{P}$ and $\mathbb{E}[\cdot]$
are applicable wherever these notations are used in the main text. 

Note that due to the Markov process assumption, the conditional distribution
$P(a_{t+1},x_{t+1}\vert a_{t},x_{t})$ is always independent of $t$
(indeed, one could always consider $t$ as also a part of $x$). Hence,
$\mathbb{P}(a^{\prime},x^{\prime}\vert a,x)\equiv P(a_{t+1},x_{t+1}\vert a_{t},x_{t})$
and $\mathbb{E}[f(a^{\prime},x^{\prime})\vert a,x]\equiv E[f(a_{t+1},x_{t+1})\vert a_{t},x_{t}]$
for all $t$. Also note that time stationarity of $(a_{it},x_{it})$,
if it holds, implies $P_{t}\equiv\mathbb{P}$ and $E_{t}[\cdot]\equiv\mathbb{E}[\cdot]$
for all $t$.

\subsection{Proof of Theorem \ref{Theorem 1}\label{subsec:Proof-of-Theorem1}}

Lemma \ref{Lemma 1} implies $\omega^{*}$ exists. To prove that $\hat{\omega}$
exists, it suffices to show that $\hat{A}:=\mathbb{E}_{n}\left[\phi\left(\beta\phi^{\prime}-\phi\right)^{\intercal}\right]$
is invertible with probability approaching one. Recall that using
our notation, $\hat{A}=(n(T-1))^{-1}\sum_{i}\sum_{t=1}^{T-1}\phi_{it}(\beta\phi_{it+1}-\phi_{it})^{\intercal}$,
while $A=(T-1)^{-1}\sum_{t=1}^{T-1}E[\phi_{it}(\beta\phi_{it+1}-\phi_{it})^{\intercal}$.
We can thus write $\left|\hat{A}-A\right|\le(T-1)^{-1}\sum_{t=1}^{T-1}\left|\hat{A}_{t}-A_{t}\right|$,
where $\hat{A}_{t}:=n^{-1}\sum_{i}\phi_{it}(\beta\phi_{it+1}-\phi_{it})^{\intercal}$
and $A_{t}:=E[\phi_{it}(\beta\phi_{it+1}-\phi_{it})^{\intercal}]$.
By Assumption 1(ii), $\vert\phi(a,x)\vert_{\infty}\le M$ independent
of $k_{\phi}$, so
\begin{align*}
E\left|\hat{A}_{t}-A_{t}\right|^{2} & =E\left|\frac{1}{n}\sum_{i}\phi_{it}\left(\beta\phi_{it+1}-\phi_{it}\right)^{\intercal}-E\left[\phi_{it}\left(\beta\phi_{it+1}-\phi_{it}\right)^{\intercal}\right]\right|^{2}\\
 & \le\frac{1}{n}\sum_{i}E\left|\phi_{it}\left(\beta\phi_{it+1}-\phi_{it}\right)^{\intercal}\right|^{2}\le\frac{k_{\phi}^{2}M^{4}}{n}.
\end{align*}
This proves $\left|\hat{A}_{t}-A_{t}\right|=O_{p}(k_{\phi}/\sqrt{n})$.
But $T$ is fixed, which implies that $\left|\hat{A}-A\right|=O_{p}(k_{\phi}/\sqrt{n})$
as well. We thus obtain $\bar{\lambda}(\hat{A})\le\bar{\lambda}(A)+\left|\hat{A}-A\right|\le\bar{\lambda}(A)+o_{p}(1)$.
Since $\bar{\lambda}(A)<0$, this proves that $\bar{\lambda}(\hat{A})<0$
with probability approaching one, and subsequently, that $\hat{A}$
is invertible. This completes the proof of the first claim.

The second claim follows directly from Lemma \ref{Lemma 3} and Assumption
1(iii).

To prove the last claim, we first show that with probability approaching
one, 
\begin{equation}
\left|\hat{\omega}-\omega^{*}\right|\le C(1-\beta)^{-1}\sqrt{\frac{k_{\phi}}{n}}\label{eq:estimation of omega}
\end{equation}
for some $C<\infty$. Define $b=\mathbb{E}[\phi z]$ and $\hat{b}=\mathbb{E}_{n}[\phi z]$.
We then have $A\omega^{*}=b$ and $\hat{A}\hat{\omega}=\hat{b}$.
We can combine the two equations to get 
\[
\hat{A}(\hat{\omega}-\omega^{*})=(\hat{b}-b)+(A-\hat{A})\omega^{*}.
\]
The above implies 
\begin{equation}
(\hat{\omega}-\omega^{*})^{\intercal}(-\hat{A})(\hat{\omega}-\omega^{*})=(\hat{\omega}-\omega^{*})^{\intercal}(b-\hat{b})+(\hat{\omega}-\omega^{*})^{\intercal}(\hat{A}-A)\omega^{*}.\label{eq:pf:Thm1_1}
\end{equation}
We earlier showed $\left|\hat{A}-A\right|=O_{p}(k_{\phi}/\sqrt{n})$.
Hence, $\underline{\lambda}(-\hat{A})\ge\underline{\lambda}(-A)+o_{p}(1)$,
so
\begin{equation}
(\hat{\omega}-\omega^{*})^{\intercal}(-\hat{A})(\hat{\omega}-\omega^{*})\ge c(1-\beta)\underline{\lambda}(\mathbb{E}[\phi\phi^{\intercal}])\left|\hat{\omega}-\omega^{*}\right|^{2},\label{eq:pf:Thm1_2}
\end{equation}
with probability approaching one, for any constant $c\in(0,1)$. Given
(\ref{eq:pf:Thm1_1}) and (\ref{eq:pf:Thm1_2}), 
\[
\left|\hat{\omega}-\omega^{*}\right|\le\frac{1}{c(1-\beta)\underline{\lambda}(\mathbb{E}[\phi\phi^{\intercal}])}\left(\left|\hat{b}-b\right|+\left|\hat{A}\omega^{*}-A\omega^{*}\right|\right),
\]
with probability approaching one.

It remains to bound $\left|\hat{b}-b\right|$ and $\left|\hat{A}\omega^{*}-A\omega^{*}\right|$.
As before, we can define $\hat{b}_{t}=n^{-1}\sum_{i}\phi_{it}z_{it}$
and $b_{t}=E[\phi_{it}z_{it}]$ to obtain 
\begin{align*}
E\left|\hat{b}_{t}-b_{t}\right|^{2} & =E\left|\frac{1}{n}\sum_{i}\left\{ \phi_{it}z_{it}-E\left[\phi_{it}z_{it}\right]\right\} \right|^{2}\le\frac{1}{n}E\left|\phi_{it}z_{it}\right|^{2}.
\end{align*}
This proves 
\[
E\left|\hat{b}-b\right|^{2}\le\frac{1}{T-1}\sum_{t=1}^{T-1}E\left|\hat{b}_{t}-b_{t}\right|^{2}\le\frac{1}{n}\mathbb{E}\left[\left|\phi z\right|^{2}\right]\le\frac{k_{\phi}L^{2}M^{2}}{n}=O_{p}(k_{\phi}/n).
\]
In a similar vein, 
\begin{align*}
E\left|\hat{A}\omega^{*}-A\omega^{*}\right|^{2} & =E\left|\frac{1}{n(T-1)}\sum_{t=1}^{T-1}\sum_{i}\left\{ \phi_{it}\left(\beta\phi_{it+1}-\phi_{it}\right)^{\intercal}\omega^{*}-E\left[\phi_{it}\left(\beta\phi_{it+1}-\phi_{it}\right)^{\intercal}\omega^{*}\right]\right\} \right|^{2}\\
 & =O_{p}\left(k_{\phi}/n\right),
\end{align*}
as long as $\mathbb{E}\left[\left|\phi\left(\beta\phi-\phi\right)^{\intercal}\omega^{*}\right|^{2}\right]=O(k_{\phi}).$
But the latter is true under Assumptions 1(ii)-(iv) since 
\begin{align*}
\mathbb{E}\left[\left|\phi\left(\beta\phi^{\intercal}\omega^{*}-\phi^{\intercal}\omega^{*}\right)\right|^{2}\right] & \le k_{\phi}M^{2}(2+2\beta^{2})\mathbb{E}\left[\left|\phi^{\intercal}\omega^{*}\right|^{2}\right],
\end{align*}
\[
\mathbb{E}\left[\left|\phi^{\intercal}\omega^{*}\right|^{2}\right]^{1/2}\le\left\Vert \phi^{\intercal}\omega^{*}-h\right\Vert _{2}+\left\Vert h\right\Vert _{2}\le O(k_{\phi}^{-\alpha})+(1-\beta)^{-1}L<\infty,
\]
where the second inequality uses $\left\Vert \phi^{\intercal}\omega^{*}-h\right\Vert _{2}=O(k_{\phi}^{-\alpha})$
(as shown above), and $\vert h(\cdot,\cdot)\vert_{\infty}\le(1-\beta)^{-1}\vert z(\cdot,\cdot)\vert_{\infty}<(1-\beta)^{-1}L$
(which can be easily verified using (\ref{eq:Recursion}) and Assumption
1(iv)). Combining the above, there exists $C<\infty$ such that $\left|\hat{\omega}-\omega^{*}\right|\le C\sqrt{k_{\phi}/n},$
with probability approaching one. We have thus shown (\ref{eq:estimation of omega}). 

Now observe that 
\begin{align*}
\left\Vert \phi^{\intercal}\hat{\omega}-h\right\Vert _{2}^{2} & \le2\left\Vert \phi^{\intercal}\hat{\omega}-\phi^{\intercal}\omega^{*}\right\Vert _{2}^{2}+2\left\Vert \phi^{\intercal}\omega^{*}-h\right\Vert _{2}^{2}\\
 & =2(\hat{\omega}-\omega^{*})^{\intercal}\mathbb{E}[\phi\phi^{\intercal}](\hat{\omega}-\omega^{*})+2\left\Vert \phi^{\intercal}\omega^{*}-h\right\Vert _{2}^{2}\\
 & \le2\bar{\lambda}(\mathbb{E}[\phi\phi^{\intercal}])O_{p}\left(\frac{k_{\phi}}{n}\right)+O_{p}(k_{\phi}^{-2\alpha}),
\end{align*}
where the final inequality follows from the second claim of this theorem
and (\ref{eq:estimation of omega}). The last claim then follows from
the above along with the fact that, by Assumption 1(iv),
\[
\bar{\lambda}(\mathbb{E}[\phi\phi^{\intercal}])\le\left\Vert \phi\right\Vert _{2}^{2}\le M^{2}k_{\phi},
\]

\subsection{Proof of Theorem \ref{Theorem 2}\label{subsec:Proof-of-Theorem2}}

The first two claims follow from steps analogous to those in Theorem
\ref{Theorem 1}. We thus need to show that with probability approaching
one,
\begin{equation}
\left|\hat{\xi}-\xi^{*}\right|\le C(1-\beta)^{-1}\sqrt{\frac{k_{r}}{n}},\label{eq:estimation of xi}
\end{equation}
for some $C<\infty$. The third claim is a straightforward consequence
of this. 

Recall that we use a cross-fitting procedure to estimate $\xi^{*}$.
Let $n_{1},n_{2}$ denote the sample sizes, and $\hat{\eta}_{1},\hat{\xi}_{1}$
and $\hat{\eta}_{2},\hat{\xi}_{2}$ the estimates of $\eta$ and $\xi^{*}$
in the two folds. We shall show that $\vert\hat{\xi}_{1}-\xi\vert=O_{p}(\sqrt{k_{r}/n})$
(and similarly $\vert\hat{\xi}_{2}-\xi\vert=O_{p}(\sqrt{k_{r}/n})$),
and therefore $\vert\hat{\xi}-\xi\vert=O_{p}(\sqrt{k_{r}/n})$. To
this end, let $A_{r}:=\mathbb{E}[rr^{\intercal}]$, $b_{r}:=\mathbb{E}[r(a,x)e(a^{\prime},x^{\prime})]$,
$\hat{A}_{r}^{(1)}:=\mathbb{E}_{n}^{(1)}[rr^{\intercal}]$ and $\hat{b}_{r}^{(1)}:=\mathbb{E}_{n}^{(1)}[r(a,x)e(a^{\prime},x^{\prime};\hat{\eta}_{2})]$,
where $\mathbb{E}_{n}^{(1)}[\cdot]$ denotes the empirical expectation
using only the first block. We also employ the notation $\psi(a,x,a^{\prime},x^{\prime};\eta):=r(a,x)e(a^{\prime},x^{\prime};\eta)$
and $\psi_{it}(\eta):=r(a_{it},x_{it})e(a_{it+1},x_{it+1};\eta)$.

Based on the above definitions, we have $\hat{A}_{r}^{(1)}\hat{\xi}_{1}=\hat{b}_{r}^{(1)}$,
and $A_{r}\xi^{*}=b_{r}$. Comparing with the proof of Theorem \ref{Theorem 1},
the only difference is in the treatment of $\vert\hat{b}_{r}^{(1)}-b_{r}\vert$.
As before, define $\hat{b}_{rt}^{(1)}:=n^{-1}\sum_{i}\psi_{it}(\hat{\eta}_{2})$
and $b_{rt}:=E[\psi_{it}(\eta)]$. We then have $\vert\hat{b}_{r}^{(1)}-b_{r}\vert=(T-1)^{-1}\sum_{t=1}^{T-1}\vert\hat{b}_{rt}^{(1)}-b_{rt}\vert$.
Since $T$ is finite, it suffices to bound $\vert\hat{b}_{rt}^{(1)}-b_{rt}\vert$
for some arbitrary $t$. Now, by similar arguments as in the proof
of Theorem \ref{Theorem 1}, we have
\[
\frac{1}{n_{1}}\sum_{i=1}^{n_{1}}\left\{ \psi_{it}(\eta)-E\left[\psi_{it}(\eta)\right]\right\} =O_{p}\left(\sqrt{k_{r}/n}\right).
\]
Hence (\ref{eq:estimation of xi}) follows once we show
\begin{equation}
\hat{b}_{rt}^{(1)}-b_{rt}=\frac{1}{n_{1}}\sum_{i=1}^{n_{1}}\left\{ \psi_{it}(\eta)-E\left[\psi_{it}(\eta)\right]\right\} +o_{p}\left(\sqrt{k_{r}/n}\right).\label{eq:pf:Thm2:1}
\end{equation}

We now prove (\ref{eq:pf:Thm2:1}). Denoting $\mathcal{N}_{2}$ the
set of observations in the second fold: 
\begin{align*}
 & \hat{b}_{rt}^{(1)}-b_{rt}-\frac{1}{n_{1}}\sum_{i=1}^{n_{1}}\left\{ \psi_{it}(\eta)-E\left[\psi_{it}(\eta)\right]\right\} \\
 & =\frac{1}{n_{1}}\sum_{i=1}^{n_{1}}\left\{ \left(\psi_{it}(\hat{\eta}_{2})-\psi_{it}(\eta)\right)-\left(E\left[\psi_{it}(\hat{\eta}_{2})\vert\mathcal{N}_{2}\right]-E\left[\psi_{it}(\eta)\right]\right)\right\} +\left\{ E\left[\psi_{it}(\hat{\eta}_{2})\vert\mathcal{N}_{2}\right]-E\left[\psi_{it}(\eta)\right]\right\} .
\end{align*}

Define the above as $R_{1nt}+R_{2nt}$. First consider the term $R_{1nt}$.
Define 
\[
\delta_{it}:=\left(\psi_{it}(\hat{\eta}_{2})-\psi_{it}(\eta)\right)-\left(E\left[\psi_{it}(\hat{\eta}_{2})\vert\mathcal{N}_{2}\right]-E\left[\psi_{it}(\eta)\right]\right).
\]
Clearly, $E[\delta_{it}\vert\mathcal{N}_{2}]=0$. We then have 
\begin{equation}
E\left[\left.\vert R_{1nt}\vert^{2}\right|\mathcal{\mathcal{N}}_{2}\right]=\frac{1}{n_{1}}E\left[\left.\vert\delta_{it}\vert^{2}\right|\mathcal{\mathcal{N}}_{2}\right]=\frac{1}{n_{1}}E\left[\left.\left|\psi_{it}(\hat{\eta}_{2})-\psi_{it}(\eta)\right|^{2}\right|\mathcal{\mathcal{N}}_{2}\right].\label{eq:pf:Thm2:2}
\end{equation}
Now for any $(a,x,a^{\prime},x^{\prime})$, we can note from the definition
of $\psi(\cdot)$ that with probability approaching one, 
\begin{align}
\left|\psi(a,x,a^{\prime},x^{\prime};\hat{\eta}_{2})-\psi(a,x,a^{\prime},x^{\prime};\eta)\right| & \le\vert r(a,x)\vert\left\{ \vert\ln\hat{\eta}_{2}-\ln\eta\vert+\vert\hat{\eta}_{2}-\eta\vert\right\} \nonumber \\
 & \le M\sqrt{k_{r}}\left\{ \vert\ln\hat{\eta}_{2}-\ln\eta\vert+\vert\hat{\eta}_{2}-\eta\vert\right\} \nonumber \\
 & \le M\sqrt{k_{r}}(2\delta^{-1}+1)\vert\hat{\eta}_{2}-\eta\vert,\label{eq:pf:Thm2_3}
\end{align}
where the second inequality follows from Assumption 2(iii), and the
third follows from 2(v).\footnote{In particular, we have used the fact $\hat{\eta}_{2}>\delta+o_{p}(1)$
which follows from $\eta>\delta$ and $\left|\hat{\eta}_{2}-\eta\right|=o_{p}(1)$.} In view of (\ref{eq:pf:Thm2:2}) and (\ref{eq:pf:Thm2_3}), there
exists $C<\infty$ such that
\begin{align*}
E\left[\left.\vert R_{1nt}\vert^{2}\right|\mathcal{\mathcal{N}}_{2}\right] & \le\frac{Ck_{r}}{n_{1}}E\left[\left.\left|\hat{\eta}_{2}(a_{it+1},x_{it+1})-\eta(a_{it+1},x_{it+1})\right|^{2}\right|\mathcal{\mathcal{N}}_{2}\right]\\
 & \le\frac{Ck_{r}T}{n_{1}}\left\Vert \hat{\eta}_{2}-\eta\right\Vert _{2}^{2}=o_{p}(k_{r}/n),
\end{align*}
where the last equality follows by Assumption 2(v). This proves
\begin{equation}
\vert R_{1nt}\vert=o_{p}(\sqrt{k_{r}/n}).\label{eq:pf:Thm2_4}
\end{equation}

Next consider the term $R_{2nt}$. We note that $E[\psi_{it}(\eta)]$
is twice Fr\'echet differentiable. In the main text we have shown
that $\partial_{\eta}E[\psi_{it}(\eta)]=0$ (c.f equation (\ref{eq:orthogonality for =00005Ceta})).
Furthermore, following some straightforward algebra it is possible
to show $\vert\partial_{\eta}^{2}E[\psi_{it}(\eta)]\vert\le C_{1}\sqrt{k_{r}}$,
for some $C_{1}<\infty$, as long as $\eta$ is bounded away from
$0$ (as assured by Assumption 2(v)). Hence 
\begin{align}
E\left[\left.\vert R_{2nt}\vert^{2}\right|\mathcal{\mathcal{N}}_{2}\right] & \le C_{1}\sqrt{k_{r}}E\left[\left.\left|\hat{\eta}_{2}(a_{it+1},x_{it+1})-\eta(a_{it+1},x_{it+1})\right|^{2}\right|\mathcal{\mathcal{N}}_{2}\right]\nonumber \\
 & \le C_{1}T\sqrt{k_{r}}\left\Vert \hat{\eta}_{2}-\eta\right\Vert _{2}^{2}=o_{p}(k_{r}/n).\label{eq:pf:Thm2_5}
\end{align}

(\ref{eq:pf:Thm2_4}) and (\ref{eq:pf:Thm2_5}) imply (\ref{eq:pf:Thm2:1}),
which concludes the proof of the theorem. 

\newpage{}

\section{Online Appendix\label{sec:Appendix:Supplemental-discussion}}

\subsection{Discrete states\label{subsec:Discrete-states}}

Following discretization, CCP methods (see e.g \citealp{AguirregabiriaMira2010}),
estimate $h(a,x)$ by solving the recursive equations
\begin{equation}
\breve{h}(a,x)=z(a,x)+\beta\sum_{x^{\prime}}\hat{K}(x^{\prime}\vert a,x)\sum_{a'}\hat{P}(a^{\prime}\vert x^{\prime})\breve{h}(a^{\prime},x^{\prime}),\label{eq:old-CCP}
\end{equation}
where $\hat{K},\hat{P}$ are estimates of $K,P$ obtained as cell
estimates. Now, by \citet{TsitsiklisRoy1997}, when the functional
approximation saturates all the states, the TD estimate from (\ref{eq:empirical TD fixed point})
satisfies
\[
z(a,x)+\beta\mathbb{E}_{n}[\hat{h}(a^{\prime},x^{\prime})\vert a,x]=\hat{h}(a,x),
\]
where $\mathbb{E}_{n}[\hat{h}(a^{\prime},x^{\prime})\vert a,x]$ denotes
the conditional expectation of $\hat{h}(a^{\prime},x^{\prime})$ given
$a$ and $x$ under the empirical distribution $\mathbb{P}_{n}$ (the
conditional distribution exists because of the discrete number of
states). But for discrete data, $\mathbb{E}_{n}[\hat{h}(a^{\prime},x^{\prime})\vert a,x]$
is simply 
\[
\mathbb{E}_{n}[\hat{h}(a^{\prime},x^{\prime})\vert a,x]=\sum_{x^{\prime}}\hat{K}(x^{\prime}\vert a,x)\sum_{a'}\hat{P}(a^{\prime}\vert x^{\prime})\hat{h}(a^{\prime},x^{\prime}),
\]
and the values of $\hat{h}(a,x)$ and $\breve{h}(a,x)$ coincide exactly.
Thus, the two algorithms give identical results (a similar property
also holds for $g(a,x)$). Since our estimates $\hat{h}(a,x)$ coincide
with those from the standard CCP estimators, the resulting estimate
$\hat{\theta}$ is also exactly the same. 

When the states are discrete, \citet{AguirregabiriaMira2002} show
that the estimation of $\eta$ is orthogonal to the estimation of
$\theta^{*}$. It is important to note, however, that the estimation
of the Markov kernel $K(x^{\prime}\vert a,x)$ is not orthogonal to
the estimation of $\theta^{*}$. Now, with discrete states, any estimate,
$\hat{K}(x^{\prime}\vert a,x)$, of $K(x^{\prime}\vert a,x)$ converges
at parametric rates, so $\sqrt{n}$-consistent estimation of $\theta$
is still possible. However, as we show in Section \ref{subsec:Continuous-states-and},
this creates issues with continuous states.

\subsection{Smoothness properties of dynamic-programming operators and fixed
points\label{subsec:Smoothness-properties-of}}

The following result provides a sufficient condition for Assumption
3(iii). 

\begin{lem}\label{Lem - smoothness} Suppose that $\max_{\vert p\vert\le\gamma}\sup_{a,x}\left|\partial_{x}^{p}z(a,x)\right|<C$
and\\
 $\max_{\vert p\vert\le\gamma}\sup_{a,x}\int\left|\partial_{x}^{p}K(x^{\prime}\vert a,x)\right|dx^{\prime}<C$
for some $C<\infty$. Then, there exists $M_{0}<\infty$ such that
$\left|\Gamma_{z}[f]\right|_{\infty}\le M_{0}$ for all $\left\{ f:\vert f\vert_{\infty}\le M_{0}\right\} $.
Furthermore, for each $M_{0}$, there exists $M<\infty$ such that
$\Gamma_{z}[f](\cdot,a)\in\mathcal{W}^{\gamma,\infty}$ for all $\vert f\vert_{\infty}\le M_{0}$
and $a\in\mathcal{A}$. \end{lem}
\begin{proof}
For the first claim, note that
\begin{align*}
\left|\Gamma_{z}[f](a,x)\right| & \le\left|z(a,x)\right|+(1-\beta)\sup_{a,x}\left|\int f(a^{\prime},x^{\prime})P(a^{\prime}\vert x^{\prime})K(x^{\prime}\vert a,x)da^{\prime}dx^{\prime}\right|\\
 & \le C+(1-\beta)M_{0}.
\end{align*}
Hence, the claim is satisfied for $M_{0}\ge C/\beta$. 

We now turn to the second claim. For any $\vert p\vert\le\gamma$,
we have 
\begin{align*}
\left|\partial_{x}^{p}\Gamma_{z}[f](a,x)\right| & \le\left|\partial_{x}^{p}z(a,x)\right|+(1-\beta)\sup_{a,x}\left|\int f(a^{\prime},x^{\prime})P(a^{\prime}\vert x^{\prime})\partial_{x}^{p}K(x^{\prime}\vert a,x)da^{\prime}dx^{\prime}\right|\\
 & <C+(1-\beta)M_{0}C.
\end{align*}
Hence, $\max_{\vert p\vert\le\gamma}\sup_{x\in\mathcal{X}}\left|D^{p}\Gamma_{z}[f]\right|<M:=\max\left\{ M_{0},C+(1-\beta)M_{0}C\right\} $. 
\end{proof}
We now provide sufficient conditions for $h(\cdot)$ to be $\gamma$-H\"{o}lder
continuous. The result provides a sufficient condition for Assumption
3(i). It can also be used as a justification for Assumption 1(iii);
see the discussion following Assumption 1. 

\begin{lem} \label{Lem - smoothness of fixed point}Under the assumptions
of Lemma \ref{Lem - smoothness}, there exist $M_{0},M<\infty$ such
that $\vert h\vert_{\infty}\le M_{0}$ and $h(a,\cdot)\in\mathcal{W}^{\gamma,\infty}$
for all $a\in\mathcal{A}$. \end{lem}
\begin{proof}
We start by showing $h(a,x)$ is uniformly bounded. Define $M_{0}$
to be any positive real number such that $\vert z(a,x)\vert_{\infty}<\beta M_{0}$
(such a number exists under the stated assumptions). Now, for any
$f$ such that $\vert f\vert_{\infty}\le M_{0}$, 
\[
\vert\Gamma_{z}[f](a,x)\vert_{\infty}\le\vert z(a,x)\vert_{\infty}+(1-\beta)M_{0}\le M_{0}.
\]
In other words, $\Gamma_{z}[\cdot]$ maps the space $\mathcal{S}_{0}\equiv\{f:\vert f\vert_{\infty}\le M_{0}\}$
onto itself. Hence, by the properties of contraction mappings, the
fixed point of $\Gamma_{z}[\cdot]$ must lie in $\mathcal{S}_{0}$,
i.e $\vert h(a,x)\vert_{\infty}\le M_{0}$.

We now show $h(a,\cdot)\in\mathcal{W}^{\gamma,\infty}$ for all $a$.
For any $f\in\mathcal{S}_{0}$, and $\vert p\vert\le\gamma$, 
\begin{align*}
\left|\partial_{x}^{p}\Gamma_{z}[f](a,x)\right|_{\infty} & \le\left|\partial_{x}^{p}z(a,x)\right|_{\infty}+(1-\beta)\sup_{a,x}\left|\int f(a^{\prime},x^{\prime})P(a^{\prime}\vert x^{\prime})\partial_{x}^{p}K(x^{\prime}\vert a,x)da^{\prime}dx^{\prime}\right|\\
 & \le C+(1-\beta)M_{0}C\le M,
\end{align*}
where $M:=\max\{M_{0},C+(1-\beta)M_{0}C\}$. 

Defining $\mathcal{S}_{1}\equiv\left\{ f:\max_{0<\vert p\vert\le\gamma}\vert\partial_{x}^{p}f\vert_{\infty}\le M\right\} $,
we have thus shown $\Gamma_{z}[\mathcal{S}_{0}]\subseteq\mathcal{S}_{1}$.
But the first part of this proof implies $\Gamma_{z}[\mathcal{S}_{0}]\subseteq\mathcal{S}_{0}$.
Hence, $\Gamma_{z}[\mathcal{S}_{0}]\subseteq\mathcal{S}_{0}\cap\mathcal{S}_{1}$.
Consequently, by the properties of contraction mappings, $h(a,\cdot)\in\mathcal{S}_{0}\cap\mathcal{S}_{1}$. 
\end{proof}
It is straightforward to write down analogous versions of Lemmas \ref{Lem - smoothness}
and \ref{Lem - smoothness of fixed point} for $\Gamma_{g}[\cdot]$
and $g(a,\cdot)$. We omit the details. 

\subsection{Verification of the locally robust moment function\label{subsec:Verification-of-the}}

The locally robust correction needs to satisfy two properties. First,
it must be mean zero. This is easily verified. Second, it must satisfy
the zero G\^ateaux derivative requirement, i.e., after adding the
correction term, the moment condition should satisfy 
\begin{equation}
\left.\partial_{\tau}\mathbb{E}[\zeta(\tilde{{\bf a}},\tilde{{\bf x}};\theta,h+\tau\gamma,g)]\right|_{\tau=0}=0,\ \left.\partial_{\tau}\mathbb{E}[\zeta(\tilde{{\bf a}},\tilde{{\bf x}};\theta,h,g+\tau\gamma)]\right|_{\tau=0}=0\label{eq:Gateaux condition requirement}
\end{equation}
for all $\gamma\in\mathcal{H}$, where $\mathcal{H}$ is the set of
all square integrable functions over the domain $\mathcal{A}\times\mathcal{X}$.
We now verify this below:

Let $\psi_{h}(\cdot)$ denote the solution of the functional equation
\begin{equation}
\left.\partial_{\tau}\mathbb{E}\left[m(a,x;\theta,h+\tau\gamma,g)\right]\right|_{\tau=0}=\mathbb{E}\left[\psi_{h}(a,x;\theta,h,g)\cdot\gamma(a,x)\right]\ \forall\ \gamma\in\mathcal{H},\label{eq:definition of derivative wrt h-1}
\end{equation}
and similarly, $\psi_{g}(\cdot)$, the solution of the functional
equation 
\[
\left.\partial_{\tau}\mathbb{E}\left[m(a,x;\theta,h,g+\tau\gamma)\right]\right|_{\tau=0}=\mathbb{E}\left[\psi_{g}(a,x;\theta,h,g)\cdot\gamma(a,x)\right]\ \forall\ \gamma\in\mathcal{H}.
\]
Following some tedious but straightforward algebra, we can show $\psi_{g}(\cdot)=\psi(\cdot)$
and $\psi_{h}(\cdot)=\theta\psi(\cdot)$, where $\psi(\cdot)$ is
defined in (\ref{eq:derivate wrt h}) in the main text. 

Now, the doubly robust moment condition in (\ref{eq:doubly robust moment - non parametric})
can be expanded as 
\begin{align}
\zeta(\tilde{{\bf a}},\tilde{{\bf x}};\theta,h,g) & =m(a,x;\theta,h,g)-\theta\lambda(a,x;\theta)\left\{ z(a,x)+\beta h(a^{\prime},x^{\prime})-h(a,x)\right\} \nonumber \\
 & \quad-\lambda(a,x;\theta)\left\{ \beta e(a^{\prime},x^{\prime};\eta)+\beta g(a^{\prime},x^{\prime})-g(a,x)\right\} \nonumber \\
 & =m(a,x;\theta,h,g)-\lambda_{h}(a,x;\theta)\left\{ z(a,x)+\beta h(a^{\prime},x^{\prime})-h(a,x)\right\} \nonumber \\
 & \quad-\lambda_{g}(a,x;\theta)\left\{ \beta e(a^{\prime},x^{\prime};\eta)+\beta g(a^{\prime},x^{\prime})-g(a,x)\right\} ,\label{eq:doubly robust moment - non parametric- alternative form}
\end{align}
where we have defined $\lambda_{h}(a,x;\theta)=\theta\lambda(a,x;\theta)$
and $\lambda_{g}(a,x,;\theta)=\lambda(a,x;\theta)$ for the second
equality. Using the definitions of $\psi(\cdot),\psi_{h}(\cdot),\psi_{g}(\cdot)$
and $\lambda(\cdot)$, we observe that $\lambda_{h}(a,x;\theta)$
is the fixed point of the `backward' dynamic programming operator
\begin{equation}
\Gamma_{h,\theta}^{\dagger}[f](a,x):=-\psi_{h}(a,x;\theta,h,g)+\beta\mathbb{E}\left[f(a^{-\prime},x^{-\prime})\vert a,x\right],\label{eq:adjustment term 1-1-1}
\end{equation}
and $\lambda_{g}(a,x;\theta)$ is the fixed point of 
\begin{equation}
\Gamma_{g,\theta}^{\dagger}[f](a,x):=-\psi_{g}(a,x;\theta,h,g)+\beta\mathbb{E}\left[f(a^{-\prime},x^{-\prime})\vert a,x\right].\label{eq:adjustment term 2-1}
\end{equation}

We now verify (\ref{eq:Gateaux condition requirement}). To this end,
observe that for any square integrable $\gamma$,
\begin{align}
 & \left.\partial_{\tau}\mathbb{E}[\zeta(\tilde{{\bf a}},\tilde{{\bf x}};\theta,h+\tau\gamma,g)]\right|_{\tau=0}\nonumber \\
 & =\left.\partial_{\tau}\mathbb{E}[m(a,x;\theta,h+\tau\gamma,g)]\right|_{\tau=0}-\mathbb{E}[\beta\lambda_{h}(a,x;\theta)\gamma(a^{\prime},x^{\prime})]+\mathbb{E}[\lambda_{h}(a,x;\theta)\gamma(a,x)]\nonumber \\
 & =\mathbb{E}[\psi_{h}(a,x;\theta,h,g)\gamma(a,x)]-\mathbb{E}[\beta\lambda_{h}(a,x;\theta)\gamma(a^{\prime},x^{\prime})]+\mathbb{E}[\lambda_{h}(a,x;\theta)\gamma(a,x)],\label{eq:Gauteax derivative}
\end{align}
where the second equality follows from the definition of $\psi_{h}(\cdot)$
in (\ref{eq:definition of derivative wrt h-1}). Since $\lambda_{h}(\cdot)$
is the fixed point of $\Gamma_{h,\theta}^{\dagger}[\cdot]$, we can
expand the third term in (\ref{eq:Gauteax derivative}) as 
\begin{align*}
\mathbb{E}[\lambda_{h}(a,x;\theta)\gamma(a,x)] & =\mathbb{E}\left[\left\{ -\psi_{h}(a,x;\theta,h,g)+\beta\lambda_{h}(a^{-\prime},x^{-\prime};\theta)\right\} \gamma(a,x)\right]\\
 & =-\mathbb{E}\left[\psi_{h}(a,x;\theta,h,g)\gamma(a,x)\right]+\mathbb{E}[\beta\lambda_{h}(a,x;\theta)\gamma(a^{\prime},x^{\prime})],
\end{align*}
where the second equality uses the fact that $\mathbb{E}[\cdot]$
corresponds to a stationary distribution. We thus conclude $\left.\partial_{\tau}\mathbb{E}[\zeta(\tilde{{\bf a}},\tilde{{\bf x}};\theta,h+\tau\gamma,g)]\right|_{\tau=0}=0$
for all $\gamma$, or $\partial_{h}\mathbb{E}[\zeta(\tilde{{\bf a}},\tilde{{\bf x}};\theta,h,g)]=0$,
as required. In fact, by similar arguments, we can also show the stronger
statement that $\partial_{h}\mathbb{E}[\zeta(\tilde{{\bf a}},\tilde{{\bf x}};\theta,h,g)]=0$
and $\partial_{g}\mathbb{E}[\zeta(\tilde{{\bf a}},\tilde{{\bf x}};\theta,h,g)]=0$
in a Fr\'echet sense. The argument for showing $\left.\partial_{\tau}\mathbb{E}[\zeta(\tilde{{\bf a}},\tilde{{\bf x}};\theta,h,g+\tau\gamma)]\right|_{\tau=0}=0$
is similar. 

\subsection{The locally robust estimator for linear functional classes\label{subsec:Construction-of-the}}

Suppose $h(x,a)$ and $g(x,a)$ were truly finite-dimensional, i.e
$h(x,a)\equiv\phi(x,a)^{\intercal}\omega^{*}$ and $g(x,a)\equiv r(x,a)^{\intercal}\xi^{*}$.
Denote $(\tilde{{\bf a}},\tilde{{\bf x}}):=(a,x,a^{\prime},x^{\prime})$,
$\mathbf{v}:=(\omega,\xi)$, $\mathbf{v}^{*}:=(\omega^{*},\xi^{*})$
and 
\[
\pi(a,x;\theta,\mathbf{v}):=\frac{\exp\left\{ \left(\phi(a,x){}^{\intercal}\omega\right)\theta+r(a,x){}^{\intercal}\xi\right\} }{\sum_{\breve{a}}\exp\left\{ \left(\phi(\breve{a},x)^{\intercal}\omega\right)\theta+r(\breve{a},x)^{\intercal}\xi)\right\} }.
\]
The true value $\theta^{*}$ solves
\begin{equation}
\mathbb{E}[m(a,x;\theta^{*},{\bf v}^{*})]=0;\quad m(a,x;\theta,\mathbf{v}):=\partial_{\theta}\ln\pi(a,x;\theta,\mathbf{v}).\label{eq:moment criterion}
\end{equation}
Now, (\ref{eq:inversion for TD fixed point}) and (\ref{eq:xi-estimate-method1})
imply $\omega^{*}$ and $\xi^{*}$ are identified by the auxiliary
moments 
\begin{align}
\mathbb{E}[\varphi_{h}(\tilde{{\bf a}},\tilde{{\bf x}},\omega^{*})] & =0,\ \textrm{and}\ \mathbb{E}[\varphi_{g}(\tilde{{\bf a}},\tilde{{\bf x}},\xi^{*})]=0,\ \textrm{where}\label{eq:auxiliary moments}\\
\varphi_{h}(\tilde{{\bf a}},\tilde{{\bf x}},\omega) & :=\phi(a,x)\left\{ z(a,x)+\beta\phi(a^{\prime},x^{\prime})^{\intercal}\omega-\phi(a,x)^{\intercal}\omega\right\} ,\ \textrm{and}\nonumber \\
\varphi_{g}(\tilde{{\bf a}},\tilde{{\bf x}},\xi) & :=r(a,x)\left\{ \beta e(a^{\prime},x^{\prime};\hat{\eta})+\beta r(a^{\prime},x^{\prime})^{\intercal}\xi-r(a,x)^{\intercal}\xi\right\} .\nonumber 
\end{align}

We make use of (\ref{eq:moment criterion}) and (\ref{eq:auxiliary moments})
to construct a locally robust moment for $\theta^{*}$. Following
\citet{Newey1994} and \citet{CEINR2018}, this is given by
\begin{equation}
\mathbb{E}[\zeta(\tilde{{\bf a}},\tilde{{\bf x}};\theta^{*},{\bf v}^{*})]=0,\label{eq:locally robust moment}
\end{equation}
where
\begin{align*}
\zeta(\tilde{{\bf a}},\tilde{{\bf x}};\theta,{\bf v}) & :=m(a,x;\theta,{\bf v})-\mathbb{E}[\partial_{\omega}m(a,x;\theta,{\bf v})]\mathbb{E}[\partial_{\omega}\varphi_{h}(\tilde{{\bf a}},\tilde{{\bf x}},\omega)]^{-1}\varphi_{h}(\tilde{{\bf a}},\tilde{{\bf x}},\omega)\\
 & \hfill-\mathbb{E}[\partial_{\xi}m(a,x;\theta,{\bf v})]\mathbb{E}[\partial_{\xi}\varphi_{g}(\tilde{{\bf a}},\tilde{{\bf x}},\xi)]^{-1}\varphi_{g}(\tilde{{\bf a}},\tilde{{\bf x}},\xi).
\end{align*}

We can now construct a locally robust estimator for $\theta^{*}$
based on (\ref{eq:locally robust moment}). Following \citet{CEINR2018},
we employ a cross-fitting procedure by randomly splitting the data
into two samples $\mathcal{N}_{1}$ and $\mathcal{N}_{2}$. We compute
$\hat{\omega}$ and $\hat{\xi}$ using one of the samples, say $\mathcal{N}_{2}$.
We also compute, $\tilde{\theta}$, a preliminary consistent estimator
of $\theta^{*}$ by applying the plug-in estimator (\ref{eq:estimate of theta})
on observations in $\mathcal{N}_{2}$. Denote by $\mathbb{E}_{n}^{(1)}[\cdot]$
the empirical expectation using only the observations in the first
sample, $\mathcal{N}_{1}$. We then obtain $\hat{\theta}$ as the
solution to the moment equation 
\begin{equation}
\mathbb{E}_{n}^{(1)}\left[\zeta_{n}(\tilde{{\bf a}},\tilde{{\bf x}};\theta,\hat{\omega},\hat{\xi})\right]=0,\label{eq:locally robust sample moment}
\end{equation}
where 
\begin{align*}
\zeta_{n}(\tilde{{\bf a}},\tilde{{\bf x}};\theta,{\bf v}) & :=m(a,x;\theta,{\bf v})-\mathbb{E}_{n}^{(1)}[\partial_{\omega}m(a,x;\tilde{\theta},{\bf v})]\mathbb{E}_{n}^{(1)}[\partial_{\omega}\varphi_{h}(\tilde{{\bf a}},\tilde{{\bf x}},\omega)]^{-1}\varphi_{h}(\tilde{{\bf a}},\tilde{{\bf x}},\omega)\\
 & \hfill-\mathbb{E}_{n}^{(1)}[\partial_{\xi}m(a,x;\tilde{\theta},{\bf v})]\mathbb{E}_{n}^{(1)}[\partial_{\xi}\varphi_{g}(\tilde{{\bf a}},\tilde{{\bf x}},\xi)]^{-1}\varphi_{g}(\tilde{{\bf a}},\tilde{{\bf x}},\xi).
\end{align*}
The use of cross-fitting is critical. If we had used the entire sample
to estimate all of $\theta^{*},\omega^{*}$ and $\xi^{*}$, we would
have $\mathbb{E}_{n}[\varphi_{h}(\tilde{{\bf a}},\tilde{{\bf x}},\hat{\omega})]=0$
and $\mathbb{E}_{n}[\varphi_{g}(\tilde{{\bf a}},\tilde{{\bf x}},\hat{\xi})]=0$,
which implies $\mathbb{E}_{n}\left[\zeta_{n}(\tilde{{\bf a}},\tilde{{\bf x}},\theta,\hat{\omega},\hat{\xi})\right]=\mathbb{E}_{n}\left[m(a,x,\theta,\hat{\omega},\hat{\xi})\right]$.
As noted by \citet{CEINR2018}, cross-fitting gets rid of the `own
observation bias' that is the source of the degeneracy here. The solution,
$\hat{\theta}$, of (\ref{eq:locally robust sample moment}) is the
locally robust pseduo-MLE estimator of $\theta^{*}$. 

Even though the estimator in (\ref{eq:locally robust sample moment})
is predicated on $h(x,a)$ and $g(x,a)$ being truly finite dimensional,
the work of \citet{CEINR2018} suggests that $\zeta_{n}(\tilde{{\bf a}},\tilde{{\bf x}};\theta,{\bf v})$
should have the same form as $\zeta_{n}(\tilde{{\bf a}},\tilde{{\bf x}};\theta,h,g)$
from (\ref{eq:feasible locally robust moment}). Suppose that the
same vector of basis functions is used for both $h$ and $g$, i.e.,
$\phi(a,x)\equiv r(a,x)$. Then, substituting the expressions for
$\varphi_{h}(\tilde{{\bf a}},\tilde{{\bf x}},\omega),\varphi_{g}(\tilde{{\bf a}},\tilde{{\bf x}},\xi)$,
we obtain (after some algebra)
\begin{align*}
\zeta_{n}(\tilde{{\bf a}},\tilde{{\bf x}};\theta,{\bf v}) & =m(a,x;\theta,{\bf v})-\hat{\lambda}(a,x;\theta)\left\{ z(a,x)\theta+\beta e(a^{\prime},x^{\prime};\hat{\eta})\right.\\
 & \hfill+\left.\beta V(a^{\prime},x^{\prime};\tilde{\theta},\hat{\omega},\hat{\xi})-V(a,x;\tilde{\theta},\hat{\omega},\hat{\xi})\right\} ,
\end{align*}
where \footnotesize
\begin{align*}
\hat{V}(a,x;\theta,\hat{\omega},\hat{\xi}) & :=\theta\phi(a,x)^{\intercal}\hat{\omega}+r(a,x)^{\intercal}\hat{\xi},\\
\hat{\lambda}(a,x;\tilde{\theta}) & :=-\phi(a,x)^{\intercal}\mathbb{E}_{n}^{(1)}\left[\left(\beta\phi(a^{\prime},x^{\prime})-\phi(a,x)\right)\phi(a,x)^{\intercal}\right]^{-1}\mathbb{E}_{n}^{(1)}\left[\phi(a,x)\psi(a,x;\tilde{\theta},\hat{h},\hat{g})\right].
\end{align*}
\normalsize Comparing with (\ref{eq:doubly robust moment - non parametric}),
we observe that $\hat{\lambda}(a,x;\theta)$ is simply the linear
semi-gradient estimator of $\lambda(a,x;\theta)$. The benefit of
(\ref{eq:locally robust sample moment}) over (\ref{eq:feasible locally robust moment})
is that we no longer need to estimate $\lambda(\cdot)$ separately.

\subsection{Locally robust estimators for dynamic games\label{sec:Locally-robust-estimators-dynamic-games}}

The locally robust estimator for dynamic games is similar to that
for single-agent models. To describe this, we recast the PMLE criterion
function in the form $Q(a,x;\theta,\{h_{i}\},\{g_{i}\})\linebreak=\sum_{i}\ln Q_{i}(a_{i},x;\theta,h_{i},g_{i}),$
where 
\begin{align*}
Q_{i}(a_{i},x;\theta,h_{i},g_{i}) & :=\ln\pi_{i}(a_{i},x;\theta,h_{i},g_{i});\\
\pi_{i}(a_{i},x;\theta,h_{i},g_{i}) & :=\frac{\exp\left\{ h_{i}(a_{i},x)\theta+g(a_{i},x)\right\} }{\sum_{a}\exp\left\{ h_{i}(a,x)\theta+g(a,x)\right\} }.
\end{align*}
Denote $m_{i}(a_{i},x;\theta,h_{i},g_{i}):=\partial_{\theta}Q_{i}(a_{i},x;\theta,h_{i},g_{i})$,
and 
\[
V_{i}(a_{i},x;\theta,h_{i},g_{i})=h_{i}(a_{i},x)\theta+g_{i}(a_{i},x).
\]
The locally robust moment for $\theta$ is 
\[
\zeta(\tilde{{\bf a}},\tilde{{\bf x}};\theta,\{h_{i}\},\{g_{i}\}):=\sum_{i}\zeta^{(i)}(\tilde{{\bf a}}_{i},\tilde{{\bf x}};\theta,h_{i},g_{i}),
\]
where 
\begin{align*}
\zeta^{(i)}(\tilde{{\bf a}}_{i},\tilde{{\bf x}};\theta,h_{i},g_{i}) & :=m_{i}(a_{i},x;\theta,h_{i},g_{i})-\lambda_{i}(a_{i},x;\theta)\left\{ z_{i}(a_{i},a_{-i},x)\theta+\beta e(a_{i}^{\prime},x^{\prime};\eta_{i})\right.\\
 & \hfill+\left.\beta V_{i}(a_{i}^{\prime},x^{\prime};\theta,h_{i},g_{i})-V_{i}(a_{i},x;\theta,h_{i},g_{i})\right\} .
\end{align*}
Here, $\lambda_{i}(a_{i},x;\theta)$ is the fixed point to 
\begin{align*}
\Gamma_{i}^{\dagger}[f](a_{i},x) & :=-\psi_{i}(a_{i},x;\theta,h_{i},g_{i})+\beta\mathbb{E}\left[f(a_{i}^{-\prime},x^{-\prime})\vert a_{i},x\right],
\end{align*}
where 
\[
\psi_{i}(a_{i},x;\theta,h_{i},g_{i})=-\sum_{\tilde{a}\in\mathcal{A}_{i}}\left\{ \mathbb{I}(a_{i}=\tilde{a})-\pi_{i}(\tilde{a},x;\theta,h_{i},g_{i})\right\} h_{i}(\tilde{a},x).
\]

For computation, we employ cross-fitting as in the single-agent setting
and randomly split the markets into two samples $\mathcal{N}_{1}$
and $\mathcal{N}_{2}$. We compute $\hat{h}_{i},\hat{g}_{i},\hat{\lambda}_{i}$
for all players using one of the samples, say $\mathcal{N}_{2}$.
Let $\tilde{\theta}$ denote the plug-in estimator of $\theta^{*}$
obtained using $\mathcal{N}_{2}$. Also, denote by $\mathbb{E}_{n}^{(1)}[\cdot]$
the empirical expectation as in (\ref{eq:game empirical expectation}),
but constructed only from observations in $\mathcal{N}_{1}$. The
feasible locally robust moment for $\theta$ is
\begin{equation}
\mathbb{E}_{n}^{(1)}\left[\sum_{i}\zeta_{n}^{(i)}\left(\tilde{{\bf a}}_{i},\tilde{{\bf x}};\theta,\hat{h}_{i},\hat{g}_{i}\right)\right]=0,\label{eq:locally robust sample moment-games}
\end{equation}
where 
\begin{align*}
\zeta^{(i)}(\tilde{{\bf a}}_{i},\tilde{{\bf x}};\theta,\hat{h}_{i},\hat{g}_{i}) & :=m_{i}(a_{i},x;\theta,\hat{h}_{i},\hat{g}_{i})-\hat{\lambda}_{i}(a_{i},x;\tilde{\theta})\left\{ z_{i}(a_{i},a_{-i},x)\tilde{\theta}+\beta e(a_{i}^{\prime},x^{\prime};\hat{\eta}_{i})\right.\\
 & \hfill+\left.\beta V_{i}(a_{i}^{\prime},x^{\prime};\tilde{\theta},\hat{h}_{i},\hat{g}_{i})-V_{i}(a_{i},x;\tilde{\theta},\hat{h}_{i},\hat{g}_{i})\right\} .
\end{align*}
Note that $\hat{\theta}$ can be computed player-by-player, via $\mathbb{E}_{n}^{(1)}\left[\zeta_{n}^{(i)}(\tilde{{\bf a}}_{i},\tilde{{\bf x}};\theta,\hat{h}_{i},\hat{g}_{i})\right]=0$,
if there were no common parameters $\theta$ across players, i.e if
we could partition $\theta\equiv(\theta_{1},\dots,\theta_{N})$. Alternatively,
if the players were symmetric, i.e, $z_{i}(\cdot)$ did not depend
on $i$, we can pool all the players and compute a common $\hat{h},\hat{g},\hat{\lambda}$.

The locally robust estimator (\ref{eq:locally robust sample moment-games})
has the same form as (\ref{eq:locally robust sample moment}), except
for there being separate correction terms for the estimates $\hat{h}_{i},\hat{g}_{i}$
of each player $i$. Its theoretical properties are thus equivalent
to, and can be derived in the same manner, as those for single-agent
models. 

\subsection{Incorporating permanent unobserved heterogeneity\label{sec:Incorporating-permanent-unobserv}}

We incorporate permanent unobserved heterogeneity into our models
by pairing the techniques from Section \ref{sec:Temporal-difference-learning-alg}
with the sequential Expectation-Maximization (EM) algorithm (\citealp{arcidiacono2003finite}).
The use of the sequential EM algorithm in CCP estimation under unobserved
heterogeneity was first advocated by \citet{ArcidiaconoMiller2011},
and we employ a similar approach. 

Suppose that in addition to the observed state $x$, and the choice-specific
shock $e$, individuals also base their choice decisions on a random
state variable $s$ which is known to the individual, but unobserved
to the econometrician. As is common in the literature, we assume a
finite set of unobserved states indexed by $\{1,2,\dots,k,...K\}$.
The number of states is also assumed to be known a priori. Let $\pi_{k}$
denote the population probability $P(s=k)$. The value of $s$ for
an individual is assumed to be permanent and not change with time.
However, we do not place any restrictions on the transition density
$K(x^{\prime}\vert a,x,s)$, which is allowed to change with $s$. 

To simplify the exposition, we only describe the basic version of
the algorithm without local robustness corrections as in Section \ref{subsec:Continuous-states-and}.
It is straightforward to incorporate the correction term into the
algorithm, but it comes at the expense of higher computational times. 

Suppose that the per-period utility is given by $z(a,x,s)\theta$.
For each $k$, define $h_{k}(a,x)$ and $g_{k}(a,x)$ as the solutions
to
\begin{align*}
h_{k}(a,x) & =z(a,x,k)+\beta\mathbb{E}\left[h_{k}(a^{\prime},x^{\prime})\vert a,x,s=k\right],\\
g_{k}(a,x) & =\beta\mathbb{E}\left[e(a^{\prime},x^{\prime})+g_{k}(a^{\prime},x^{\prime})\vert a,x,s=k\right].
\end{align*}
To simplify notation, let $h_{itk}:=h_{k}(a_{it},x_{it})$ and $g_{itk}:=g_{k}(a_{it},x_{it})$.
If these quantities were known, one can estimate $(\theta,\pi)$ by
maximizing the integrated pseudo-likelihood
\begin{equation}
Q(\theta,\pi)=\sum_{i=1}^{N}\log\left[\sum_{k=1}^{K}\pi_{k}\prod_{t=1}^{T-1}\frac{\exp\left\{ h_{itk}\theta+g_{itk}\right\} }{\sum_{a}\exp\left\{ h_{k}(a,x_{it})\theta+g_{k}(a,x_{it})\right\} }\right].\label{eq:EM-integrated lik}
\end{equation}
As before, we have chosen to make $h(\cdot)$ uni-dimensional to simplify
the notation. We emphasize that the methods suggested here could be
employed even if there was no heterogeneity in individual utilities,
but the transition density were heterogenous across individuals. This
is because even just the latter would result in heterogenous values
for $h(a,x)$, as it is a function of transition densities. 

To make (\ref{eq:EM-integrated lik}) usable, $h_{k}(a,x),g_{k}(a,x)$
would have to be estimated. Similar to the motivation of TD methods
in Section \ref{sec:Setup}, the heuristic we employ is to approximate
$h_{k}$ (at each $k$) by minimizing the mean-squared TD error
\begin{align*}
\textrm{TDE}(\omega) & =\bar{\mathbb{E}}\left[\mathbb{I}(s=k)\left\Vert z(a,x,s)+\beta h_{k}(a^{\prime},x^{\prime};\omega)-h_{k}(a,x;\omega)\right\Vert ^{2}\right]\\
 & =\mathbb{E}\left[P(s=k\vert{\bf a},{\bf x})\left\Vert z(a,x,k)+\beta h_{k}(a^{\prime},x^{\prime};\omega)-h_{k}(a,x;\omega)\right\Vert ^{2}\right],
\end{align*}
where $\bar{\mathbb{E}}[\cdot]$ differs from $\mathbb{E}[\cdot]$
in also taking the expectation over the distribution of the unobserved
state $s$, and
\[
P(s=k\vert{\bf a},{\bf x}):=Pr(s=k\vert a_{1},x_{1},\dots,a_{T},x_{T}).
\]
If $P(s=k\vert{\bf a},{\bf x})$ were known, we could use them as
weights in the semi-gradient and AVI approaches. In particular, for
linear semi-gradient methods, the estimates for $h_{k},g_{k}$ would
be $\hat{h}_{k}(a,x)=\phi(a,x)^{\intercal}\hat{\omega}_{k}$ and $\hat{g}_{k}(a,x)=r(a,x)^{\intercal}\hat{\xi}_{k}$,
where 
\begin{align}
\hat{\omega}_{k} & =\left[\sum_{i=1}^{n}\sum_{t=1}^{T-1}p_{ik}\phi_{it}\left(\phi_{it}-\beta\phi_{it+1}\right)^{\intercal}\right]^{-1}\sum_{i=1}^{n}\sum_{t=1}^{T-1}p_{ik}\phi_{it}z_{itk},\label{eq: infeasible Value function estimate-1}\\
\hat{\xi}_{k} & =\left[\sum_{i=1}^{n}\sum_{t=1}^{T-1}p_{ik}r_{it}\left(r_{it}-\beta r_{it+1}\right)^{\intercal}\right]^{-1}\sum_{i=1}^{n}\sum_{t=1}^{T-1}\beta p_{ik}r_{it}\dot{e}_{it+1k},\nonumber 
\end{align}
and we have used the notation $p_{ik}=P(s=k\vert{\bf a}_{i},{\bf x}_{i})$,
$z_{itk}:=z(a_{it},x_{it},k)$, $\phi_{it}:=\phi(a_{it},x_{it})$,
$r_{it}:=r(a_{it},x_{it})$, and $\dot{e}_{it+1k}$ is the current
estimate (described below) of $e_{it+1k}:=\gamma-\ln P(a_{it}\vert x_{it},s_{i}=k)$.
Similarly, for AVI, we could obtain iterative approximations $\{\hat{h}_{k}^{(j)},j=1,\dots,J\}$
for $h_{k}(\cdot)$ using 
\begin{equation}
\hat{h}_{k}^{(j+1)}=\argmin_{f\in\mathcal{F}}\sum_{i=1}^{n}\sum_{t=1}^{T-1}p_{ik}\left\Vert z_{ik}+\beta\hat{h}_{k}^{(j)}(a_{it+1},x_{it+1})-f(a_{it},x_{it})\right\Vert ^{2}.\label{eq: infeasible Value function estimate-2}
\end{equation}
The above involves solving a sequence of weighted non-parametric estimation
problems, which most machine learning methods can easily handle. The
estimates $\hat{g}_{k}^{(j)}(\cdot)$ of $g_{k}(\cdot)$ can be obtained
in a similar fashion.

Estimation of $\theta^{*},h_{k},g_{k}$ using (\ref{eq:EM-integrated lik})
and (\ref{eq: infeasible Value function estimate-1}) or (\ref{eq: infeasible Value function estimate-2})
requires knowledge of the unknown quantities $\pi_{k}$ and $p_{ik}$
along with $\dot{e}_{it+1k}$. Furthermore, even if $\pi_{k}$ were
known, maximizing the integrated likelihood function (\ref{eq:EM-integrated lik})
is computationally very expensive. The sequential EM algorithm of
\citet{arcidiacono2003finite} solves both issues.\footnote{Maximizing (\ref{eq:EM-integrated lik}) is not equivalent to Full
Information Maximum Likelihood (FIML). As in \citet{arcidiacono2003finite},
the identification and asymptotic properties of $\theta,\pi$ are
in fact determined by constructing moment conditions that correspond
to the first order conditions from maximizing $Q(\theta,\pi)$, augmented
with additional moments identifying $h_{k},g_{k}$. Together, these
moment conditions, which motivate the sequential EM algorithm, can
in turn be related to the identification properties of FIML.} To describe the procedure, let 
\[
l_{it}(\theta,h_{k},g_{k}):=\frac{\exp\left\{ h_{k}(a_{it},x_{it})\theta+g_{k}(a_{it},x_{it})\right\} }{\sum_{a}\exp\left\{ h_{k}(a,x_{it})\theta+g_{k}(a,x_{it})\right\} }.
\]
Denote by $\hat{\pi}_{k}$ and $\hat{p}_{ik}$ the estimates for $\pi_{k}$
and $p_{ik}$. The algorithm consists of two steps: the M-step and
the E-step. We first describe the M-step. Here, we update the estimates
for $h_{k},g_{k}$ and $\theta^{*}$ based on the current estimates
for $\pi_{k},p_{ik}$ and $e_{it+1k}$. To this end, note that we
can estimate $h_{k},g_{k}$ using either (\ref{eq: infeasible Value function estimate-1})
or (\ref{eq: infeasible Value function estimate-2}). From these we
can in-turn update $\hat{\theta}$ as 
\begin{equation}
\hat{\theta}=\arg\max_{\theta}\left[\sum_{i=1}^{n}\sum_{t=1}^{T-1}\sum_{k}p_{ik}\ln l_{it}\left(\theta,\hat{h}_{k},\hat{g}_{k}\right)\right].\label{eq:M-step-theta}
\end{equation}
Next, given $(\hat{\theta},\hat{h}_{k},\hat{g}_{k})$, we update $\hat{\pi}_{k},\hat{p}_{ik}$
and $\dot{e}_{it+1k}$ for all $i,k$. This is the E-step of the EM
algorithm. This step consists of three parts. In the first part, we
use the current $\hat{\theta},\hat{h}_{k},\hat{g}_{k}$ and $\hat{\pi}_{k}$
to update $\hat{p}_{ik}$ for each $i,k$ using Bayes' rule:
\begin{equation}
\hat{p}_{ik}\longleftarrow\frac{\hat{\pi}_{k}\prod_{t=1}^{T-1}l_{it}(\hat{\theta},\hat{h}_{k},\hat{g}_{k})}{\sum_{\tilde{k}}\hat{\pi}_{\tilde{k}}\prod_{t=1}^{T-1}l_{it}(\hat{\theta},\hat{h}_{\tilde{k}},\hat{g}_{\tilde{k}})}.\label{eq:Expectation step}
\end{equation}
In the second part, we update $\hat{\pi}_{k}$, for each $k$, as
\begin{equation}
\hat{\pi}_{k}\longleftarrow\frac{1}{N}\sum_{i=1}^{N}\hat{p}_{ik}.\label{eq:Expectation step - 2}
\end{equation}
Finally, we also update $\dot{e}_{it+1k}$ for all $i,t,k$ as 
\begin{equation}
\dot{e}_{it+1k}\longleftarrow\gamma-\ln l_{it+1k}(\hat{\theta},\hat{h}_{k},\hat{g}_{k}).\label{eq:error update - EM}
\end{equation}
The E and M steps are iterated until convergence. 

It is also possible to extend our methods to allow for Markovian unobserved
heterogeneity, by employing a variant of the classical Baum-Welch
algorithm. The computational and statistical details of such a procedure
are however more involved and will be described elsewhere. 

\subsection{Additional simulation results\label{sec:Appendix:Additional-simulations}}

\subsubsection{Firm entry problem - finite sample distributions}

Figures \ref{fig:distributions _ LSG_non_locally_robust} and \ref{fig:distributions _ LSG_locally_robust}
plot the finite sample distribution of the estimates for linear semi-gradients
under non-locally robust and locally robust methods. The distributions
are very close to normal even without a locally robust correction.

\begin{figure}[t]
\begin{centering}
\includegraphics[width=12cm]{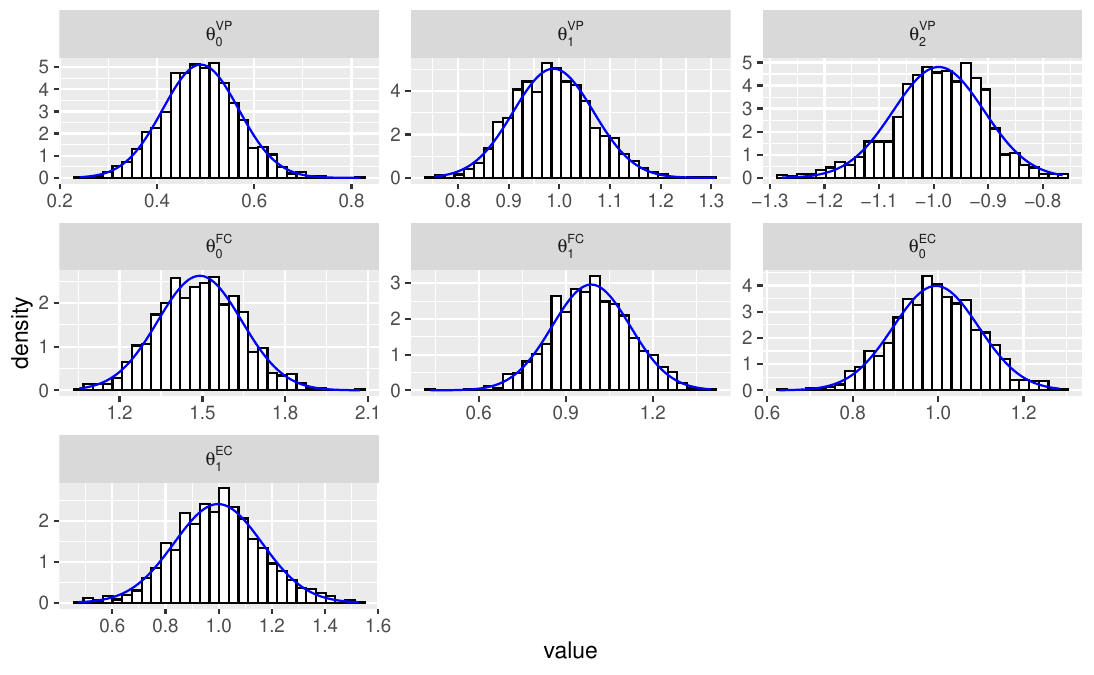}
\par\end{centering}
\begin{raggedright}
{\scriptsize{}Note: Histograms denote the finite sample distribution,
and blue line is normal density}{\scriptsize\par}
\par\end{raggedright}
\caption{Finite sample distributions under linear semi-gradients without locally
robust corrections\label{fig:distributions _ LSG_non_locally_robust}.}
\end{figure}

\begin{figure}[t]
\begin{centering}
\includegraphics[width=12cm]{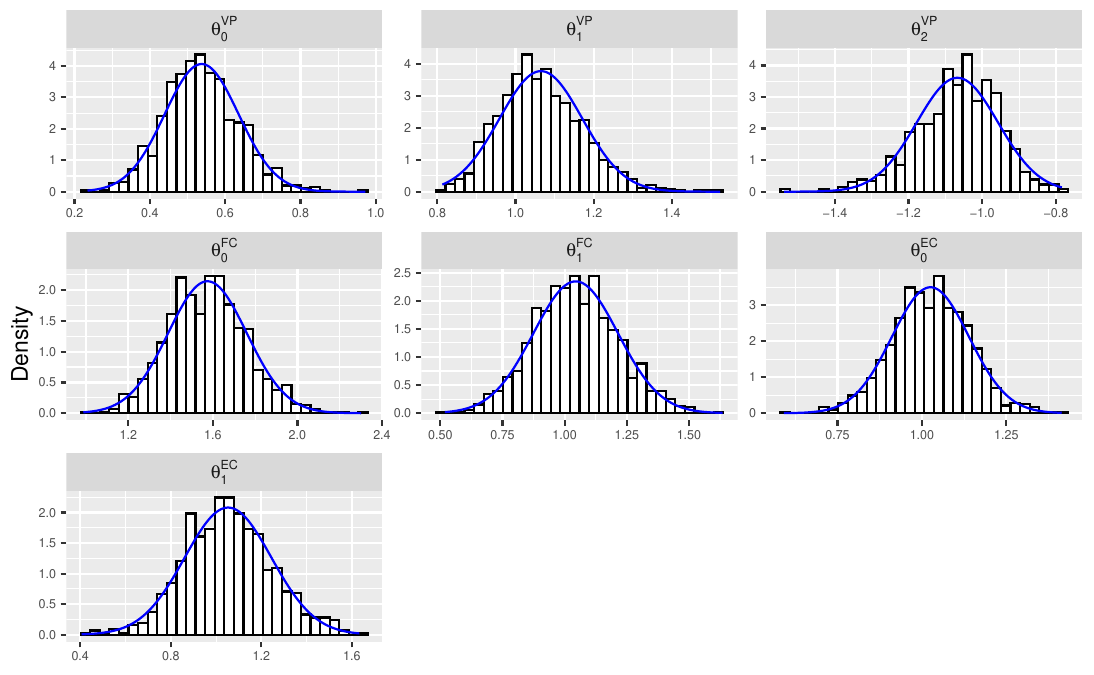}
\par\end{centering}
\begin{raggedright}
{\scriptsize{}Note: Histograms denote the finite sample distribution,
and blue line is normal density}{\scriptsize\par}
\par\end{raggedright}
\caption{Finite sample distributions under linear semi-gradients with locally
robust corrections\label{fig:distributions _ LSG_locally_robust}.}
\end{figure}

Figures \ref{fig:distributions _AVI_non_locally_robust} and \ref{fig:distributions _ AVI_locally_robust}
present similar results for AVI.

\begin{figure}[t]
\begin{centering}
\includegraphics[width=12cm]{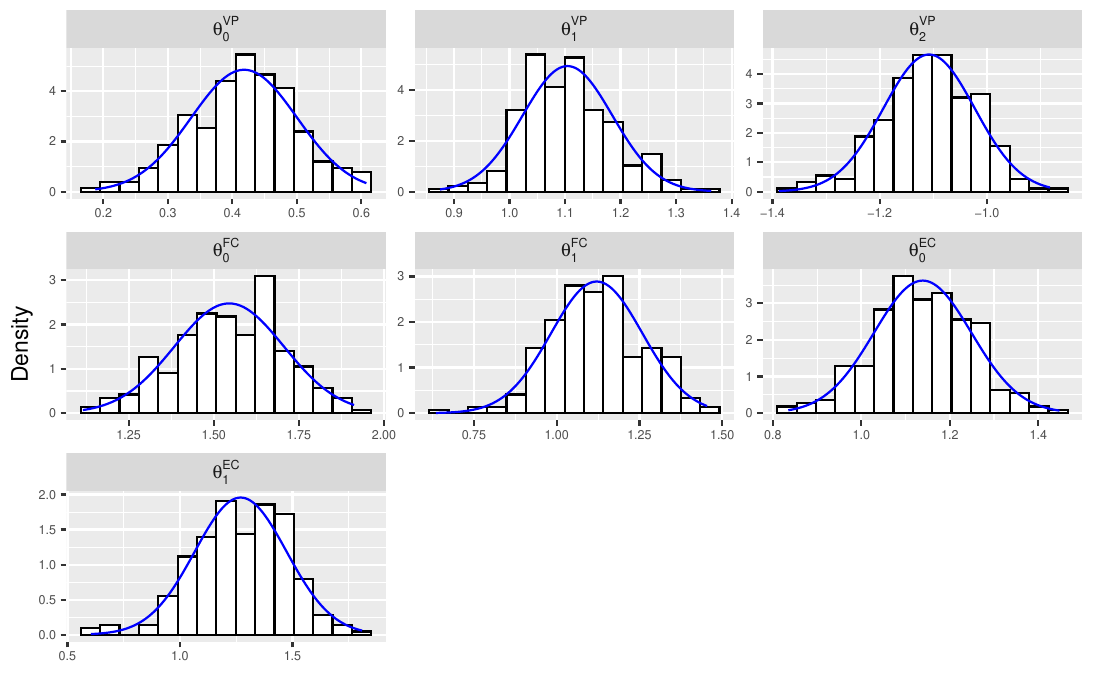}
\par\end{centering}
\begin{raggedright}
{\scriptsize{}Note: Histograms denote the finite sample distribution,
and blue line is normal density}{\scriptsize\par}
\par\end{raggedright}
\caption{Finite sample distributions under AVI without locally robust corrections\label{fig:distributions _AVI_non_locally_robust}.}
\end{figure}

\begin{figure}[t]
\begin{centering}
\includegraphics[width=12cm]{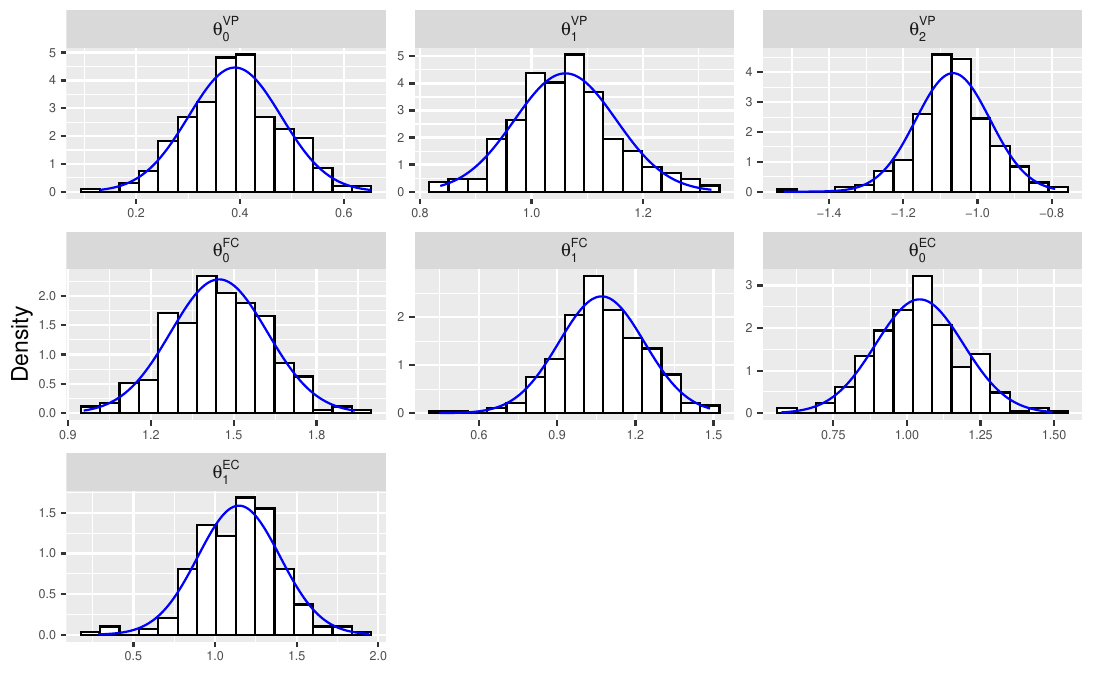}
\par\end{centering}
\begin{raggedright}
{\scriptsize{}Note: Histograms denote the finite sample distribution,
and blue line is normal density}{\scriptsize\par}
\par\end{raggedright}
\caption{Finite sample distributions under AVI with locally robust corrections\label{fig:distributions _ AVI_locally_robust}.}
\end{figure}

\subsubsection{Bus engine replacement problem\label{subsec:Bus-Engine-Replacement}}

$\setcounter{table}{0}
\global\long\def\thetable{B.\arabic{table}}%
$

Consider the following version of the \citet{Rust1987} bus engine
replacement problem which is adapted from \citet{ArcidiaconoMiller2011}.
Each period $t=1,...,T;T<\infty$, Harold Zurcher decides whether
to replace the engine of a bus ($a_{t}=0$), or keep it ($a_{t}=1$).
Denote his action by $j\in\{0,1\}$. Each bus is characterized by
a permanent type $s\in\{1,2\}$, and the mileage accumulated since
the last engine replacement $x_{t}\in\{1,2,...\}$. Harold Zurcher
observes both $s$ and $x_{t}$. The econometrician observes mileage
$x_{t}$, and we make different assumptions on the observability of
bus type $s$.

Mileage increases by one unit if the engine is kept in period $t$
and is set to zero if the engine is replaced. The current-period payoff
for keeping the engine is given by $\theta_{0}+\theta_{1}x_{t}+\theta_{2}s+e_{1t}$,
where $\theta^{*}\equiv\{\theta_{0},\theta_{1},\theta_{2}\}$ are
the structural parameters of interest, and $e_{jt}$ is a choice-specific
transitory shock that follows a Type 1 Extreme Value distribution.
The current-period payoff of replacing the engine is normalized to
$e_{0t}$, and the discount factor is set to 0.9 in this application. 

To carry out the simulations, we choose values for the structural
parameters ($\theta_{0}=2,\theta_{1}=-0.15,\theta_{2}=1$), and recursively
derive the value functions for each possible combination of $x$,
$s$ and $t$. We then use these to compute the conditional replacement
probabilities for the same set of combinations of variables. 

We provide results using the linear semi-gradient method to approximate
$h(a,x)$ and $g(a,x)$. Our first set of results treats the bus type
$s$ as known to the econometrician. We then provide findings for
a setting with permanent unobserved heterogeneity. All results are
based on $1000$ simulations with $1000$ buses and $T=30$ time periods
each. Each round of the simulations begins by generating data for
$1000$ buses and $2000$ time periods. The mileage of each bus is
set to zero in $t=0$. We then simulate the choices $a_{t}$ using
the conditional replacement probabilities. Finally, we restrict the
generated data to $30$ time periods between $t=1000$ and $t=1030$.
This is to ensure that our data is close to being drawn from a stationary
model. In all simulations, we parameterize $h(a,x)$ and $g(a,x)$
using a third order polynomial in $s$, $x_{t}$ and $a_{t}$.\footnote{In total, there are $k_{\phi}=k_{r}=16$ terms. These include a constant,
the binary variables $s$ and $a_{t}$, all $x_{t}$ terms up to a
third order, and pairwise and triple interactions between the $x_{t}$
terms and the binary variables.} The choice probabilities $\eta$ are estimated using a logit model
that is a function of the state variables $s$ and $x_{t}$, where
the same polynomial is used as before.\footnote{We run our simulations on a MacBook Pro with an M1 chip and 16 GB
of RAM. The approximate computation times for one estimation round
is 4 seconds without locally robust correction, and 14 seconds with
locally robust correction.} 

Table \ref{tab:table 1-1} shows the results treating bus type $s$
as known, with and without locally robust correction. It can be seen
that our estimator produces parameter estimates that are closely centered
around the true values. These results are comparable to those found
by \citet{ArcidiaconoMiller2011} in a similar version of the bus
engine replacement problem. However, in contrast to their CCP method,
our estimator does not exploit a finite dependence property. When
comparing the results from our locally robust estimator in column
(4) to the results from the suboptimal estimator in column $(2)$,
it can be seen that the absolute bias is smaller for all three parameter
estimates. However the variance of the locally robust estimator is
higher due to the sample splitting employed in the locally robust
procedure (we used two-fold cross-fitting). Overall, while in theory
the locally robust estimator is preferable to the non-robust version,
we find that in practice there is very little difference between the
two versions of the algorithm. 

\begin{table}

\caption{Simulations: Bus engine replacement problem}
\label{tab:table 1-1}

\begin{tabular}{cccccccc}
\hline 
 &  &  & \multicolumn{1}{c}{} &  &  & \multicolumn{1}{c}{} & \tabularnewline
 &  &  & \multicolumn{2}{c}{not locally robust} &  & \multicolumn{2}{c}{locally robust}\tabularnewline
\cline{4-5} \cline{5-5} \cline{7-8} \cline{8-8} 
 &  &  &  &  &  &  & \tabularnewline
 & DGP &  & TDL & MSE &  & TDL & MSE\tabularnewline
 & (1) &  & (2) & (3) &  & (4) & (5)\tabularnewline
\hline 
\emph{Linear semi-gradient} &  &  &  &  &  &  & \tabularnewline
 &  &  &  &  &  &  & \tabularnewline
$\theta_{0}$ (intercept) & 2.0 &  & 1.9788 & 0.0080 &  & 1.9778 & 0.0081\tabularnewline
 &  &  & (0.0868) &  &  & (0.0870) & \tabularnewline
$\theta_{1}$ (mileage) & -0.15 &  & -0.1492 & 1.2e-05 &  & -0.1489 & 1.3e-05\tabularnewline
 &  &  & (0.0033) &  &  & (0.0034) & \tabularnewline
$\theta_{2}$ (bus type) & 1.0 &  & 1.0044 & 0.0034 &  & 1.0032 & 0.0034\tabularnewline
 &  &  & (0.0583) &  &  & (0.0584) & \tabularnewline
\hline 
\end{tabular}

\footnotesize

\noindent\begin{minipage}[t]{1\columnwidth}%
Notes: The table reports results for 1000 simulations. Column (1)
shows the true parameter values in the model. Columns (2) and (4)
report the empirical mean and standard deviations for the estimated
parameters. Columns (2)-(3) are based on the estimation method without
correction function, columns (4)-(5) report results for the locally
robust estimator. The mean squared error are reported in columns (3)
and (5), respectively.%
\end{minipage}
\end{table}

In a final set of simulations, we introduce permanent unobserved heterogeneity
into our setting by assuming that the permanent bus type $s\in\{1,2\}$
is unknown to the researcher. To generate results for these simulations,
we follow the steps outlined in Section \ref{sec:Incorporating-permanent-unobserv}
where we pair our techniques with a sequential EM algorithm \citep{arcidiacono2003finite},
without using a locally robust correction.\footnote{We run our simulations on a MacBook Pro with an M1 chip and 16 GB
of RAM. The approximate computation times is 305 seconds for one estimation
round.} The results are shown in Table \ref{tab: table 2-1}. Once again,
our algorithm produces parameter estimates that are closely centered
around the true values. Compared to the results without permanent
unobserved heterogeneity, the standard deviation of our estimates
is slightly higher due to the uncertainty around the bus type $s$. 

\begin{table}
\caption{Simulations: Bus engine replacement problem with unobserved heterogeneity}

\label{tab: table 2-1}

\begin{tabular}{ccccc}
\hline 
 &  &  &  & \tabularnewline
 & DGP &  & TDL & MSE\tabularnewline
 & (1) &  & (2) & (3)\tabularnewline
\hline 
\emph{Linear semi-gradient} &  &  &  & \tabularnewline
 &  &  &  & \tabularnewline
$\theta_{0}$ (intercept) & 2.0 &  & 1.9750 & 0.0164\tabularnewline
 &  &  & (0.1255) & \tabularnewline
$\theta_{1}$ (mileage) & -0.15 &  & -0.1492 & 1.5e-05\tabularnewline
 &  &  & (0.0039) & \tabularnewline
$\theta_{2}$ (bus type) & 1.0 &  & 1.0035 & 0.0104\tabularnewline
 &  &  & (0.1018) & \tabularnewline
\hline 
\end{tabular}

\footnotesize

\noindent\begin{minipage}[t]{1\columnwidth}%
Notes: The table reports results for 1000 simulations. Column (1)
shows the true parameter values in the model. Column (2) reports the
empirical mean and standard deviations for the estimated parameters.
The mean squared errors are reported in column (3). The results are
based on the estimation method without correction function.%
\end{minipage}
\end{table}

\end{document}